\documentclass[english]{article}
\usepackage[T1]{fontenc}
\usepackage[latin9]{inputenc}
\usepackage{geometry}
\geometry{verbose,tmargin=2cm,bmargin=2.5cm,lmargin=2.5cm,rmargin=2.5cm}
\setlength{\parindent}{0bp}
\usepackage{color}
\usepackage{babel}
\usepackage{float}
\usepackage{amsmath}
\usepackage{amsthm}
\usepackage{amssymb}
\usepackage{graphicx} 
\usepackage{algorithm}
\usepackage{algpseudocode}
\usepackage{comment}

\usepackage[round]{natbib}
\usepackage{enumitem} 

\usepackage[unicode=true,pdfusetitle,
 bookmarks=true,bookmarksnumbered=false,bookmarksopen=false,
 breaklinks=false,pdfborder={0 0 0},pdfborderstyle={},backref=false,colorlinks=true]
 {hyperref}
\hypersetup{
 citecolor=blue, linkcolor=red}

\makeatletter

\newcommand{\noun}[1]{\textsc{#1}}
\providecommand{\tabularnewline}{\\}

\numberwithin{equation}{section}
\theoremstyle{definition}
\newtheorem{defn}{\protect\definitionname}[section]
\theoremstyle{remark}
\newtheorem{rem}{\protect\remarkname}[section]
\theoremstyle{plain}
\newtheorem{prop}{Proposition}[section]
\theoremstyle{remark}

\theoremstyle{plain}
\newtheorem{thm}{Theorem}[section]
\theoremstyle{plain}
\newtheorem{cor}{\protect\corollaryname}[section]
\theoremstyle{plain}
\newtheorem{lem}{\protect\lemmaname}[section]

\newcommand*{\QEDA}{\hfill\ensuremath{\square}}

\makeatother

\providecommand{\corollaryname}{Corollary}
\providecommand{\definitionname}{Definition}
\providecommand{\lemmaname}{Lemma}
\providecommand{\notationname}{Notation}

\providecommand{\remarkname}{Remark}

\newcommand{\E}{\mathbb{E}}
\newcommand{\EQ}{\mathbb{E}^{\mathbb Q}}

\usepackage{todonotes}

\usepackage{mathtools}
\usepackage{authblk}

\title{The Volterra Stein-Stein model with stochastic interest rates}
\author[1]{Eduardo Abi Jaber\thanks{\emph{eduardo.abi-jaber@polytechnique.edu.}}}
\author[2]{Donatien Hainaut\thanks{\emph{donatien.hainaut@uclouvain.be}}}
\author[2]{Edouard Motte\thanks{\emph{Corresponding author, edouard.motte@uclouvain.be}}}
\affil[1]{Ecole Polytechnique, CMAP}
\affil[2]{Universit\'e Catholique de Louvain, LIDAM-ISBA,
Louvain-la-Neuve, Belgium}

\begin{document}
\maketitle

\begin{abstract}
We introduce the Volterra Stein-Stein model with stochastic interest rates, where both volatility and interest rates are driven by correlated Gaussian Volterra processes. This framework unifies various well-known Markovian and non-Markovian models while preserving analytical tractability for pricing and hedging financial derivatives. We derive explicit formulas for pricing zero-coupon bond and interest rate cap or floor, along with a semi-explicit expression for the characteristic function of the log-forward index using Fredholm resolvents and determinants. This allows for fast and efficient derivative pricing and calibration via Fourier methods. We calibrate our model to market data and observe that our framework is flexible enough to capture key empirical features, such as the humped-shaped term structure of ATM implied volatilities for cap options and the concave ATM implied volatility skew term structure (in log-log scale) of the S\&P 500 options. Finally, we establish connections between our characteristic function formula and expressions that depend on infinite-dimensional Riccati equations, thereby making the link with conventional linear-quadratic models. \end{abstract}
\noun{Keywords:} \emph{Gaussian Volterra processes, 
volatility, interest rate, memory, Fredholm resolvents and determinants, Fourier pricing, Riccati equations.}

\section{Introduction}

Modeling the joint dynamics of interest rates and equity is crucial for pricing and hedging hybrid derivatives. These products, which include equity-linked interest rate derivatives, callable hybrid structures, and quanto options, depend on the interaction between stock prices, interest rates, and their volatilities. A realistic model must  capture the well-documented stylized facts of each underlying as well as accurately describe their dependency structure to ensure consistent pricing and risk management across asset classes.\\

Empirical evidence reveals persistent memory effects in historical time series of interest rates and asset volatility and slow decay of their autocorrelations structures, see \cite*{cont2001empirical, key-10, key-19}.\\ 

Beyond historical data, robust and universal patterns emerge in the term structures of option prices across a broad range of maturities. Specifically, 
\begin{enumerate}[label=(\roman*)]
    \item \label{item1intro} for interest rates: the term structure of the implied  volatilities of  cap and floor options is typically hump-shaped, with a steep increase at short maturities followed by a more gradual decay at longer maturities, see for instance \cite*{key-10, de2004information};
    \item for stock indices such as the S\&P 500: the term structure of the At-The-Money (ATM) skew of the  implied volatility of call and put options, when plotted in log-log scale, exhibits typically a concave shape. In addition, the ATM skew displays an approximately linear decrease at long maturities suggesting a power-law decay, though only for sufficiently large maturities, see \cite*{abi2024volatility, key-8, bergomi2015stochastic, delemotte2023yet, guyon2022does}.
\end{enumerate}

Various approaches have been developed to jointly model equity and interest rate dynamics, often by extending stochastic volatility frameworks to include stochastic interest rates. For instance, \cite*{key-23, boyarchenko2007american, boyarchenko2013american, grzelak2011affine} incorporated stochastic rates into the \cite{heston1993closed} model, while \cite{key-25,key-26} adapted the \cite{key-24} and \cite{key-22} volatility models with multi-factor \cite{hull1993one} interest rates. However, these frameworks rely on Markovian dynamics with a single characteristic time scale for at least one of the two risk factors, and fail to capture the complex shapes of term structures observed on the market: they lead either to term structures of cap/floor implied volatilities that decrease monotonically - at odds with the empirically observed hump-shaped patterns -  or ATM skews of stock indices implied volatility that decay too quickly for longer maturities. While such models offer analytical tractability, their inability to align with empirical data highlights the need for alternative approaches that incorporate more flexibility.\\

 To address these limitations, we propose a flexible yet analytically tractable model based on Volterra processes, capable of accurately reproducing these empirical term structure features while remaining computationally efficient for option pricing and model calibration. Our motivation for incorporating Volterra processes stems from their proven ability to capture key stylized facts of volatility    (\cite*{abi2024volatility, key-8,key-14, guyon2023volatility})  and interest rates (\cite*{key-5, benth2019non, corcuera2013short,key-16}). \\

We propose a hybrid equity-rate modeling framework where both volatility and interest rates are driven by (possibly correlated) Gaussian Volterra processes, unifying a broad class of Markovian and non-Markovian models. Building on the Volterra Stein-Stein model with constant interest rates studied by  \cite{key-4}, we extend the framework to incorporate stochastic interest rates, capturing key market features more effectively. Our main contributions can be summarized as follows: 
\begin{itemize}
    \item \textbf{Mathematical tractability and pricing:} Despite the non-Markovian nature of the processes, we derive explicit pricing formulas for zero-coupon bonds   and call and put options on zero-coupon bonds, see Propositions~\ref{prop:zero-coupon-price} and \ref{prop:pricing_ZC_options}. We  obtain a semi-explicit expression for the characteristic function of the log-forward index in terms of  \cite{fredholm1903classe} resolvents and determinants, enabling Fourier-based pricing methods, see Theorem~\ref{thm:chf_general}.  This result extends the formula derived by  \cite{key-4} for constant interest rates. 
    
   \item \textbf{Flexibility and joint calibration:} We calibrate our model  to market data and achieve excellent fits for: (i) the humped-shaped term structure of ATM implied volatilities for cap options by incorporating  in interest rates mean reversion as well as long-range memory with a fractional kernel, see Figure~\ref{fig:USD-cap-implied}, and (ii) the concave ATM implied volatility skew term structure (in a log-log plot) of S\&P 500 options using a shifted fractional kernel, see Figure~\ref{fig:S=000026P500-ATM-skew}. On the selected calibration date, our estimated parameters yield a negative implied correlation between short rates and the index, aligning with historical observations. Furthermore, we compare the impact of a singular power-law kernel for volatility (rough volatility) and demonstrate that rough models underperform our non-rough counterparts in capturing the entire volatility surface, confirming within our framework the findings of \cite*{abi2024volatility, delemotte2023yet, guyon2023volatility}.

\item \textbf{Link with conventional linear-quadratic models:}     We establish connections between our characteristic function  formula and expressions
that depend on infinite-dimensional  Riccati equations, see  Proposition~\ref{prop:-chf_riccati_1} for general kernels and Proposition~\ref{prop:riccati_ODE_last} for completely monotone convolution kernels. The latter formula establishes the link with conventional linear-quadratic models (Proposition~\ref{prop:chf_multi_factors_ODE}), allows us to recover  closed-form solutions in specific cases such as in \cite*{key-25} (Corollary~\ref{corr:chf_stein-stein}), and leads to another numerical approximation method based on  multi-factor approximations in the spirit of \cite{key-1}, see Section~\ref{subsec:Multi-factor-approximation}.
\end{itemize}

The paper is outlined as follows. In Section \ref{sec:Mathematical-financial-framework}, we  introduce the Volterra Stein-Stein model with stochastic rates, derive pricing formulas for zero-coupon bonds, and analyze the forward index dynamics. Section \ref{sec:The-characteristic-function} provides a semi-explicit expression for the characteristic function of the log-index. In Section \ref{sec:Numerical-illustration-in}, we validate the Gaussian Volterra framework by calibrating the interest rate and volatility models to market data. Finally, Section \ref{sec:Link-with-conventional_models} establishes connections with conventional linear-quadratic models, derives simplified forms for Markovian volatility models, and introduces a multi-factor approximation for the characteristic function for  completely monotone kernels.

\section{The Volterra Stein-Stein and Hull-White model\protect\label{sec:Mathematical-financial-framework}}

We fix a finite horizon $T>0$ and a filtered probability space $(\Omega,\mathcal{F},(\mathcal{F}_{t})_{0\leq t\leq T},\mathbb{Q})$
where $\mathbb{Q}$ stands for one risk-neutral probability measure
and the filtration $(\mathcal{F}_{t})_{0\leq t\leq T}$ satisfies the usual conditions. We consider a 
financial market with a financial index denoted by $(I_{t})_{0\leq t\leq T}$
that depends on both stochastic interest rate $(r_{t})_{0\leq t\leq T}$
and volatility $(\nu_{t})_{0\leq t\leq T}$ with dynamic
\[
dI_{t}=r_{t}\:I_{t}\:dt+\nu_{t}I_{t}\:dW_{I}^{\mathbb{Q}}(t).
\]


In this paper, the full triplet $(I,r,\nu)$ refers to a Volterra Stein-Stein model with stochastic interest rates since, to incorporate memory effects, the two risk factors $(r,\nu)$ are modeled by Gaussian Volterra processes:
\begin{align}
r_{t} & =r_{0}(t)+\int_{0}^{t}G_{r}(t,s)\kappa_{r}\:r_{s}ds+\int_{0}^{t}G_{r}(t,s)\:\eta_{r}\:dW_{r}^{\mathbb{Q}}(s),\label{eq:IR_HW_multi_factor}\\
\nu_{t} & =g_{0}(t)+\int_{0}^{t}G_{\nu}(t,s)\kappa_{\nu}\:\nu_{s}ds+\int_{0}^{t}G_{\nu}(t,s)\:\eta_{\nu}\:dW_{\nu}^{\mathbb{Q}}(s),\label{eq:vol_risk-neutral_measure}
\end{align}
where $(W_{I}^{\mathbb{Q}},W_{r}^{\mathbb{Q}},W_{\nu}^{\mathbb{Q}})$
are correlated Brownian motions such that 
\begin{align*}
d\langle W_{l}^{\mathbb{Q}},W_{k}^{\mathbb{Q}}\rangle_{t}=\rho_{lk}\:dt,    
\end{align*}
with $\rho_{lk}\in[-1,1],\:l,k\in\{I,r,\nu\}$ such that the correlation matrix is semi-definite positive, $\kappa_r,\kappa_{\nu} \in \mathbb R$, $\eta_r,\eta_{\nu} \in \mathbb R$,  and the kernels $G_{r}(.)$ and $G_{\nu}(.)$
satisfy the following definition.
\begin{defn}
\label{def:L2_kernel-1}A kernel
$G:[0,T]^2\to\mathbb{R}$ is a Volterra kernel of continuous and
bounded type in $L^{2}$ if $G(t,s)=0$ for $s\geq t$ and 
\[
\sup_{t\leq T}\int_{0}^{T}\left(|G(t,s)|^{2} +|G(s,t)|^2 \right)ds<+\infty, \quad \lim_{h\to0}\int_{0}^{T}|G(u+h,s)-G(u,s)|^{2}ds=0,\quad u\leq T.
\]
\end{defn}
Finally, $r_{0}$ is a time-dependent 
function used to perfectly fit
the initial term-structure of market bond prices, see Remark~\ref{R:stripr0}, and $g_{0}$ is such that 
\begin{equation}
g_{0}(t)=\nu_{0}+\theta_{\nu}\int_{0}^{t}G_{\nu}(t,s)ds,\quad \nu_{0}\in\mathbb{R}_{+},\:\theta_{\nu}\in\mathbb{R}.\label{eq:g_0_t}
\end{equation}

Under Definition~\ref{def:L2_kernel-1} and for $r_0,g_0$ locally square integrable,  it can be shown there exists a strong solution to
\eqref{eq:IR_HW_multi_factor} and \eqref{eq:vol_risk-neutral_measure},
such that 
\[
\sup_{0\leq t\leq T}\E^{\mathbb{Q}}\left[|r_{t}|^{p}+|\nu_{t}|^{p}\right]<+\infty,\quad p\geq1,
\]
this follows from an adaption of \cite[Theorem A.3]{key-4}. In particular, the joint process $(r,\nu)$ is a Gaussian process.

This class of models encompasses well-known Markov
and non-Markovian Volterra processes. For an introduction to Volterra processes,
we refer among others to  \cite[Chapter 9]{key-16}. Several kernels satisfy Definition \ref{def:L2_kernel-1} such as: 
\begin{itemize}
\item \textbf{the constant kernel:} $G(t,s)=1_{s<t}$, then we obtain Markovian models such as the
classical  \cite{key-24} or  \cite{key-22}
model for volatility and the \cite{hull1993one} model for short-term interest
rate, {such models have been studied by \cite*{key-25, grzelak2012extension},}
\item \textbf{the exponential kernel:} $G(t,s)=1_{s<t}\exp(-\beta(t-s))$ with $\beta\in\mathbb{R}_{+}$,
then we  also obtain Markovian models but with modified mean-reversion
level and reversion speed,
\item \textbf{the fractional kernel:} $G(t,s)=1_{s<t}(t-s)^{H-1/2}$ with a Hurst index $H \in (0,1)$ then we
obtain non-semimartingale and non-Markovian long or short memory fractional processes whenever $H \neq 1/2$. For $H<1/2,$ the kernel is singular at $t \to s$,
we get rough models as for example the rough Stein-Stein model studied by
 \cite{key-4} and for $H>1/2$, we obtain long-memory processes
as, for example, the fractional version of the Hull and White model, as studied by \cite{jacquier2024interest}.
\item \textbf{the shifted fractional kernel:} $G(t,s)=1_{s<t}(t -s+\varepsilon)^{H-1/2},$
with $\varepsilon>0$ and $H \in \mathbb R$, then we obtain path-dependent processes that
are semi-martingales but non-Markovian processes. Compared to the fractional kernel, the shifted fractional kernel has no singularity as $t \to s$, and can achieve faster decays than the fractional kernel for larger $t$ with coefficients $H \leq 0$. 
\end{itemize}

\subsection{Pricing Zero-Coupon bonds and interest rate derivatives}
Let us now consider the pricing of zero-coupon bonds as well as interest
rate derivatives in our framework. The price at time $t\leq T$ of a $T-$maturity
zero-coupon bound is denoted by $P(t,T)$ and is defined by
\[
P(t,T)=\EQ \left[ e^{-\int_{t}^{T}r_{s}ds}\mid \mathcal{F}_{t} \right].
\]
We now show that even considering a general Volterra Hull and White
model for interest rate dynamics, the zero-coupon bond price admits
an analytic expression. To this end, we introduce the concept of resolvent 
associated to a kernel and give some examples in Table \ref{tab:Resolvents_example}.
Note that we consider the resolvent definition of \cite{key-4}, which
may differ from other papers. 
\begin{defn}
Let $G$ be a kernel satisfying Definition \ref{def:L2_kernel-1}.
The resolvent of $G$ is the kernel $R_{G}$ satisfying Definition \ref{def:L2_kernel-1} and is such that 
\[
R_{G}=G+G\star R_{G},
\]
where the $\star-$product is such that, for $(s,w)\in[0,T]^{2},$
\[
(G\star R_{G})(s,w)=\int_{0}^{T}G(s,z)R_{G}(z,w)dz.
\]
\end{defn}

\footnotesize
\begin{table}[H]
\centering{}%
\begin{tabular}{|c|c|c|}
\hline 
Kernel $G(t,s)$ & Resolvent $R_{G}(t,s)$ & $B_G(t,T)=\int_t^T R_G(s,t) ds$\tabularnewline
\hline 
\hline 
$1_{s<t}\:c$ & $1_{s<t}\:c e^{c(t-s)}$ & $e^{c(T-t)}-1$\tabularnewline
\hline 
$1_{s<t}\:c\exp(-\beta(t-s))$ & $1_{s<t}\:c e^{(t-s)(c-\beta)}$ & $\frac{c}{c-\beta} (e^{(c-\beta)(T-t)}-1)$\tabularnewline
\hline 
$1_{s<t}\:c\frac{(t-s)^{\alpha-1}}{\Gamma(\alpha)}$ & $1_{s<t}\:c(t-s)^{\alpha-1}E_{\alpha,\alpha}(c(t-s)^{\alpha})$ & $c (T-t)^{\alpha} E_{\alpha,\alpha+1}(c(T-t)^{\alpha})$\tabularnewline
\hline 
\end{tabular}\caption{\protect\label{tab:Resolvents_example}Particular kernels, the associated resolvents and the integral of these resolvents where $E_{\beta,\gamma}(x)=\sum_{n=0}^{\infty}\frac{x^{n}}{\Gamma(n\beta+\gamma)}$.}
\end{table}
\normalsize

\begin{prop}\label{prop:zero-coupon-price} 
Let us define, for $t\in[0,T],$ $f_{t}(.)$
by
\begin{equation}
f_{t}(s):=r_{0}(s)+\int_{0}^{t}R_{\kappa_{r}G_{r}}(s,u)r_{0}(u)du+\frac{\eta_{r}}{\kappa_{r}}\int_{0}^{t}R_{\kappa_{r}G_{r}}(s,u)\;dW_{r}^{\mathbb{Q}}(u),\;t\leq s,\label{eq:forward_rate}
\end{equation}
with $R_{\kappa_{r}G_{r}}$ the resolvent associated to $\kappa_{r}G_{r}$
with the convention $\frac{R_{\kappa_{r}G_{r}}}{\kappa_{r}}=G_{r}$
if $\kappa_{r}=0$. The price of the zero-coupon bond at time $t\in[0,T],$
is given by 
\begin{equation}
P(t,T)=A(t,T)\exp\bigg(-\int_{t}^{T}f_{t}(s)\:ds\bigg),\label{eq:zero_coupon_expression}
\end{equation}
with $A(t,T)$ the time-dependent function given by 
\[
A(t,T):=\exp\bigg(-\int_{t}^{T}\bigg(\int_{u}^{T}R_{\kappa_{r}G_{r}}(s,u)r_{0}(u)ds+\frac{\eta_{r}^{2}}{2}B_{G_{r}}^{2}(u,T)\:\bigg)\:du\bigg).
\]
Moreover, the dynamics of ($P(t,T))_{t\in[0,T]}$ are given by 
\[
\frac{dP(t,T)}{P(t,T)}=r_{t}\:dt-\eta_{r}B_{G_{r}}(t,T)dW_{r}^{\mathbb{Q}}(t),
\]
with 
\begin{align}\label{eq:BGr}
B_{G_{r}}(t,T):=\frac{1}{\kappa_{r}}\int_{t}^{T}R_{\kappa_{r}G_{r}}(s,t)\;ds.    
\end{align}
\end{prop}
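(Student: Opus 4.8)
\emph{Overview.} The plan is to first solve the linear stochastic Volterra equation \eqref{eq:IR_HW_multi_factor} in closed form via its resolvent, then exploit the Gaussianity of $r$ to evaluate the conditional Laplace transform $\EQ[e^{-\int_t^T r_s\,ds}\mid\mathcal F_t]$ explicitly, and finally read the dynamics of $P(\cdot,T)$ off the martingale property of the discounted bond.

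\emph{Step 1: explicit solution of the rate process.} Since \eqref{eq:IR_HW_multi_factor} is linear, I would apply the variation-of-constants formula for Volterra equations associated with the kernel $\kappa_r G_r$: writing $r_t = h_t + \int_0^t \kappa_r G_r(t,s)\,r_s\,ds$ with $h_t := r_0(t) + \eta_r\int_0^t G_r(t,s)\,dW_r^{\mathbb Q}(s)$, the solution is $r_t = h_t + \int_0^t R_{\kappa_r G_r}(t,s)\,h_s\,ds$. Expanding this, interchanging the order of integration in the stochastic term by a stochastic Fubini theorem (justified by Definition~\ref{def:L2_kernel-1} and the $L^p$ bounds inherited from \cite[Theorem A.3]{key-4}), and using the resolvent identity in the form $R_{\kappa_r G_r}\star G_r = \tfrac1{\kappa_r}\bigl(R_{\kappa_r G_r}-\kappa_r G_r\bigr)$ collapses the nested convolution and yields $r_t = f_t(t)$, i.e. the representation
\[
r_t = r_0(t) + \int_0^t R_{\kappa_r G_r}(t,s)\,r_0(s)\,ds + \frac{\eta_r}{\kappa_r}\int_0^t R_{\kappa_r G_r}(t,s)\,dW_r^{\mathbb Q}(s),
\]
with the case $\kappa_r=0$ handled by the stated convention $R_{\kappa_r G_r}/\kappa_r = G_r$ (or a limiting argument).

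\emph{Step 2: the pricing formula.} By Step 1 the process $r$ is Gaussian, so $\int_t^T r_s\,ds$ is Gaussian and remains so conditionally on $\mathcal F_t$. For $s\ge t$, splitting the stochastic integral over $[0,t]$ and $(t,s]$ gives $r_s = \EQ[r_s\mid\mathcal F_t] + \tfrac{\eta_r}{\kappa_r}\int_t^s R_{\kappa_r G_r}(s,u)\,dW_r^{\mathbb Q}(u)$ with $\EQ[r_s\mid\mathcal F_t] = f_t(s) + \int_t^s R_{\kappa_r G_r}(s,u)\,r_0(u)\,du$. Integrating over $s\in[t,T]$ and applying ordinary and stochastic Fubini, the conditional mean of $\int_t^T r_s\,ds$ equals $\int_t^T f_t(s)\,ds + \int_t^T\bigl(\int_u^T R_{\kappa_r G_r}(s,u)\,ds\bigr)r_0(u)\,du$, while the centered part equals $\eta_r\int_t^T B_{G_r}(u,T)\,dW_r^{\mathbb Q}(u)$, so the conditional variance is $\eta_r^2\int_t^T B_{G_r}^2(u,T)\,du$ by Itô's isometry. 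Plugging these into the Gaussian identity $\EQ[e^{-\int_t^T r_s\,ds}\mid\mathcal F_t] = \exp\!\bigl(-\EQ[\int_t^T r_s\,ds\mid\mathcal F_t] + \tfrac12\operatorname{Var}(\int_t^T r_s\,ds\mid\mathcal F_t)\bigr)$ and grouping the deterministic terms into $A(t,T)$ gives \eqref{eq:zero_coupon_expression}.

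\emph{Step 3: the dynamics.} I would use that $N_t := e^{-\int_0^t r_s\,ds}P(t,T) = \EQ[e^{-\int_0^T r_s\,ds}\mid\mathcal F_t]$ is a $\mathbb Q$-martingale. From Step 1, the same Fubini computation shows $\int_0^T r_s\,ds = (\text{deterministic}) + \eta_r\int_0^T B_{G_r}(u,T)\,dW_r^{\mathbb Q}(u)$; conditioning and completing the square then exhibits $N$ as the stochastic exponential $N = N_0\,\mathcal E\bigl(-\eta_r\int_0^{\cdot}B_{G_r}(u,T)\,dW_r^{\mathbb Q}(u)\bigr)$, whence $dN_t/N_t = -\eta_r B_{G_r}(t,T)\,dW_r^{\mathbb Q}(t)$. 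Since $P(t,T) = e^{\int_0^t r_s\,ds}N_t$, Itô's product rule yields $dP(t,T)/P(t,T) = r_t\,dt - \eta_r B_{G_r}(t,T)\,dW_r^{\mathbb Q}(t)$. The main obstacle is the bookkeeping in the repeated stochastic-Fubini interchanges together with the resolvent identity used to collapse the nested convolutions (in particular verifying $G_r + R_{\kappa_r G_r}\star G_r = R_{\kappa_r G_r}/\kappa_r$ and the integrability required under only the $L^2$-regularity of Definition~\ref{def:L2_kernel-1}), plus the separate treatment of the degenerate case $\kappa_r=0$; the remaining computations are routine.
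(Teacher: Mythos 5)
Your proposal is correct and follows essentially the same route as the paper: convert \eqref{eq:IR_HW_multi_factor} to the resolvent representation $r_t=r_0(t)+\int_0^t R_{\kappa_r G_r}(t,s)r_0(s)\,ds+\tfrac{\eta_r}{\kappa_r}\int_0^t R_{\kappa_r G_r}(t,s)\,dW_r^{\mathbb Q}(s)$ (the paper simply cites \cite[Theorem A.3]{key-4}, whereas you rederive it via variation of constants and the identity $R\star G=\tfrac1{\kappa_r}(R-\kappa_r G)$, which is a valid elaboration), then use stochastic Fubini to write $\int_t^T r_s\,ds$ as its $\mathcal F_t$-conditional mean plus $\eta_r\int_t^T B_{G_r}(u,T)\,dW_r^{\mathbb Q}(u)$, and close with the Gaussian Laplace-transform formula. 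The only (immaterial) divergence is in the dynamics: the paper reads them off by applying It\^o directly to the explicit expression for $P(t,T)$, while you obtain the same conclusion by observing that the discounted bond is the stochastic exponential of $-\eta_r\int_0^\cdot B_{G_r}(u,T)\,dW_r^{\mathbb Q}(u)$; both are standard and equivalent.
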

\begin{proof}
Using the resolvent associated to the kernel $\kappa_{r}G_{r}$ and
using  \cite[Theorem A.3]{key-4}, we have that $(r_{t})_{0\leq t\leq T}$
is the strong solution of the following equation
\[
r_{t}=r_{0}(t)+\int_{0}^{t}R_{\kappa_{r}G_{r}}(t,s)r_{0}(s)ds+\frac{1}{\kappa_{r}}\int_{0}^{t}R_{\kappa_{r}G_{r}}(t,s)\eta_{r}\;dW_{r}^{\mathbb{Q}}(s),\:0\leq t\leq T,
\]
with the convention $\frac{R_{\kappa_{r}G_{r}}}{\kappa_{r}}=G_{r}$
if $\kappa_{r}=0.$ In this case, we have that, for $0\leq t\leq T,$
\[
\int_{t}^{T}r_{s}ds=\int_{t}^{T}r_{0}(s)ds+\int_{t}^{T}\int_{0}^{s}R_{\kappa_{r}G_{r}}(s,u)r_{0}(u)\:duds+\frac{1}{\kappa_{r}}\int_{t}^{T}\int_{0}^{s}R_{\kappa_{r}G_{r}}(s,u)\eta_{r}\;dW_{r}^{\mathbb{Q}}(u)ds.
\]
Using the definition of $f_{t}$ given by \eqref{eq:forward_rate} together with an application of stochastic Fubini's theorem,
we obtain that 
\[
\int_{t}^{T}r_{s}ds=\int_{t}^{T}f_{t}(s)ds+\int_{t}^{T}\int_{u}^{T}R_{\kappa_{r}G_{r}}(s,u)r_{0}(u)dsdu+\eta_{r}\int_{t}^{T}\underbrace{\frac{1}{\kappa_{r}}\int_{u}^{T}R_{\kappa_{r}G_{r}}(s,u)\;ds}_{:=B_{G_{r}}(u,T)}dW_{r}^{\mathbb{Q}}(u).
\]
Thus, for $t\in[0,T],$ conditional on $\mathcal{F}_{t}$, the random variable $\int_{t}^{T}r_{s}ds|\mathcal{F}_t$ is
Gaussian with mean 
\[
\E^{\mathbb{Q}}\left[\int_{t}^{T}r_{s}ds\mid \mathcal{F}_{t}\right]=\int_{t}^{T}f_{t}(s)ds+\int_{t}^{T}\int_{u}^{T}R_{\kappa_{r}G_{r}}(s,u)r_{0}(u)dsdu,
\]
and variance
\[
\mathbb{V}^{\mathbb{Q}}\left[\int_{t}^{T}r_{s}ds\mid\mathcal{F}_{t}\right]=\eta_{r}^{2}\int_{t}^{T}B_{G_{r}}^{2}(u,T)\:du.
\]
Therefore, we readily deduce that 
\[
P(t,T)=\exp\bigg(-\int_{t}^{T}f_{t}(s)ds-\int_{t}^{T}\bigg(\int_{u}^{T}R_{\kappa_{r}G_{r}}(s,u)r_{0}(u)ds+\frac{\eta_{r}^{2}}{2}B_{G_{r}}^{2}(u,T)\:\bigg)\:du\bigg),
\]
and we finally obtain that 
\[
\frac{dP(t,T)}{P(t,T)}=r_{t}dt-\eta_{r}\int_{t}^{T}\frac{1}{\kappa_{r}}R_{\kappa_{r}G_{r}}(s,t)ds\:dW_{r}^{\mathbb{Q}}(t).
\]
\end{proof}

\begin{rem}\label{R:stripr0}
Proposition \ref{prop:zero-coupon-price} provides a natural
way to estimate the function $r_{0}(.)$ such that the interest rate
model matches perfectly the initial term-structure of market bond
prices. In fact from \eqref{eq:zero_coupon_expression}, we deduce
that 
\[
\int_{0}^{t}r_{0}(s)\:\bigg(1+\kappa_{r}B_{G_{r}}(s,t)\bigg)\:ds=-\ln P(0,t)-\int_{0}^{t}\frac{\eta_{r}^{2}}{2}B_{G_{r}}^{2}(s,t)\:ds.
\]
Thus, if we assume that, for some given maturities $0=T_{0}<T_{1}<...<T_{n}$,
we have the initial term-structure of market bond prices $P^{Market}(0,T_{i})$,
and that $r_{0}(t)$ is piecewise constant such that
\begin{equation}
r_{0}(t)=-\frac{\ln P^{Market}(0,T_{i})-\ln P^{Market}(0,T_{i-1})+\frac{\eta_{r}^{2}}{2}\int_{T_{i-1}}^{T_{i}}B_{G_{r}}^{2}(u,T_{i})\:du}{(T_{i}-T_{i-1})+\kappa_{r}\int_{T_{i-1}}^{T_{i}}B_{G_{r}}(s,T_{i})\,ds},\:t\in[T_{i-1},T_{i}),\label{eq:link_r0_ZC}
\end{equation}
then $P(0,T_{i})=P^{Market}(0,T_{i}),\:i=1,...,n.$ 
\end{rem}
Based on the expression for the pricing of zero-coupon bonds, we can
now easily deduce explicit expressions for the pricing of zero-coupon
bond call and put options. 
\begin{prop}
\label{prop:pricing_ZC_options}Let us consider $T-$maturity zero-coupon
bond call option $(P(T,S)-K)_{+}$ and put option $(K-P(T,S))_{+}$
with $S>T$ and $K$ the strike of the option. The arbitrage-free
price at time $t\in[0,T]$ of the call option is given by 
\[
\text{Call}_{ZC}(t,T,S,K)=P(t,S)\phi(-d_{1}(t,T))-KP(t,T)\phi(-d_{2}(t,T)),
\]
and the price of the put option is given by 
\[
\text{Put}_{ZC}(t,T,S,K)=KP(t,T)\phi(d_{2}(t,T))-P(t,S)\phi(d_{1}(t,T)),
\]
where $\phi(.)$ is the cumulative function of a normal distribution
and 
\begin{equation}
d_{1}(t,T)=d_{2}(t,T)-v(t, T, S),\label{eq:d_1}
\end{equation}
\begin{equation}
d_{2}(t,T)=\frac{1}{v(t,T,S)\sqrt{T-t}}\log\bigg(\frac{KP(t,T)}{P(t,S)}\bigg)+\frac{v(t,T,S)\sqrt{T-t}}{2},\label{eq:d_2}
\end{equation}
where 
\begin{equation}
v(t,T,S)^{2}=\frac{\eta_{r}^{2}\int_{t}^{T}\bigg(B_{G_{r}}(s,T)-B_{G_{r}}(s,S)\bigg)^{2}ds.}{T-t}\label{eq:variance_ZC_bond_option_pricing}
\end{equation}
\end{prop}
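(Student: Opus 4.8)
## Proof Proposal for Proposition \ref{prop:pricing_ZC_options}

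The plan is to reduce the pricing problem to a standard Black-type formula by switching to the $T$-forward measure $\mathbb{Q}^{T}$ associated with the numéraire $P(\cdot,T)$. First I would recall from Proposition~\ref{prop:zero-coupon-price} that under $\mathbb{Q}$ the zero-coupon bond prices satisfy
\[
\frac{dP(t,S)}{P(t,S)}=r_{t}\,dt-\eta_{r}B_{G_{r}}(t,S)\,dW_{r}^{\mathbb{Q}}(t),\qquad \frac{dP(t,T)}{P(t,T)}=r_{t}\,dt-\eta_{r}B_{G_{r}}(t,T)\,dW_{r}^{\mathbb{Q}}(t),
\]
so that the forward bond price $F_{t}:=P(t,S)/P(t,T)$ is, by It\^o's quotient rule, a positive local martingale under $\mathbb{Q}^{T}$ with dynamics
\[
\frac{dF_{t}}{F_{t}}=-\eta_{r}\bigl(B_{G_{r}}(t,S)-B_{G_{r}}(t,T)\bigr)\,dW_{r}^{\mathbb{Q}^{T}}(t),
\]
where $W_{r}^{\mathbb{Q}^{T}}$ is a $\mathbb{Q}^{T}$-Brownian motion obtained from $W_{r}^{\mathbb{Q}}$ by the Girsanov drift coming from the numéraire change. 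Since $B_{G_{r}}(\cdot,T)$ and $B_{G_{r}}(\cdot,S)$ are deterministic functions (they are built from the deterministic resolvent $R_{\kappa_{r}G_{r}}$ via \eqref{eq:BGr}), the integrand is deterministic, hence $F_{t}$ is a genuine martingale and $\log F_{T}$ is, conditionally on $\mathcal{F}_{t}$, Gaussian under $\mathbb{Q}^{T}$ with variance
\[
\operatorname{Var}^{\mathbb{Q}^{T}}\!\bigl[\log F_{T}\mid\mathcal{F}_{t}\bigr]=\eta_{r}^{2}\int_{t}^{T}\bigl(B_{G_{r}}(s,S)-B_{G_{r}}(s,T)\bigr)^{2}ds=(T-t)\,v(t,T,S)^{2},
\]
which is exactly the quantity in \eqref{eq:variance_ZC_bond_option_pricing}.

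Next I would write the call price as a $T$-forward-measure expectation,
\[
\text{Call}_{ZC}(t,T,S,K)=P(t,T)\,\mathbb{E}^{\mathbb{Q}^{T}}\!\bigl[(F_{T}-K)_{+}\mid\mathcal{F}_{t}\bigr],
\]
using $P(T,S)=F_{T}$ at maturity and the fact that the discount factor $\exp(-\int_{t}^{T}r_{s}ds)$ is absorbed by the numéraire change. Because $F_{t}=P(t,S)/P(t,T)$ and $\log F_{T}\mid\mathcal{F}_{t}$ is Gaussian with mean $\log F_{t}-\tfrac12(T-t)v^{2}$ and variance $(T-t)v^{2}$, the expectation is a textbook Black-76 lognormal computation: it yields $F_{t}\,\phi(-d_{1})-K\,\phi(-d_{2})$ with $d_{1},d_{2}$ as in \eqref{eq:d_1}--\eqref{eq:d_2}. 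Multiplying through by $P(t,T)$ and using $P(t,T)F_{t}=P(t,S)$ gives the stated formula for the call. The put formula follows either by the identical argument with $(K-F_{T})_{+}$ or, more economically, from put--call parity $\text{Call}_{ZC}-\text{Put}_{ZC}=P(t,S)-KP(t,T)$ together with $\phi(x)+\phi(-x)=1$.

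The main technical point — and the only place where care is genuinely needed — is justifying the change of numéraire and the claim that $F$ is a true martingale under $\mathbb{Q}^{T}$: one must check that the stochastic exponential defining $d\mathbb{Q}^{T}/d\mathbb{Q}$ is a uniformly integrable martingale and that $P(\cdot,T)>0$, both of which follow from the explicit Gaussian form in Proposition~\ref{prop:zero-coupon-price} (the log-bond price has deterministic, bounded quadratic variation on $[0,T]$, so Novikov's condition holds trivially). Everything after that is the standard Gaussian/Black computation, so I do not anticipate any real obstacle; the only bookkeeping subtlety is keeping track of the sign convention in $B_{G_{r}}(s,T)-B_{G_{r}}(s,S)$ versus $B_{G_{r}}(s,S)-B_{G_{r}}(s,T)$, which is immaterial since it enters only through its square in $v(t,T,S)^{2}$.
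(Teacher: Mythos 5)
Your argument follows the same route as the paper's proof: pass to the $T$-forward measure, observe that $F_t=P(t,S)/P(t,T)$ is a driftless lognormal martingale under $\mathbb{Q}^T$ with deterministic volatility $\eta_r\bigl(B_{G_r}(t,T)-B_{G_r}(t,S)\bigr)$, and read off the Black-76 formula (with the put obtained by parity). Your added remarks on Novikov's condition and the true-martingale property of $F$ are a small bonus in rigor that the paper leaves implicit; otherwise the two proofs coincide.
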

\begin{proof}
Let us consider the call option. We know that the arbitrage-free price
of a call on zero-coupon bond satisfies 
\[
\text{Call}_{ZC}(t,T,S,K)=P(t,T)\E^{\mathbb{Q}^{T}}\left[(P(T,S)-K)_{+} \mid \mathcal F_t\right].
\]
From Proposition \ref{prop:zero-coupon-price}, we know that under
the risk-neutral measure $\mathbb{Q},$
\[
\frac{dP(t,S)}{P(t,S)}=r_{t}\:dt-\eta_{r}B_{G_{r}}(t,S)dW_{r}(t).
\]
Therefore, for $S>T,$ we obtain that under the $T-$forward measure
$\mathbb{Q}^{T}$\footnote{We introduce the forward measure in Section \ref{sec:forward_measure}.}, the zero-coupon bond dynamics is given by 
\[
\frac{dP(t,S)}{P(t,S)}=\bigg(r_{t}+\eta_{r}^{2}B_{G_{r}}(t,S)B_{G_{r}}(t,T)\bigg)\:dt-\eta_{r}B_{G_{r}}(t,S)dW_{r}^{\mathbb{Q}^{T}}(t).
\]
Moreover, under $\mathbb{Q}^{T},$ we have that 
\[
d\bigg(\frac{P(t,S)}{P(t,T)}\bigg)=\frac{P(t,S)}{P(t,T)}\eta_{r}\bigg(B_{G_{r}}(t,T)-B_{G_{r}}(t,S)\bigg)dW_{r}^{\mathbb{Q}^{T}}(t)
\]
and we deduce that 
\[
\E^{\mathbb{\mathbb{Q}}^{T}}\left[(P(T,S)-K)_{+}\mid \mathcal{F}_{t}\right]=\frac{P(t,S)}{P(t,T)}\phi(-d_{1}(t,T))-K\phi(-d_{2}(t,T)),
\]
where $\phi(.)$ is the cumulative function of a normal distribution
and $d_{1},d_{2}$ are given by \eqref{eq:d_1}-\eqref{eq:d_2}. Finally,
using the call/put parity, we easily deduce the form of the put price. 
\end{proof}
Based on zero-coupon bonds call or put options, we can price cap and
floor options since, as explained in \cite{key-9}, cap and floor options
can be decomposed into a sum of zero-coupon bonds options. In fact, 
the cap and floor payoffs are of the form

\[
\text{Cap}=\sum_{i=1}^{\beta}(T_{i}-T_{i-1})\:(L(T_{i-1},T_{i})-K)_{+},
\]
\[
\text{Floor}=\sum_{i=1}^{\beta}(T_{i}-T_{i-1})\:(K-L(T_{i-1},T_{i}))_{+}
\]
where $L$ is the reference rate and $(T_{1},...,T_{\beta})$ are
the payment dates. Using \cite{key-9}, we have that the price, at time $t< T_0$, of cap and floor options is given by
\[
\text{Cap}(t,\beta,K)=\sum_{i=1}^{\beta}(1+K(T_{i}-T_{i-1}))\:\text{Put}_{ZC}\left(t,T_{i-1},T_i,\frac{1}{1+K(T_{i}-T_{i-1})}\right). 
\]
\[
\text{Floor}(t,\beta,K)=\sum_{i=1}^{\beta}(1+K(T_{i}-T_{i-1}))\:\text{Call}_{ZC}\left(t,T_{i-1},T_i,\frac{1}{1+K(T_{i}-T_{i-1})}\right).
\]
This enable us to calibrate our Gaussian Volterra model for interest rate to the market data of cap and floor options for any Volterra kernel $G_{r}(.)$. 

\subsection{Forward measure and pricing of derivatives on the index}\label{sec:forward_measure}

Let us now focus on the pricing of financial derivatives on the index $I$ in our framework.
Standard arguments imply that the arbitrage-free price of a financial
derivative is obtained by taking the discounted value of a payoff
under a risk-neutral measure $\mathbb{Q}$. In this case, if we consider
a general $T-$maturity derivative of the form 
\[
H_{T}:=h(I_{T})\in L^{2}(\mathbb{Q}),
\]
where $h(.)$ is a continuous positive function, the arbitrage-free
price at time $t\in[0,T],$ is given by 
\begin{equation}
V_{t}=\E^{\mathbb{Q}}\left[e^{-\int_{t}^{T}r_{s}ds}H_{T}\mid \mathcal{F}_{t}\right].\label{eq:price_derivatives}
\end{equation}
In the presence of stochastic interest rates, except for linear payoffs, it is impossible to
obtain a more explicit form of the price of financial derivatives on $I$ if we consider the formulation
under the risk-neutral measure. One technique for dealing with this
problem is to switch from the risk-neutral measure to the $T-$forward
neutral measure. Thus, let us introduce the  $T-$forward neutral
measure denoted by $\mathbb{Q}^{T}$ which is equivalent to the risk-neutral
measure $\mathbb{Q}$. The change of measure from the risk-neutral
measure $\mathbb{Q}$ to the $T-$forward measure $\mathbb{Q}^{T}$
satisfies 
\[
\frac{d\mathbb{Q}^{T}}{d\mathbb{\mathbb{Q}}}\bigg|_{t}:=e^{-\int_{0}^{t}r(s)\:ds}\frac{P(t,T)}{P(0,T)}.
\]

We can now introduce the $T-$forward index denoted by $(I_{t}^{T})_{0\leq t\leq T}$
and defined as, for $0\leq t\leq T,$
\[
I_{t}^{T}:=\E^{\mathbb{Q}^{T}}\left[I_{T}\mid \mathcal{F}_{t}\right]=\frac{I_{t}}{P(t,T)}.
\]
Since under the forward measure the financial index divided by the
zero-coupon bond is a martingale, we obtain that the process $(I_{t}^{T})_{0\leq t\leq T}$
has the following dynamic 
\begin{align}\label{eq:dyn_forward_inflation}
\frac{dI_{t}^{T}}{I_{t}^{T}}=\nu_{t}\:dW_{I}^{\mathbb{Q}^{T}}(t)+\eta_{r}B_{G_{r}}(t,T)\:dW_{r}^{\mathbb{Q}^{T}}(t),    
\end{align}
with
\begin{equation}
\nu_{t}=g_{0}^{T}(t)+\int_{0}^{t}G_{\nu}(t,s)\kappa_{\nu}\:\nu_{s}ds+\int_{0}^{t}G_{\nu}(t,s)\:\eta_{\nu}\:dW_{\nu}^{\mathbb{Q}^{T}}(s),\label{eq:vol_forward_measure}
\end{equation}
 where $(W_{r}^{\mathbb{Q}^{T}},W_{I}^{\mathbb{Q}^{T}},W_{\nu}^{\mathbb{Q}^{T}})$
is a $3-$dimensional correlated Brownian motion under the $T-$forward
measure $\mathbb{Q}^{T}$ and 
\[
g_{0}^{T}(t)=g_{0}(t)-\eta_{\nu}\eta_{r}\rho_{r\nu}\int_{0}^{t}G_{\nu}(t,s)\:B_{G_{r}}(s,T)\:ds.
\]
By combining the forward measure and the $T-$forward financial index,
we obtain that the arbitrage-free price of a derivative (\ref{eq:price_derivatives})
satisfies 
\[
V_{t}=P(t,T)\E^{\mathbb{Q}^{T}}\left[\,h(e^{\log I_{T}^{T}})\,\mid \mathcal{F}_{t}\right].
\]
By switching from the risk-neutral measure to the $T-$forward measure,
we get a simpler framework for pricing financial derivatives. In the
following, we will show that the characteristic function of the log-forward
index $(\log I_{t}^{T})_{0\leq t\leq T}$ admits an explicit form.
This opens the way to a fast pricing of financial derivatives using
Fourier methods. Moreover, using hedging approaches developed in
several papers \cite{key-7,key-11,key-20}, some hedging strategies
can also be deduced by Fourier methods for contingent claims that
admit a Fourier representation. 

\section{The characteristic function of the log-forward index\protect\label{sec:The-characteristic-function} }

The aim of this section is to derive an analytical form of the characteristic
function of the log-forward index $(I_{t}^{T})_{0\leq t\leq T}$ with  dynamics 
\eqref{eq:dyn_forward_inflation} under $\mathbb{Q}^{T}$  and apply it for  Fourier pricing of  derivatives on the index.  \\

We start by recalling some results on operator theory in Hilbert spaces as well as introducing some notations. Let $\mathbf{A}$ be a linear compact operator acting on $L^{2}([0,T],\mathbb{C})$. Then, $\mathbf{A}$ is a bounded operator i.e. there exists $C>0$ such that, for all $f\in L^{2}([0,T],\mathbb{C})$, $||\mathbf{A}f||\leq C ||f||$ and for $B:=\{f\in L^{2}([0,T],\mathbb{C}): ||f||\leq 1\}$, the closure of $\mathbf{A}(B)$ is compact in $L^{2}([0,T],\mathbb{C})$. $\mathbf{A}$ is an integral operator if $\mathbf{A}$ a linear operator induced by a kernel $G\in L^{2}([0,T]^2,\mathbb{R})$ such that, for $f\in L^{2}([0,T],\mathbb{C})$ and $s\in[0,T],$
\[
(\mathbf{A}f)(s)=\int_{0}^{T}G(s,w)f(w)dw.
\]
Moreover, if $\mathbf{A}$ is a integral operator, then $\mathbf{A}$ is a Hilbert-Schmidt operator on  $L^{2}([0,T],\mathbb{C})$ and is in particular compact. The trace of an operator, denoted by $\text{Tr}(.)$, is defined for operators of trace class where a compact operator $\mathbf{A}$ is said to be of trace class if
\[
\text{Tr}(\mathbf{A})=\sum_{n\geq 1} \langle \mathbf{A} v_n, v_n \rangle <\infty,
\]
for a given orthonormal basis $(v_n)_{n\geq 1}$. For more details about Hilbert-Schmidt operators and trace class operators,  we refer to \cite{gohberg1978introduction, gohberg2012traces, simon2005trace}, and also \cite[Section
A]{key-4}.\\

Let us now introduce some notations and properties:
\begin{itemize}
\item $\langle \cdot,\cdot\rangle_{L^{2}}$ denotes the following product on $L^{2}([0,T],\mathbb{C})$
\[
\langle f,g\rangle_{L^{2}}=\int_{0}^{T}f(s)g(s)ds,\quad f,g\in L^{2}([0,T],\mathbb{C}).
\]
Note that $\langle\cdot,\cdot\rangle_{L^{2}}$ is the inner product in $L^{2}([0,T],\mathbb{R})$ but not in $L^{2}([0,T],\mathbb{C})$. 
\item For any kernels $G_{1},G_{2}\in L^{2}([0,T]^{2},\mathbb{\mathbb{R}}),$
the $\star-$product is defined by
\[
(G_{1}\star G_{2})(s,w)=\int_{0}^{T}G_{1}(s,z)G_{2}(z,w)dz, \quad (s,w)\in[0,T]^{2}.
\]
\item For a kernel $G,$ $G^{*}$ denotes the adjoint kernel such that,
for $(s,w)\in[0,T]^{2},$
\[
G^{*}(s,w)=G(w,s),
\]
and $\mathbf{G}^{*}$ is the operator induced by $G^{*}.$
\item $\text{id}$ denotes the identity operator such that $(\text{id}f)=f$,  $f\in L^{2}([0,T],\mathbb{C})$. 
\end{itemize}

\subsection{Semi-explicit expression}

We now derive a semi-explicit form for the characteristic function of
the log-forward index in our framework. \\

To this end, we consider an
approach similar to \cite{key-4} which deduces the explicit form
of the log-price characteristic function in equity markets for the
Volterra Stein-Stein volatility model. Here, we show that this result
can be extended to a framework where interest rates are also stochastic.
As in \cite{key-4}, we consider the expression of
the adjusted conditional mean of $\nu_{t}$ and then we define a linear
operator $\mathbf{\Psi}_{t}^u$ in $L^{2}([0,T],\mathbb{C})$ that will
be useful for the expression of the characteristic function.
The adjusted conditional mean of $\nu_{t}$ denoted by $g_{t}$ is
given by 
\[
g_{t}(s)=1_{t\leq s}\E^{\mathbb{Q}^{T}}\left[\nu_{s}-\int_{t}^{T}G_{\nu}(s,w)\kappa_{\nu}\nu_{w}dw\mid \mathcal{F}_{t}\right],\quad s,t\leq T,
\]
and we can easily show that $g_{t}$ reduces to 
\[
g_{t}(s)=1_{t\leq s}\bigg(g_{0}^{T}(s)+\int_{0}^{t}G_{\nu}(s,w)\kappa_{\nu}\nu_{w}dw+\int_{0}^{t}G_{\nu}(s,w)\eta_{\nu}\:dW_{\nu}^{\mathbb{Q}^{T}}(w)\bigg),\quad s,t\leq T.
\]

\begin{defn}
For $u\in\mathbb{C}$ such that $0\leq\mathcal{\mathfrak{R}}(u)\leq1,$
the operator $\mathbf{\Psi}_{t}^{u}$ acting on $L^{2}([0,T],\mathbb{C})$
is defined by 
\begin{equation}
\mathbf{\Psi}_{t}^{u}:=(\text{id}-b^{u}\mathbf{G}_{\nu}^{*})^{-1}a^{u}(\text{id}-2a^{u}\tilde{\mathbf{\Sigma}}_{t}^{u})^{-1}(\text{id}-b^{u}\mathbf{G}_{\nu})^{-1},\quad t\leq T,\label{eq:def_psi_t}
\end{equation}
where:

\begin{itemize}
\item $\mathbf{G}_{\nu}$ is the integral operator induced by $G_{\nu}(.)$
and $\mathbf{G}_{\nu}^{*}$ the adjoint operator,
\item $a^{u}:=\frac{1}{2}(u^{2}-u)$ and $b^{u}:=\kappa_{\nu}+\eta_{\nu}u\rho_{I\nu},$
\item $\tilde{\mathbf{\Sigma}}_{t}^{u}$ is the adjusted covariance integral
operator such that 
\[
\tilde{\mathbf{\Sigma}}_{t}^{u}=(\text{id}-b^{u}\mathbf{G}_{\nu})^{-1}\mathbf{\Sigma}_{t}(\text{id}-b^{u}\mathbf{G}_{\nu}^{*})^{-1},
\]
with $\mathbf{\Sigma_{t}}$ the integral operator associated with
the covariance kernel given by 
\[
\Sigma_{t}(s,w)=\eta_{\nu}^{2}\int_{t}^{T}G_{\nu}(s,z)G_{\nu}(w,z)dz,\;t\leq s,u\leq T.
\] 
\end{itemize}
Note that using similar arguments to those in the proof of \cite[Lemma 5.6.]{abi2021markowitz}, we can prove that $\mathbf{\Psi}_{t}^{u}$ is well defined and is a bounded linear operator acting on $L^{2}([0,T],\mathbb{C})$. 
\end{defn}

Based now on $g_{t}$ and $\mathbf{\Psi}_{t}^{u}$, we deduce a semi-explicit
expression of the characteristic function. The methodology used in
\cite{key-4} to deduce the analytical expression of the characteristic
function can be extended to a framework with stochastic interest rates.
Nevertheless, to extend these results, we need to also account for correlation between processes and consider in this sense the process
$h_{t}^{u}(.)$ defined, for $u\in\mathbb{C}$ such that $0\leq\mathcal{\mathfrak{R}}(u)\leq1,$
as
\[
h_{t}^{u}(s):=g_{t}(s)+1_{t\leq s}\bigg(\rho_{Ir}\eta_{r}B_{G_{r}}(s,T)-\int_{t}^{s}G_{\nu}(s,w)\:(b^{u}\rho_{Ir}-u\eta_{\nu}\rho_{\nu r})\eta_{r}B_{G_{r}}(w,T)\:dw\bigg),\;s,t\leq T,
\]
with $B_{G_r}$ given by \eqref{eq:BGr}. In particular when $\rho_{Ir}=\rho_{\nu r}=0$, $h_{t}^{u}(.)$ reduces
to $g_{t}(.)$ and we recover the framework of \cite{key-4}. 

\begin{thm}
\label{thm:chf_general} Let $g_{0}(.)$ be given by \eqref{eq:g_0_t}
and $G_{\nu}(.)$ a Volterra kernel as in Definition \ref{def:L2_kernel-1}. Fix
$u\in\mathbb{C}$ such that $0\leq\mathcal{\mathfrak{R}}(u)\leq1,$
then, for all $t\leq T$,
\begin{equation}
\E^{\mathbb{Q}^{T}}\bigg[\exp\bigg(u\log\frac{I_{T}^{T}}{I_{t}^{T}}\bigg)\bigg|\mathcal{F}_{t}\bigg]=\exp\left(\phi_{t}^{u}+\chi_{t}^{u}+\langle h_{t}^{u},\mathbf{\Psi}_{t}^{u}h_{t}^{u}\rangle_{L^{2}}\right),\label{eq:chf_explicit}
\end{equation}
with:
\end{thm}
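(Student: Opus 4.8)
The plan is to follow the route of \cite{key-4}: condition on the trajectory of the volatility‐driving Brownian motion, reduce the computation to the exponential moment of a Gaussian quadratic functional, and carefully carry along the extra interest‑rate terms appearing in \eqref{eq:dyn_forward_inflation}. First I would apply It\^o's formula to \eqref{eq:dyn_forward_inflation} to obtain the explicit representation
\[
\log\frac{I_T^T}{I_t^T}=-\frac12\int_t^T\!\big(\nu_s^2+2\rho_{Ir}\nu_s\beta_s+\beta_s^2\big)\,ds+\int_t^T\!\nu_s\,dW_I^{\mathbb{Q}^T}(s)+\int_t^T\!\beta_s\,dW_r^{\mathbb{Q}^T}(s),
\]
where $\beta_s:=\eta_rB_{G_r}(s,T)$ is a deterministic function. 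The purely interest‑rate contributions (the $\beta_s^2$ term and $\int_t^T\beta_s\,dW_r^{\mathbb{Q}^T}$), being — conditionally on $\mathcal F_t$ — Gaussian and coupled to $\nu$ only through the fixed correlation structure, will in the end produce the scalar term $\phi_t^u$.

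Second, conditionally on $\mathcal F_t$, I would decompose $W_I^{\mathbb{Q}^T}$ and $W_r^{\mathbb{Q}^T}$ on $[t,T]$ into their projections onto $W_\nu^{\mathbb{Q}^T}$ and orthogonal Brownian parts, and condition on $\mathcal G_t:=\mathcal F_t\vee\sigma\!\big(W_\nu^{\mathbb{Q}^T}(s)-W_\nu^{\mathbb{Q}^T}(t):t\le s\le T\big)$, under which $\nu|_{[t,T]}$ is deterministic (recall $g_t$). The conditional law of $\log(I_T^T/I_t^T)$ is then Gaussian, so $\E^{\mathbb{Q}^T}[\exp(u\log(I_T^T/I_t^T))\mid\mathcal G_t]$ is the exponential of a functional that is quadratic‑plus‑linear in $\nu|_{[t,T]}$ and $\beta$. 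Collecting the coefficient of $\int_t^T\nu_s^2\,ds$, of the cross terms with $\beta$, and of $\int_t^T\nu_s\,dW_\nu^{\mathbb{Q}^T}(s)$ — which, after inserting the Volterra equation \eqref{eq:vol_forward_measure}, is itself a second‑chaos functional of $W_\nu^{\mathbb{Q}^T}$ — is where $a^u=\tfrac12(u^2-u)$, $b^u=\kappa_\nu+\eta_\nu u\rho_{I\nu}$ and the function $h_t^u$ emerge. Equivalently (and this is the cleanest bookkeeping device), absorbing the Girsanov shift induced by the stochastic exponential of $u\int_t^{\cdot}\big(\nu_s\,dW_I^{\mathbb{Q}^T}(s)+\beta_s\,dW_r^{\mathbb{Q}^T}(s)\big)$ turns $\nu$ into a Gaussian Volterra process with mean‑reversion kernel $b^uG_\nu$ and (conditional) mean function $h_t^u$, leaving $a^u\!\int_t^T(\nu_s^2+2\rho_{Ir}\nu_s\beta_s+\beta_s^2)\,ds$ in the exponent.

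Third, I would integrate over $W_\nu^{\mathbb{Q}^T}|_{[t,T]}$. Conditionally on $\mathcal F_t$ one has $\nu|_{[t,T]}=h_t^u+\xi$ with $\xi$ a centered Gaussian process whose covariance operator, after the resolvent change of variables $(\mathrm{id}-b^u\mathbf G_\nu)^{-1}$, is exactly $\tilde{\mathbf\Sigma}_t^u$; hence the quantity to evaluate is a Gaussian quadratic‑form expectation of the type $\E[\exp(\langle\xi,\mathbf A\xi\rangle+\langle \ell,\xi\rangle)]$ on $L^2([0,T],\mathbb C)$. Applying the Cameron--Martin/Fredholm formula for such moments — licit because $\mathbf G_\nu$ and $\mathbf\Sigma_t$ are Hilbert--Schmidt, so the relevant perturbations of $\mathrm{id}$ are invertible with a well‑defined (regularised) Fredholm determinant, as recalled before the statement, see also \cite[Section A]{key-4} and \cite[Lemma 5.6.]{abi2021markowitz} for $\mathbf\Psi_t^u$ — produces the scalar normalisation $\chi_t^u$ (a $\log\det$/trace term) together with $\langle h_t^u,\mathbf\Psi_t^uh_t^u\rangle_{L^2}$, with $\mathbf\Psi_t^u$ as in \eqref{eq:def_psi_t}; the remaining deterministic pieces assemble into $\phi_t^u$.

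Finally, the computation is run for $u\in[0,1]$, where the Girsanov density is a true martingale and all exponential moments are finite thanks to the uniform $L^p$ bounds on $(r,\nu)$ stated earlier; I would then extend to the strip $\{0\le\mathfrak R(u)\le1\}$ by analytic continuation, verifying that both sides are holomorphic there (local boundedness of the integrand via those $L^p$ bounds on the left, continuity and invertibility of $u\mapsto\mathbf\Psi_t^u$ and finiteness of $\chi_t^u,\phi_t^u$ on the right). The main obstacle is precisely the second and third steps: correctly bookkeeping every cross‑correlation contribution so that they collapse onto the single modified resolvent $(\mathrm{id}-b^u\mathbf G_\nu)^{-1}$ and the function $h_t^u$, and making rigorous the infinite‑dimensional Gaussian quadratic‑form identity — including the identification of $\chi_t^u$ as a Fredholm determinant and the guarantee that the determinant does not vanish on the strip, which is what ultimately legitimises the analytic continuation.
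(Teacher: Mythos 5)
Your proposal is a genuinely different route from the paper's. The paper \emph{postulates} the candidate answer and runs a martingale verification argument: it defines $M_{t}=\exp\left(u\log I_{t}^{T}+\phi_{t}^{u}+\chi_{t}^{u}+\langle h_{t}^{u},\mathbf{\Psi}_{t}^{u}h_{t}^{u}\rangle_{L^{2}}\right)$, computes $dM_{t}$ via It\^o calculus (using Lemma~\ref{lem:h_Psi_t_dynamic} for the dynamics of $\langle h_{t}^{u},\mathbf{\Psi}_{t}^{u}h_{t}^{u}\rangle_{L^{2}}$, the identity $(\bar{\mathbf{\Psi}}_{t}^{u}h_{t}^{u})(t)=b^{u}(\mathbf{G}_{\nu}^{*}\mathbf{\Psi}_{t}^{u}h_{t}^{u})(t)$ from Lemma~\ref{lem:integral_operator}, and the operator Riccati equation $\dot{\mathbf{\Psi}}_{t}^{u}=2\mathbf{\Psi}_{t}^{u}\dot{\mathbf{\Sigma}}_{t}\mathbf{\Psi}_{t}^{u}$ from \cite[Lemma B.1]{key-4}), shows the drift cancels, and then dominates the resulting local martingale by an explicit exponential martingale to conclude it is a true martingale; the restriction to real $u\in[0,1]$ and the extension to the strip are handled as in \cite{key-4}. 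Your sketch is instead a \emph{forward} derivation: condition on $\mathcal{G}_{t}=\mathcal{F}_{t}\vee\sigma(W_{\nu}^{\mathbb{Q}^{T}})$ so the conditional law is Gaussian, absorb the index/rate noises via a Girsanov/Esscher shift that deforms the Volterra kernel from $\kappa_{\nu}G_{\nu}$ to $b^{u}G_{\nu}$, then integrate out $\nu|_{[t,T]}$ with an infinite-dimensional Cameron--Martin/Fredholm quadratic-form identity and extend by analytic continuation. The verification route avoids any explicit Gaussian integration over path space but requires guessing the closed form in advance; your forward route is constructive and explains where $\mathbf{\Psi}_{t}^{u}$, $a^{u}$, $b^{u}$, $h_{t}^{u}$ come from, at the cost of making the Fredholm-determinant identity (non-vanishing of $\det(\mathrm{id}-2a^{u}\tilde{\mathbf{\Sigma}}_{t}^{u})$, validity of the Cameron--Martin formula on $L^{2}([0,T],\mathbb{C})$) and the well-posedness of the Girsanov step fully rigorous, which you correctly flag as the hard points.

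One concrete correction: you have $\phi_{t}^{u}$ and $\chi_{t}^{u}$ swapped. In the theorem, $\phi_{t}^{u}=-\int_{t}^{T}\mathrm{Tr}(\mathbf{\Psi}_{s}^{u}\dot{\mathbf{\Sigma}}_{s})\,ds$ is precisely the trace/log-determinant normalisation produced by integrating out the Gaussian quadratic functional of $\nu$, while $\chi_{t}^{u}=\tfrac12(u^{2}-u)(1-\rho_{Ir}^{2})\eta_{r}^{2}\int_{t}^{T}B_{G_{r}}(s,T)^{2}\,ds$ is the residual purely interest-rate contribution (the part of $\int\beta_{s}\,dW_{r}^{\mathbb{Q}^{T}}$ orthogonal to $W_{\nu}^{\mathbb{Q}^{T}}$ and $W_{I}^{\mathbb{Q}^{T}}$ after projecting out the correlation). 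Your text attributes the interest-rate scalar to $\phi_{t}^{u}$ and the log-det to $\chi_{t}^{u}$, which is backwards; if carried out the computation would still work, but you should fix the bookkeeping.
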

\begin{itemize}
\item $\mathbf{\Psi}_{t}^{u}$ given by \eqref{eq:def_psi_t},
\item $\phi_{t}^{u}=-\int_{t}^{T}\text{Tr}(\mathbf{\Psi}_{s}^{u}\dot{\mathbf{\Sigma}_{s}})ds,$
where $\dot{\mathbf{\Sigma}_{t}}$ is the strong derivative of $t\to\Sigma_{t}$
induced by the kernel i.e. 
\[
\dot{\mathbf{\Sigma_{t}}}(s,u)=-\eta_{\nu}^{2}G_{\nu}(s,t)G_{\nu}(u,t),\;a.e.
\]
\item $\chi_{t}^{u}=\frac{1}{2}(u^{2}-u)\int_{t}^{T}(1-\rho_{Ir}^{2})\eta_{r}^{2}B_{G_{r}}(s,T)^{2}ds.$
\end{itemize}
\begin{proof}
The proof is given in Section \ref{sec:Proofs}. 
\end{proof}

\subsection{Numerical implementation and Fourier pricing  \protect\label{subsec:numerical_implementation_general_kernels}}

For a practical perspective, we propose a natural way of approximating
the expression of the characteristic function when we consider general
Volterra kernels that lead, for example, to non-Markovian models such
as in the case of the fractional or shifted fractional kernels. This approximation is
based on a natural discretization of the inner product $\langle.,.\rangle_{L^{2}}$.
In this section, for sake of simplicity, we fix $t=0$. \\
\\
As proposed by \cite{key-4}, we can approximate the explicit
expression of the characteristic function \eqref{eq:chf_explicit}
by an approximate closed form solution using a simple disctretization
of the operator $\mathbf{\Psi}_{0}^{u}$ \emph{\`a la} Fredholm. To
this end, for fixed $N\in\mathbb{N},$ we consider a partition  of $[0,T]$
with $t_{i}=i\frac{T}{N}$ for $i=0,\ldots,N.$ Then,
for $u\in\mathbb{C}$ such that $0\leq\mathcal{\mathfrak{R}}(u)\leq1,$
we can approximate the operator $\mathbf{\Psi}_{0}^{u}$ by a $N\times N$
matrix denoted $\mathbf{\Psi}_{0}^{u;\,N}$ and defined by
\begin{equation}
\mathbf{\Psi}_{0}^{u;\,N}:=(I_{N}-b^{u}(\mathbf{G}_{\nu}^{N})^{T})^{-1}a^{u}(I_{N}-2a^{u}\tilde{\mathbf{\Sigma}}_{0}^{u;\,N})^{-1}(I_{N}-b^{u}\mathbf{G}_{\nu}^{N})^{-1},\label{eq:approx_operator}
\end{equation}
where $I_{N}$ is the $N\times N$ identity matrix, $\mathbf{G}_{\nu}^{N}$
is the lower triangular matrix such that 
\begin{equation}
G_{\nu,\:ij}^{N}:=1_{j\leq i-1}\int_{t_{j-1}}^{t_{j}}G_{\nu}(t_{i-1},s)ds,\;1\leq i,j\leq N, \label{eq:G_matrix}
\end{equation}
and 
\[
\tilde{\mathbf{\Sigma}}_{0}^{u;\,N}:=\frac{T}{N}(I_{N}-b^{u}\mathbf{G}_{\nu}^{N})^{-1}\mathbf{\Sigma}_{0}^{N}(I_{N}-b^{u}(\mathbf{G}_{\nu}^{N})^{T})^{-1},
\]
with $\mathbf{\Sigma}_{0}^{N}$ the $N\times N$ discretized covariance
matrix given by 
\begin{equation}
\mathbf{\Sigma}_{0}^{i,j;\:N}:=\eta_{\nu}^{2}\int_{0}^{T}G_{\nu}(t_{i-1},s)G_{\nu}(t_{j-1},s)ds,\;1\leq i,j\leq N. \label{eq:Sigma_matrix}
\end{equation}

When we consider kernels of the form 
\[
G_\nu(t,s)=\frac{(t-s+\varepsilon)^{H-1/2}}{\Gamma(H+1/2)},~\varepsilon\geq 0, 
\] with $H\in(0,1)$ for $\epsilon=0$ and $H\in \mathbb{R}$ for $\varepsilon>0$, we observe that the $N\times N$ matrix \eqref{eq:G_matrix}-\eqref{eq:Sigma_matrix} can be computed in closed form
\[
G_{\nu,\:ij}^{N}=1_{j\leq i-1} \frac{(t_{i-1}-t_{j-1}+\varepsilon)^{\alpha}-(t_{i-1}-t_{j}+\varepsilon)^{\alpha}}{\Gamma(\alpha+1)},~1\leq i, j\leq N, 
\]
\[
\mathbf{\Sigma}_{0}^{i,j;\:N}=\frac{\eta_\nu^2~t_{i-1}~(t_{i-1}+\varepsilon)^{\alpha-1} (t_{j-1}+\varepsilon)^{\alpha-1}  }{\Gamma(\alpha)^2} F_1(1,1-\alpha,1-\alpha,2,\frac{t_{i-1}}{t_{j-1}+\varepsilon},\frac{t_{i-1}}{t_{i-1}+\varepsilon}),~\mathbf{\Sigma}_{0}^{i,j;\:N}=\mathbf{\Sigma}_{0}^{j,i;\:N}, ~1\leq i\leq j\leq N, 
\]
where $\alpha:=H+1/2$ and $F_1(a,b,c,d,x,y)$ is the Appell hypergeometric function of the first kind given by 
\[
F_1(\alpha,\beta,\gamma,\delta,x,y)=\sum_{m,n=0}^{\infty} \frac{(\alpha)_{m+n}(\beta)_m(\gamma)_n}{(\delta)_{m+n} m!n!} x^m y^n\,
\]
with $(.)_n$ the Pochhammer symbol such that 
\[
(x)_n=\prod_{k=1}^n (x-k+1). 
\]\\
Based on the approximate operator, we now deduce a closed form approximate solution to the characteristic function. Relying on 
\cite{abijaber_guellil}, we have that $\phi_{t}^{u}$, that appears in \eqref{eq:chf_explicit},
can be rewritten in term of the Fredholm determinant such that $$  \phi_{t}^{u}=-\log(\det\left((\text{id}-2a^u\tilde{\mathbf{\Sigma}}_{t}^{u}))^{1/2}\right).$$The reason we have rewritten $\phi_t^u$ is that this enables us to use the Fredholm determinant to calculate $\phi_t^u$, instead of having to discretize the trace-dependent expression which would require the computation of $\mathbf{\Psi}_{t}^{u}$ for several values of $t$. 
In this case, if for $N\in\mathbb{N}$ and for $u\in\mathbb{C}$ such
that $0\leq\mathcal{\mathfrak{R}}(u)\leq1,$ we consider the $N-$dimensional
vector $h_{0}^{u;\,N}=(h_{0}^{u}(t_{0}),...,h_{0}^{u}(t_{N}))$ with
$t_{i}=i\frac{T}{N}$, a natural approximation of the characteristic
function is given by
\begin{equation}
E^{\mathbb{Q}^{T}}\bigg[\exp\bigg(u\log\frac{I_{T}^{T}}{I_{0}^{T}}\bigg)\bigg]=\exp(\phi_{0}^{u}+\chi_{0}^{u}+\langle h_{0}^{u},\mathbf{\Psi}_{0}^{u}h_{0}^{u}\rangle_{L^{2}})\approx \frac{\exp\left(\chi_{0}^{u}+\frac{T}{N}(h_{0}^{u;\,N})^{T}\mathbf{\Psi}_{0}^{u;\,N}h_{0}^{u;N}\right)}{\det\left((\mathbf{\Phi}_{0}^{u;\,N})^{1/2}\right)},\label{eq:approx_fractional_kernel_op_disc}
\end{equation}
with:
\begin{itemize}
\item $\mathbf{\Psi}_{0}^{u;\,N}$ the approximate operator defined by \eqref{eq:approx_operator},
\item $\mathbf{\Phi}_{0}^{u;\,N}:=(I_{N}-2a^{u}\tilde{\mathbf{\Sigma}}_{0}^{u;\,N}),$ 
\item $\chi_{0}^{u}=\frac{1}{2}(u^{2}-u) (1-\rho_{Ir}^{2})\eta_{r}^{2}\int_{0}^{T}B_{G_{r}}(s,T)^2ds.$
\end{itemize}

\begin{rem}
To compute $\chi_{0}^{u}$, we need to compute the integral of the form $\int_{0}^{T}B_{G_{r}}(s,T)^2ds$, where $B_{G_{r}}(s,T)$ is defined by \eqref{eq:BGr} and admits an explicit expression for some well-known kernels as revealed by Table \ref{tab:Resolvents_example}. However, except for the indicator or exponential kernels, the integral $\int_{0}^{T}B_{G_{r}}(s,T)^2ds$ does not have an explicit expression and needs to be computed numerically.
\end{rem}

Let us now discuss the pricing of equity derivatives. From Section
\ref{sec:forward_measure}, we know that the arbitrage-free
price, at time $t\in[0,T]$, of a financial derivative $H_{T}=h(I_{T})\in L^{2}$
is given by
\[
V_{t}=P(t,T)\:E^{\mathbb{\mathbb{Q}}^{T}}\bigg(h(e^{\log I_{T}^{T}})\:|\mathcal{F}_{t}\bigg),
\]
where $P(t,T)$ is the price of a zero-coupon bond of maturity $T$.
Exploiting the Fourier link between the density function and the characteristic
function, we can deduce how to compute the price
of vanilla options such as call and put options using a Fourier method.
For $u\in\mathbb{C}$, let us consider $\varphi_{t}^{u}$ defined
by 
\[
\varphi_{t}^{u}=\E^{\mathbb{Q}^{T}}\bigg[\exp\bigg(iu\log\frac{I_{T}^{T}}{I_{t}^{T}}\bigg)\bigg|\mathcal{F}_{t}\bigg],
\]
then, from \cite{key-18}, the price at time $t\in[0,T],$ of a call
option $(I_{T}-K)_{+}$ is given by 
\begin{equation}
V_{t}=P(t,T)\:\bigg[I_{t}^{T}-\frac{K}{\pi}\int_{0}^{+\infty}\mathfrak{Re}\bigg(e^{(iu+1/2)k_{t}}\varphi_{t}^{u-i/2}\bigg)\:\frac{du}{u^{2}+1/4}\bigg],\label{eq:lewis_price_call}
\end{equation}
where $k_{t}:=\log\frac{I_{t}^{T}}{K}$. Using a numerical integration
of \eqref{eq:lewis_price_call}, we can efficiently price call options
as well as calibrate the model to equity market data. In this paper,
we use the Gauss-Laguerre quadrature numerical integration which has
been demonstrated to be efficient in the context of option pricing
(see \cite{key-7}). In practice, for a general kernel function $G_{\nu}(.)$,
such as the fractional or path-dependent kernels, we have to approximate
the expression of the characteristic function of $\log\frac{I_{T}^{T}}{I_{t}^{T}}$
that appears in the integral \eqref{eq:lewis_price_call}. To this
end, we  use the discretization approach detailed above.
Note that such approximation procedure is not always necessary. In
fact, as we will see later in Section \ref{subsec:Riccati-equations-for-Markov-vol},
if we consider the Stein-Stein volatility model i.e. $G_{\nu}(t,s)=1_{s<t}$,
the analytic expression of the characteristic function is more explicit
as shown by Corollary \ref{corr:chf_stein-stein}.\\

\begin{algorithm*}
\caption{Lewis pricing with Gauss-Laguerre quadrature}\label{algo:lewis_price_implementation}
\begin{algorithmic}[1]  
\State For a given quadrature level $L\in \mathbb{N}_0$, obtain Gauss-Laguerre quadrature discretization points and weights $(u_j,w_j)_{j=1,...,L}$.
\State For a given $N\in \mathbb{N}_0$ and for each point $u_j$, approximate the charateristic function of the log-forward index ${\varphi}_{0}^{u_j-i/2}$ by $\tilde{\varphi}_{0}^{u_j-i/2}$, using the operator discretization approach \eqref{eq:approx_fractional_kernel_op_disc}. 
\State Approximate the price of call options \eqref{eq:lewis_price_call} such as 
$$V_{0}\approx P(0,T)\:\bigg[I_{0}^{T}-\frac{K}{\pi}\sum_{j=1}^L \mathfrak{Re}\bigg(e^{(iu_j+1/2)k_{0}}\tilde{\varphi}_{0}^{u_j-i/2}\bigg) \:\frac{w_j~e^{u_j}}{u_j^{2}+1/4}\bigg].$$
\end{algorithmic}
\end{algorithm*}

The numerical implementation of the \cite{key-18} pricing method with Gauss-Laguerre quadrature, is summarized in Algorithm \ref{algo:lewis_price_implementation}. Care must be taken when selecting the quadrature level $L$ for approximating the integral. As emphasized by \cite{boyarchenko2024correct}, the same quadrature level cannot be used uniformly across all maturities. In particular, shorter maturities require a higher number of quadrature points. This is due to the fact that the integrand in \eqref{eq:lewis_price_call} decays to zero more slowly for small maturities than for large ones, as illustrated in Figure \ref{fig:integrand_decreasing} for the Stein-Stein model\footnote{In this model, the characteristic function of the log-forward index admits an explicit expression.}. Therefore, in what follows, we adopt $L=60$ for maturities $T<0.25$, and $L=40$ for $T \geq 0.25$. As shown in Figures \ref{fig:GL_pricing_T_002}, \ref{fig:GL_pricing_T_025}, and \ref{fig:GL_pricing_T_1}, the Lewis method with Gauss-Laguerre quadrature produces stable pricing results under the Stein-Stein model with these parameter choices.

The closed form approximate solution of the characteristic function \eqref{eq:approx_fractional_kernel_op_disc} has been proposed in different
papers \cite{key-4,key-5}. However, it is important to note that
theoretically, a general convergence result when $N$ tends to infinity
has not yet been demonstrated. We nevertheless
verify empirically the convergence of this approximation method as
$N\to\infty$. For this purpose, we restrict ourselves to a one-factor
Hull and White model for the interest rate i.e. $G_{r}(t,s)=1_{s<t}$
and we consider the fractional kernel for the volatility $G_{\nu}(t,s)=1_{s<t}\frac{(t-s)^{H_{\nu}-1/2}}{\Gamma(H_{\nu}+1/2)}$.
Without loss of generality, we fix arbitrarily the model parameters
to 
\begin{align*}
 & \kappa_{r}=-0.03,\:\eta_{r}=0.01,\:I_{0}^{T}=100,\:\nu_{0}=0.2,\:\kappa_{\nu}=0,\:\theta_{\nu}=0.1,\:\sigma_{\nu}=0.2,\\
 & \rho_{I\nu}=-0.7,\:\rho_{Ir}=-0.25,\:\rho_{\nu r}=-0.25.
\end{align*}

Table \ref{tab:op_disc_error} compares the error committed when approximating the characteristic function of the log-forward index in the Gauss-Laguerre numerical integration for $H_\nu=0.5$. We observe that the error rapidly decreases as the number of discretizations $N$ increases. Moreover, Figure \ref{fig:Implied-volatility-dynamics_op_discr} presents the
implied volatility dynamics generated by the operator discretization for
$H_{\nu}=0.5$ and Figure \ref{fig:Implied-volatility-dynamics_op_discr_H_03} for $H_{\nu}=0.3$. As expected, we observe
a relative fast convergence as the number of discretization factors
$N$ increases. 

\begin{table}[h!]
\centering
\label{tab:mytable}
    \begin{tabular}{|c||c|c|c||c|c|c|}
    \hline
    & \multicolumn{3}{c||}{$T=0.05$} & \multicolumn{3}{c|}{$T=0.25$} \\
    \hline
    $k$\textbackslash $N$ & $10$ & $40$ & $100$ & $10$ & $40$ & $100$ \\ \hline
    $0$ & $1.06\times 10^{-4}$ & $5.65\times 10^{-5}$ & $4.71 \times 10^{-5}$ & $8.08 \times 10^{-4}$ & $2.58 \times 10^{-4}$ & $1.52 \times 10^{-4}$ \\
    $0.2$ & $4.36 \times 10^{-6}$ & $1.25 \times 10^{-6}$ & $5.73 \times 10^{-7}$ & $1.34 \times 10^{-3}$ & $3.55 \times 10^{-4}$ & $1.57 \times 10^{-4}$ \\
    $-0.2$ & $1.07 \times 10^{-7}$ & $1.68 \times 10^{-11}$ & $3.91 \times 10^{-12}$ & $2.58 \times 10^{-4}$ & $1.45 \times 10^{-5}$ & $3.25 \times 10^{-6}$ \\
    \hline
    \end{tabular}
    \label{tab:op_disc_error}
    \caption{$\left|\sum_{j=1}^L \left[\mathfrak{Re}\left(e^{(iu+1/2)k}\tilde{\varphi}_{0}^{u_j-i/2}\right)-\mathfrak{Re}\left(e^{(iu+1/2)k}{\varphi}_{0}^{u_j-i/2}\right)\right]\:\frac{w_j~e^{u_j}}{u_j^{2}+1/4}\right|$, for $H_\nu=0.5$.}
\end{table}

\begin{figure}[H]
\centering{}\includegraphics[width=0.5\textwidth]{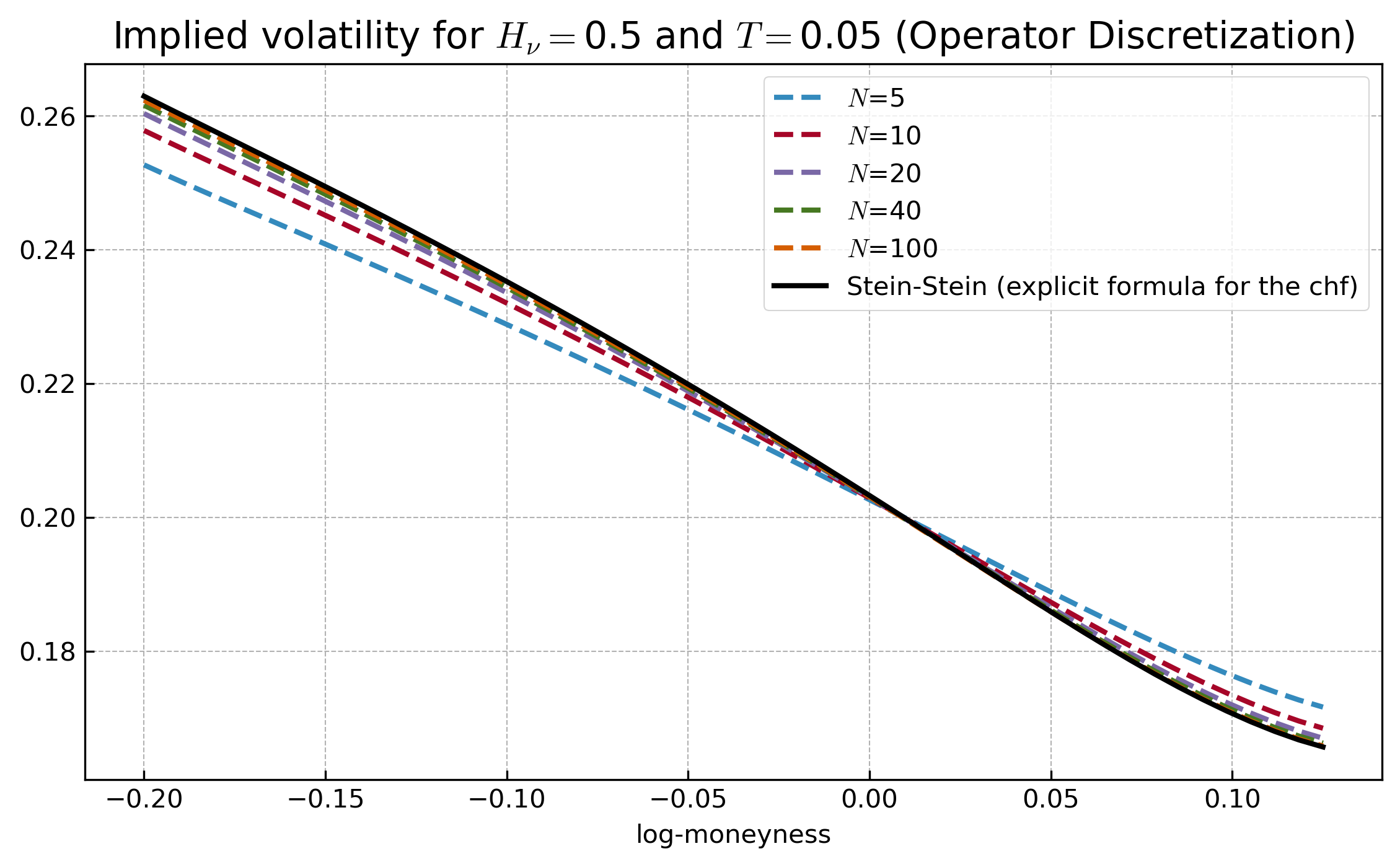}\includegraphics[width=0.5\textwidth]{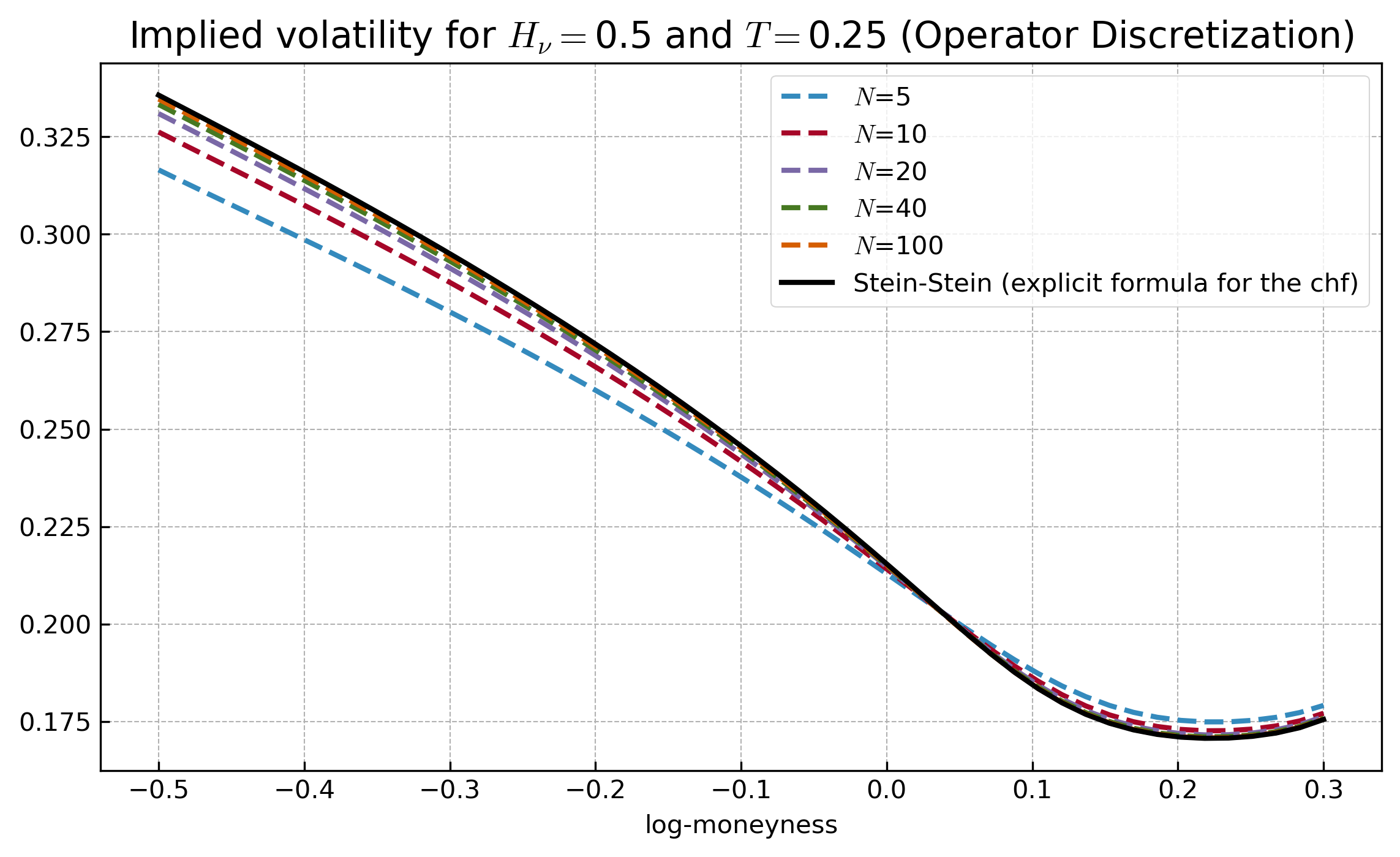} \\ \includegraphics[width=0.5\textwidth]{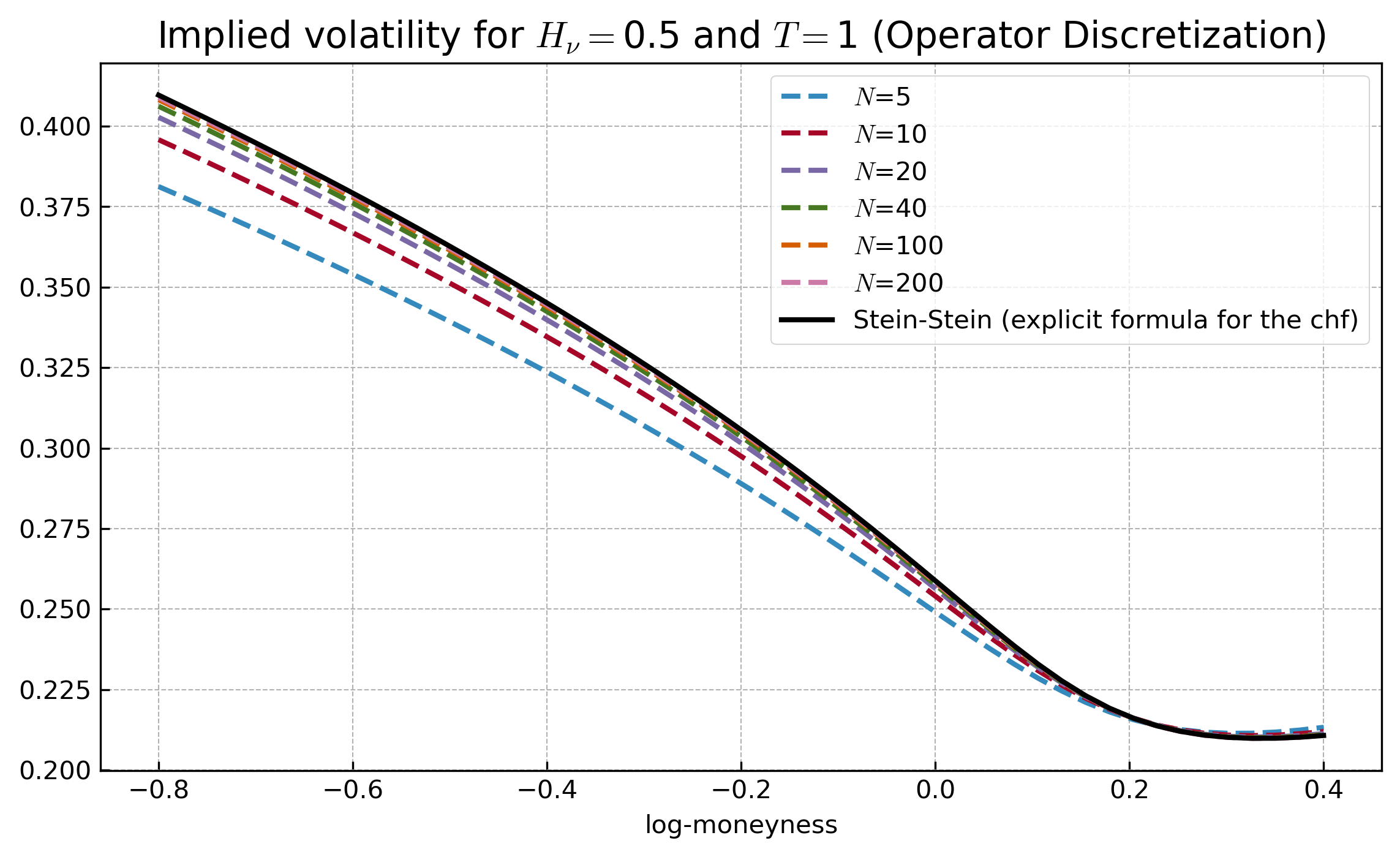} \caption{\protect\label{fig:Implied-volatility-dynamics_op_discr}Implied volatility
dynamics generated by the operator discretization method with $H_{\nu}=0.5$
and the Stein-Stein model.}
\end{figure}

\begin{figure}[H]
\centering{}\includegraphics[width=0.5\textwidth]{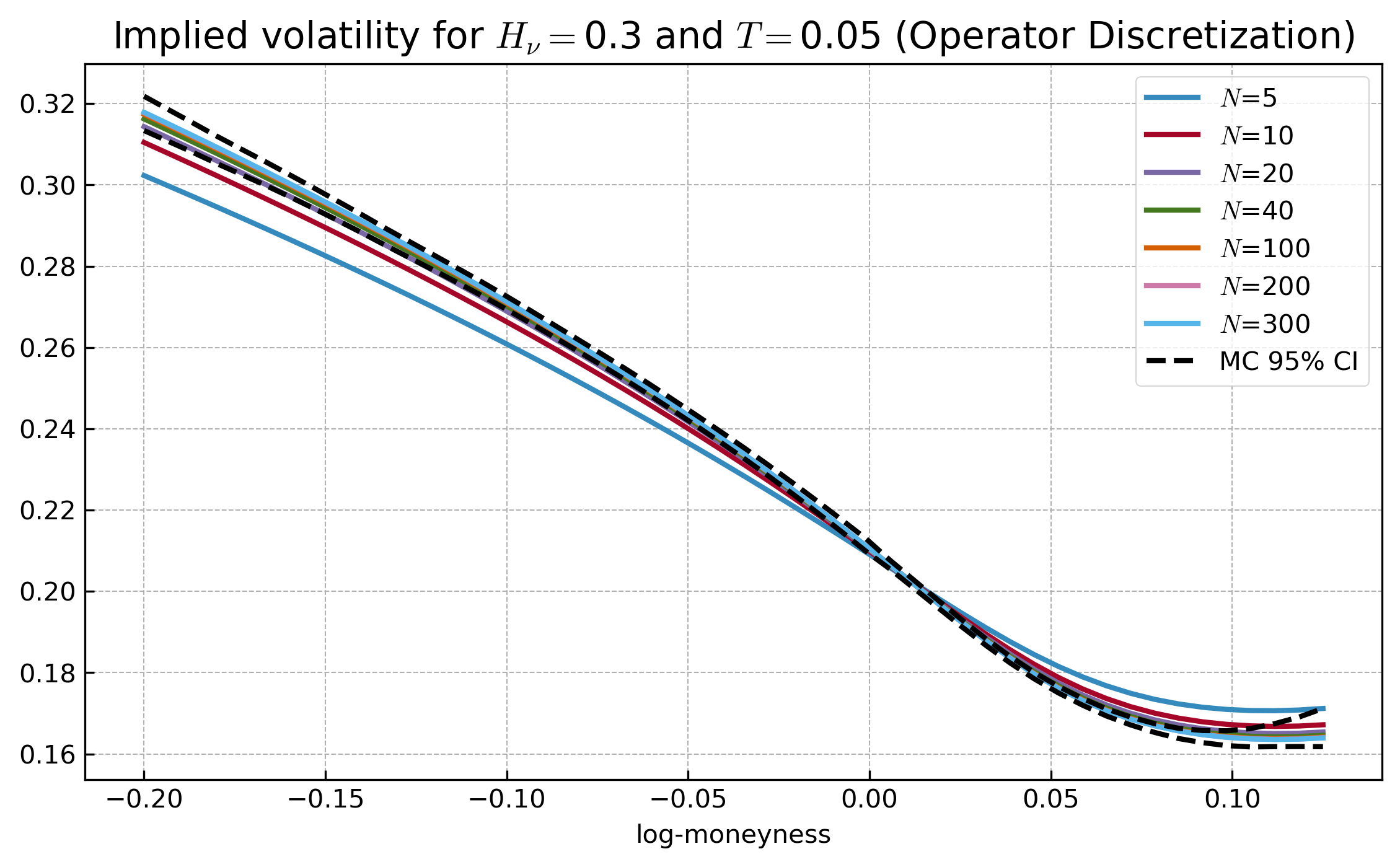}\includegraphics[width=0.5\textwidth]{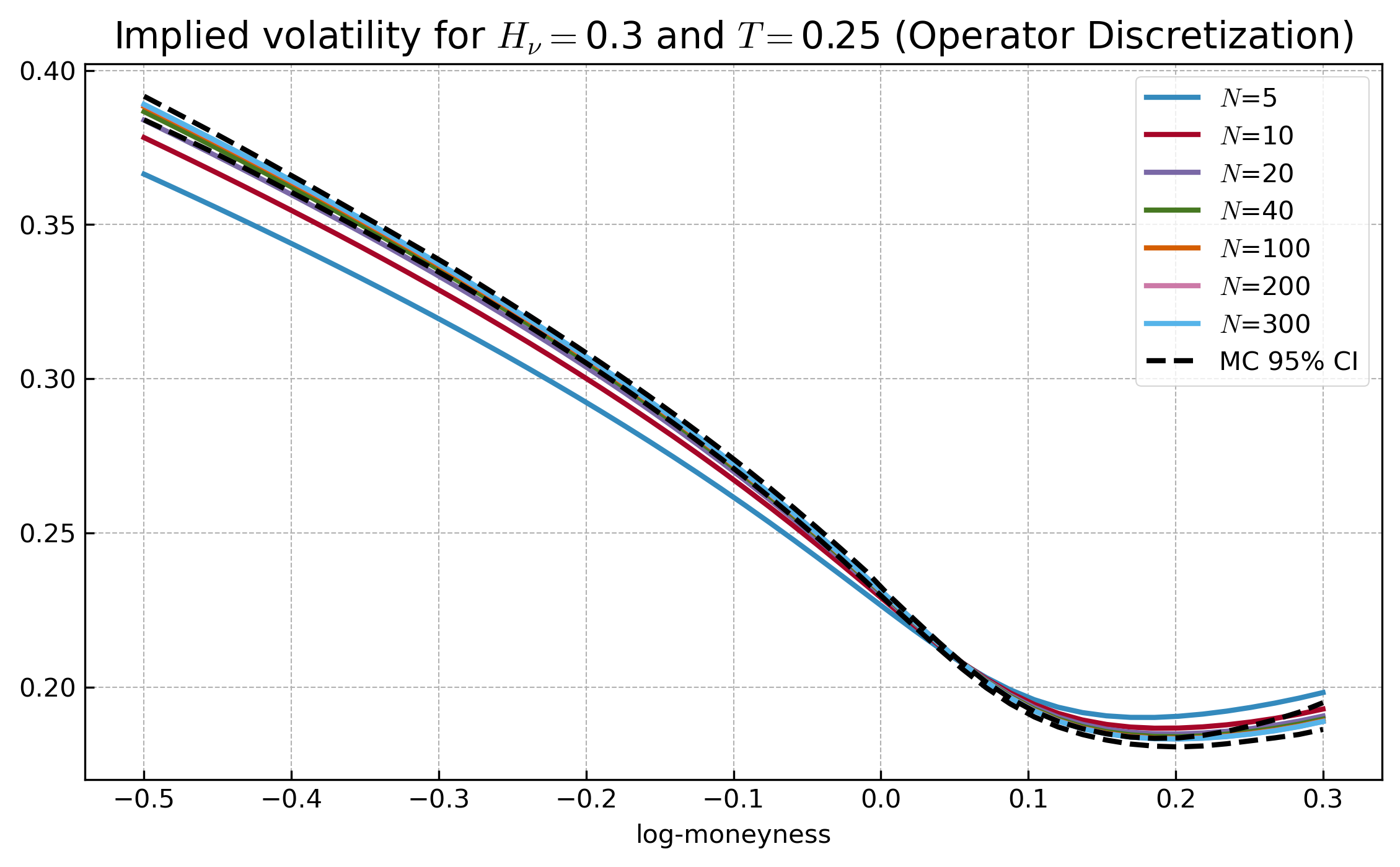} \\\includegraphics[width=0.5\textwidth]{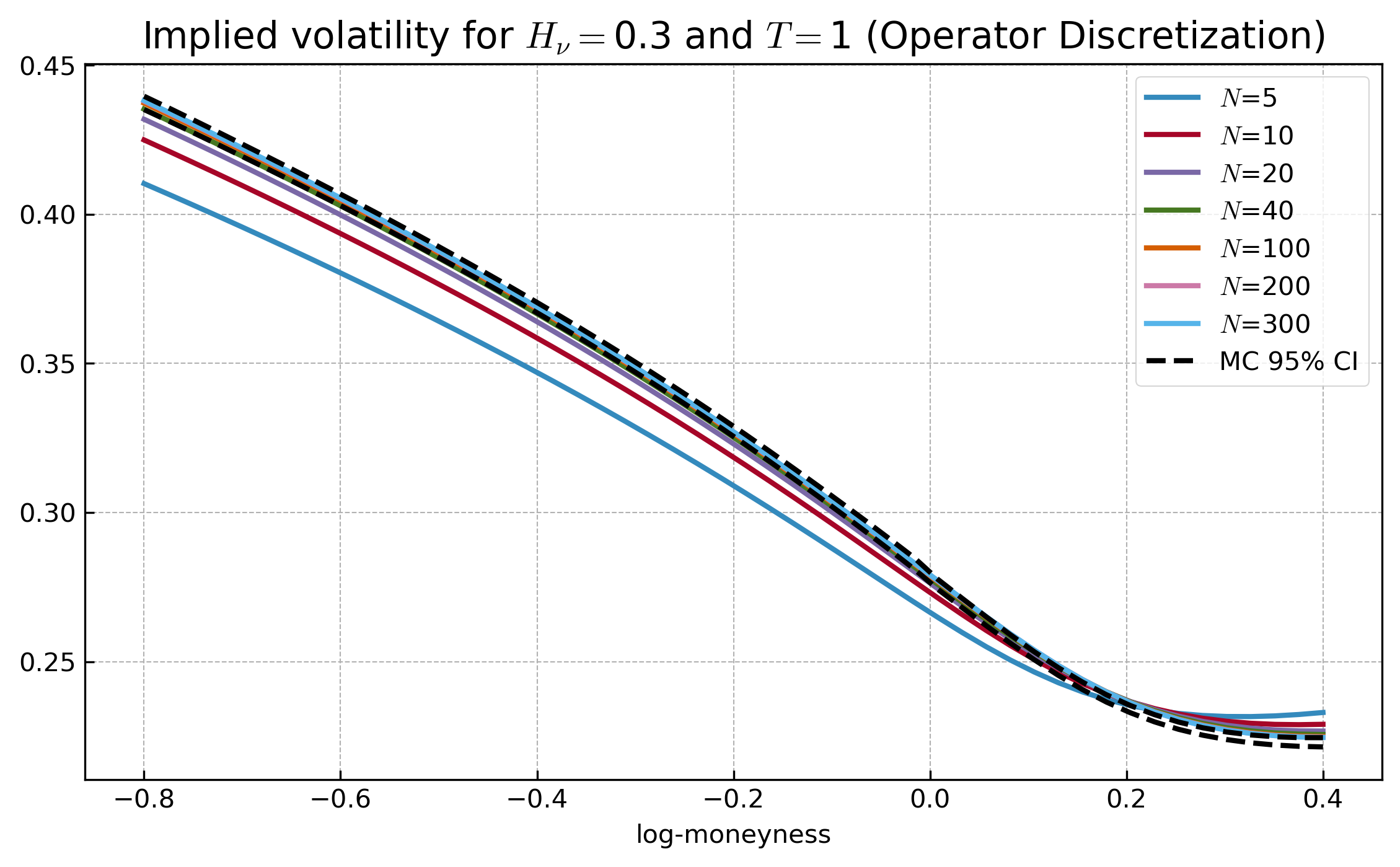}\caption{\protect\label{fig:Implied-volatility-dynamics_op_discr_H_03} Implied volatility dynamics generated
by the operator discretization method with $H_{\nu}=0.3$.
Monte Carlo confidence intervals are generated with $200\:000$ simulations
of risk processes using a Euler scheme with $500$ as discretization steps.}
\end{figure}

\section{Calibration to market data\protect\label{sec:Numerical-illustration-in}}

In this section, we highlight the relevance of our framework for calibration to 
market data. To this end, we place ourselves in an equity market context
where the process $(I_{t})_{0\leq t\leq T}$ represents an equity
stock index. Our calibration instruments from  market data, are  interest rate options, caps and floors, 
as well as equity vanilla call and put options. Let us now calibrate the models to the market data in order to validate
the relevance of the introduced framework. To this end, we consider
market data of 25/08/2022 consisting of USD 3M Libor yield curve and
ATM USD cap implied volatility data from Bloomberg, and S\&P500 implied
volatility data purchased from the CBOE website\footnote{https://datashop.cboe.com/.}.
The calibration procedure takes place in two stages. First we calibrate
the interest rate parameters using the USD 3M Libor yield curve and
ATM cap data. Then, using these calibrated interest rate parameters,
we calibrate the volatility and correlation parameters based on S\&P500
implied volatility data. \\

\begin{figure}[h!]
\centering{}\includegraphics[width=0.55\textwidth]{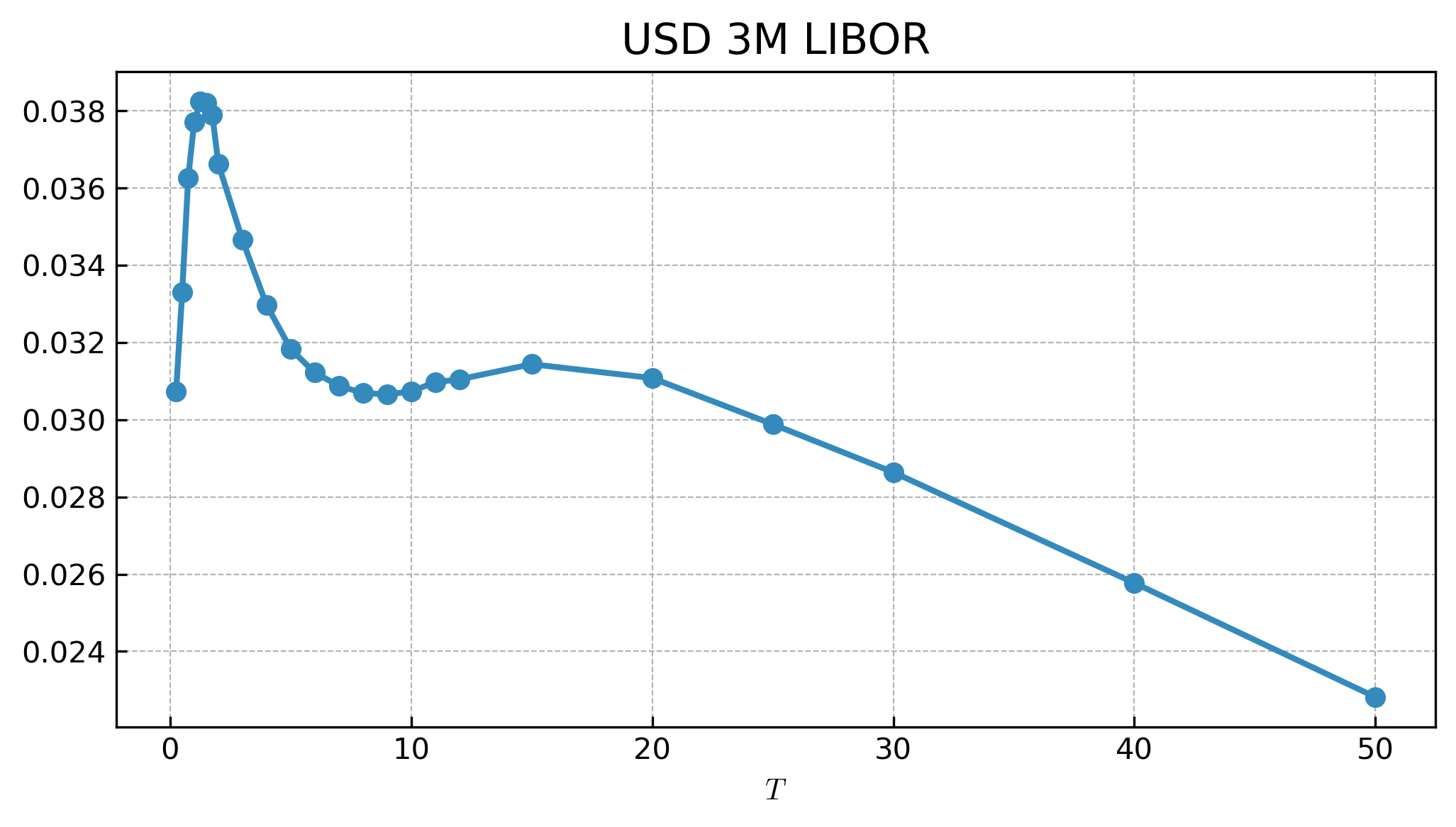}\caption{\protect\label{fig:USD-OIS-curve}USD 3M Libor yield curve of 25/08/2022.}
\end{figure}

\begin{figure}[h!]
\centering{}\includegraphics[width=0.5\textwidth,totalheight=0.22\textheight]{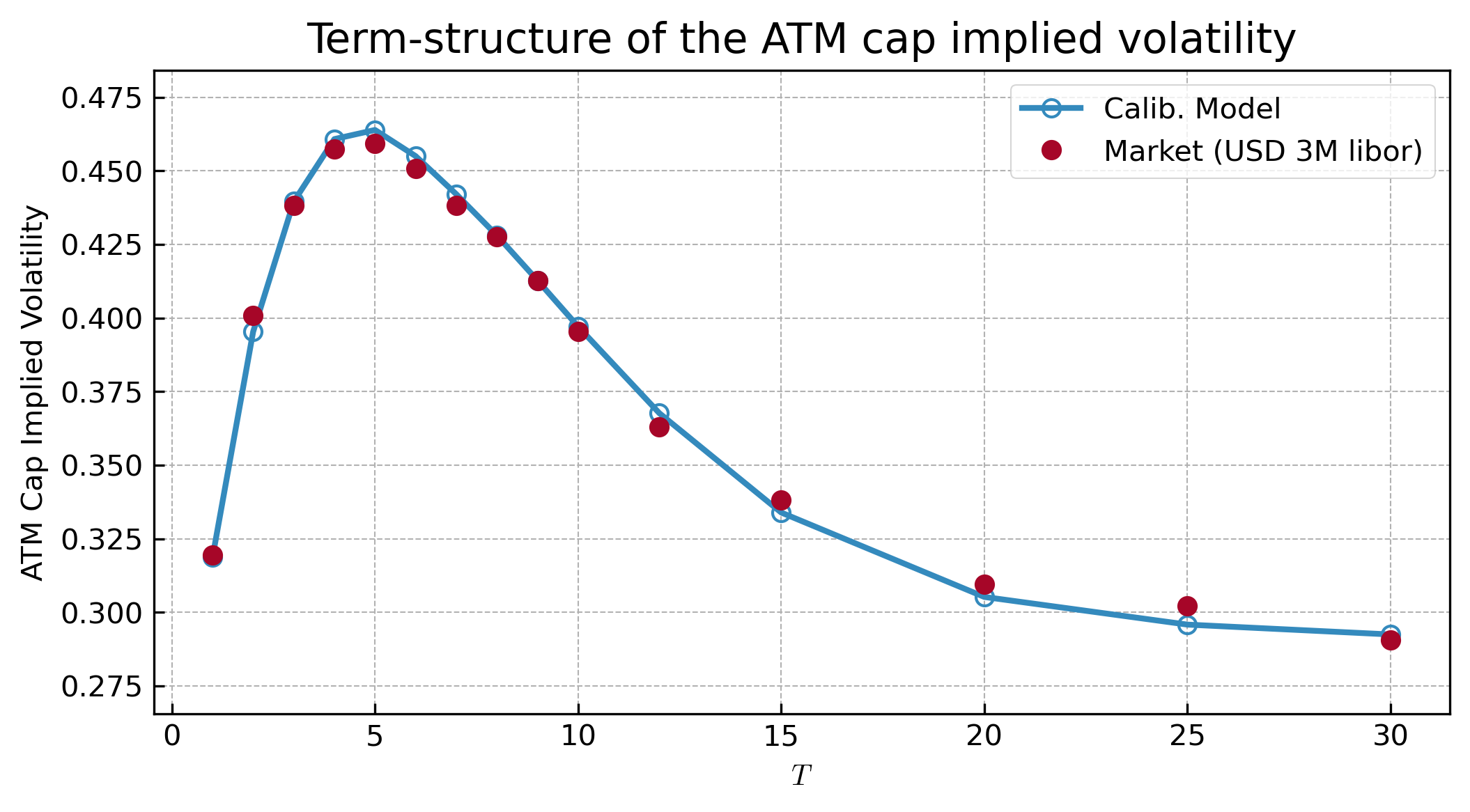}\\
\includegraphics[width=0.5\textwidth,totalheight=0.22\textheight]{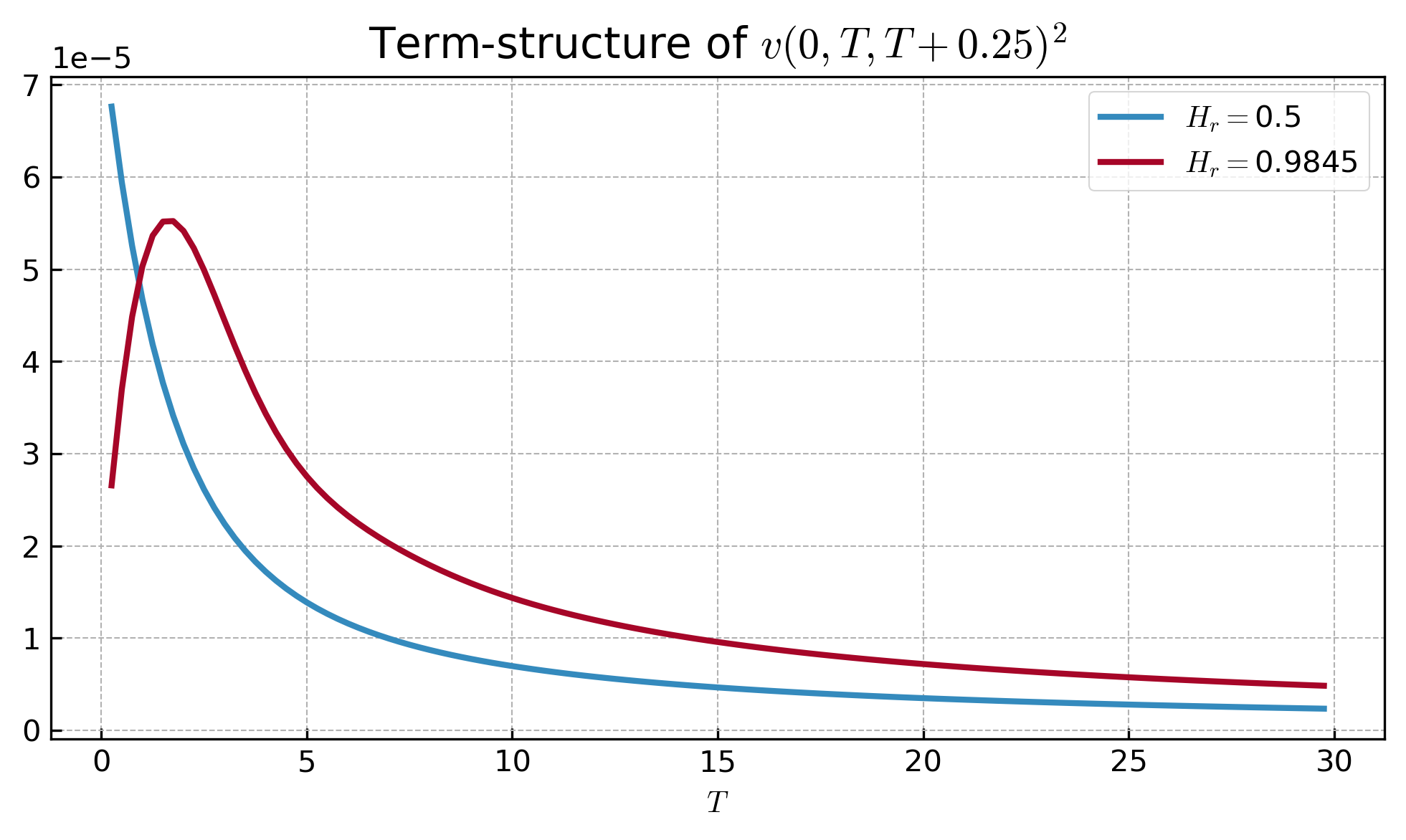}\includegraphics[width=0.5\textwidth,totalheight=0.22\textheight]{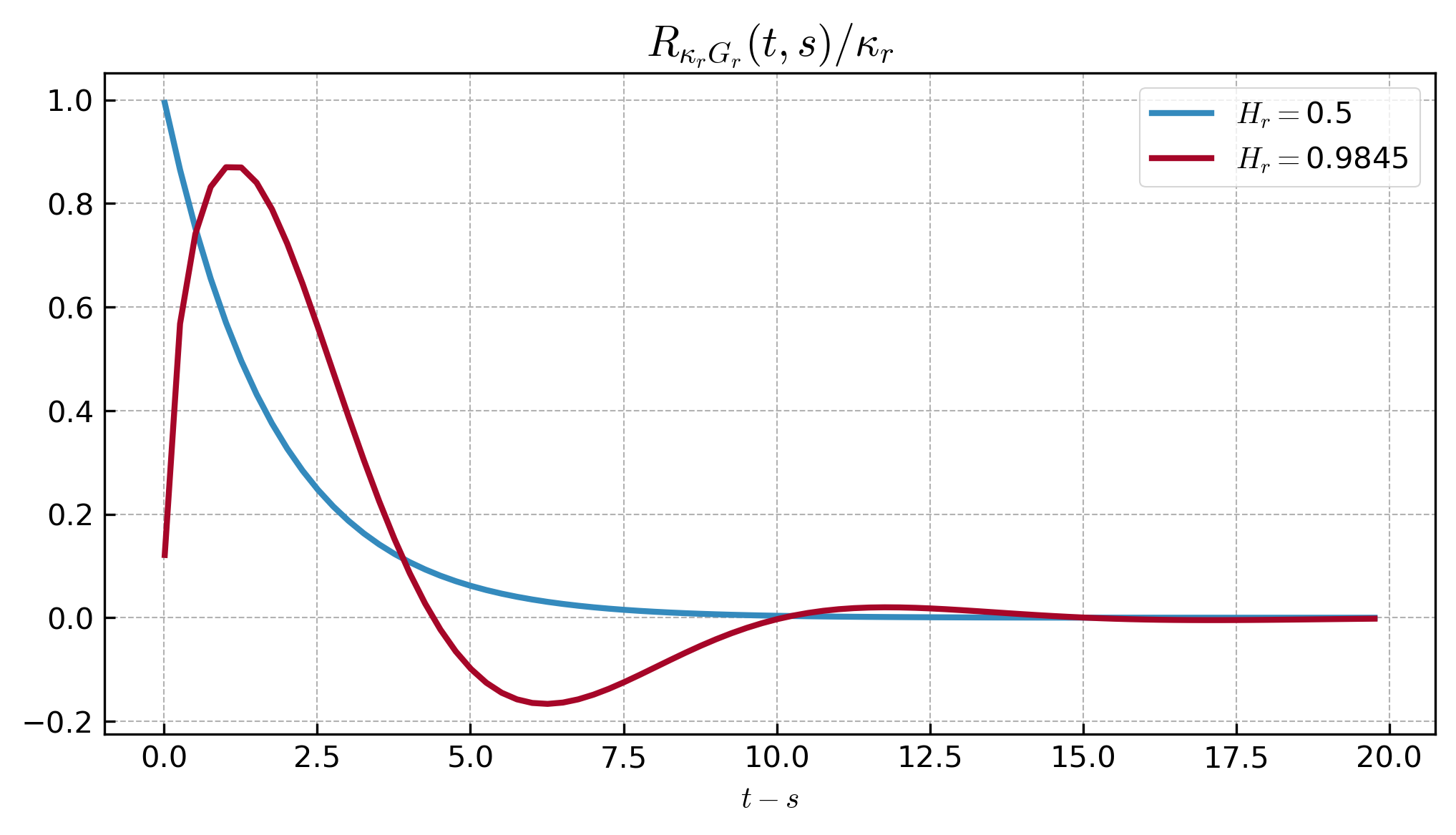}\caption{\protect\label{fig:USD-cap-implied}ATM cap implied volatility for
USD 3M Libor: calibrated fractional model vs market data of 25/08/2022.
RMSE: 0.3663\%. Calibrated parameters: $\hat{\kappa}_{r}=-0.5566,$
$\hat{\eta}_{r}=0.0377$ and $\hat{H}_{r}=0.9845$ (top); term-structure
of the ZC bond option pricing variance \eqref{eq:variance_ZC_bond_option_pricing}
generated with the calibrated parameters (bottom left); evolution
of the resolvent associated to fractional kernel (bottom right).}
\end{figure}

For the interest model, we consider the fractional kernel of the form
$G_{r}(t,s)=1_{s<t}\frac{(t-s)^{H_{r}-1/2}}{\Gamma(H_{r}+1/2)}.$ The data
we have for the calibration are the USD 3M Libor yield curve and the
ATM USD cap implied volatility for annual maturities ranging from
1 year to 30 years. Firstly, using \eqref{eq:link_r0_ZC} and the
USD 3M Libor yield curve (Figure \ref{fig:USD-OIS-curve}), the input
curve $r_{0}(t)$ can deduced such that $P(0,T)=P^{Market}(0,T).$
Then, since cap options are sum of zero-coupon bond options, we use
Proposition \ref{prop:pricing_ZC_options} and calibrate the interest
rate parameters by minimizing the root square error (RMSE) between
model and market ATM cap implied volatility. Results of this calibration
are displayed on Figure \ref{fig:USD-cap-implied} with the corresponding
calibrated parameters. The RMSE is 0.3663\% and we observe an excellent
fit of the data with $\hat{H}_{r}>1/2$ as the calibrated model perfectly
reproduces the humped shape of the market ATM cap implied volatility
term-structure with only three parameters. As revealed by Figure \ref{fig:USD-cap-implied},
this can be explained by the fact that the term-structure of the ZC
bond options pricing variance (\ref{eq:variance_ZC_bond_option_pricing}),
with $S=T+1/4,$ produces a humped shape\footnote{Due to humped behavior of the resolvent $\frac{R_{\kappa_{rG_{r}}}(t,s)}{\kappa_{r}}$
that drives $B_{G_{r}}$ and thus the pricing variance.} when $\kappa_{r}<0$ and $H_{r}>1/2$, which is totally not the case
when considering the Hull-White model (i.e. $H_{r}=1/2)$. Moreover,
as expected and revealed in Figure \ref{fig:Sample_auto_corr_r},
$H_{r}>1/2$ generates a smoother sample path than $H_{r}=1/2$, and
exhibits long-range dependencies. To check that long-term dependence makes sense, we consider historical data on the USD 3M Libor rate (daily close values) from 01/06/2021 to 30/09/2024, and examine the empirical auto-correlation structure. As we can see in Figure \ref{fig:Sample_auto_corr_r}, the USD rate exhibits strong persistence, which is perfectly in line with previous studies (see \cite{key-10,key-19}), and emphasizes that considering long-range dependence models seems appropriate for modeling the dynamics of short rates. Of course, other interest rate models can also capture the humped shape of the ATM cap implied volatility, see for instance \cite[Section 3]{key-9}; \cite{levendorskii2005pseudodiffusions, boyarchenko2007eigenfunction}. However, we want to stress that our model is very parsimonious since we perfectly capture the humped shape with only three parameters. Moreover, it is highly tractable, can account for long-range dependencies, and appears very suitable for a joint equity-rate calibration.

\begin{figure}[H]
\begin{centering}
\includegraphics[width=0.55\textwidth]{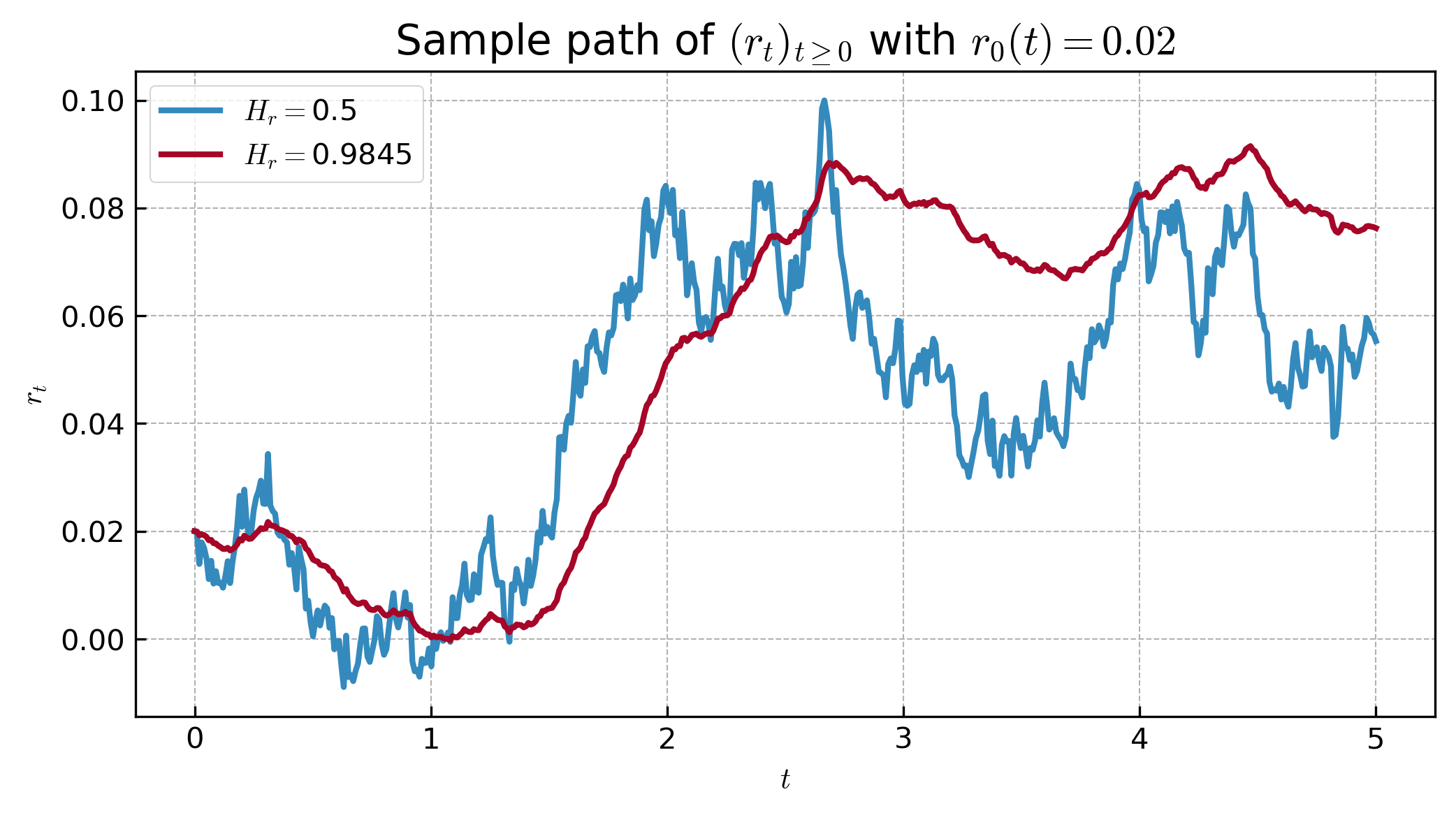}\includegraphics[width=0.55\textwidth]{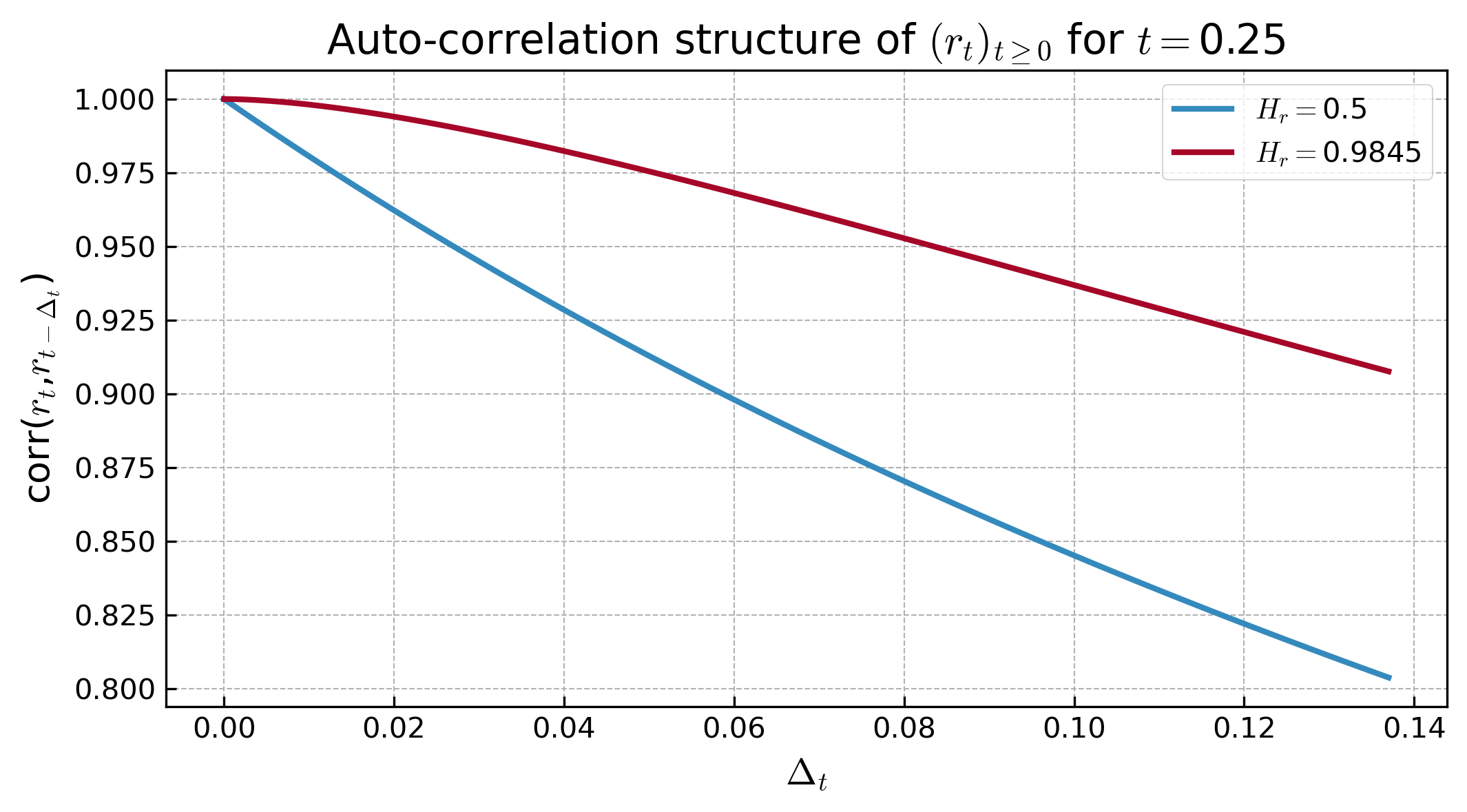}\\
\includegraphics[width=0.55\textwidth]{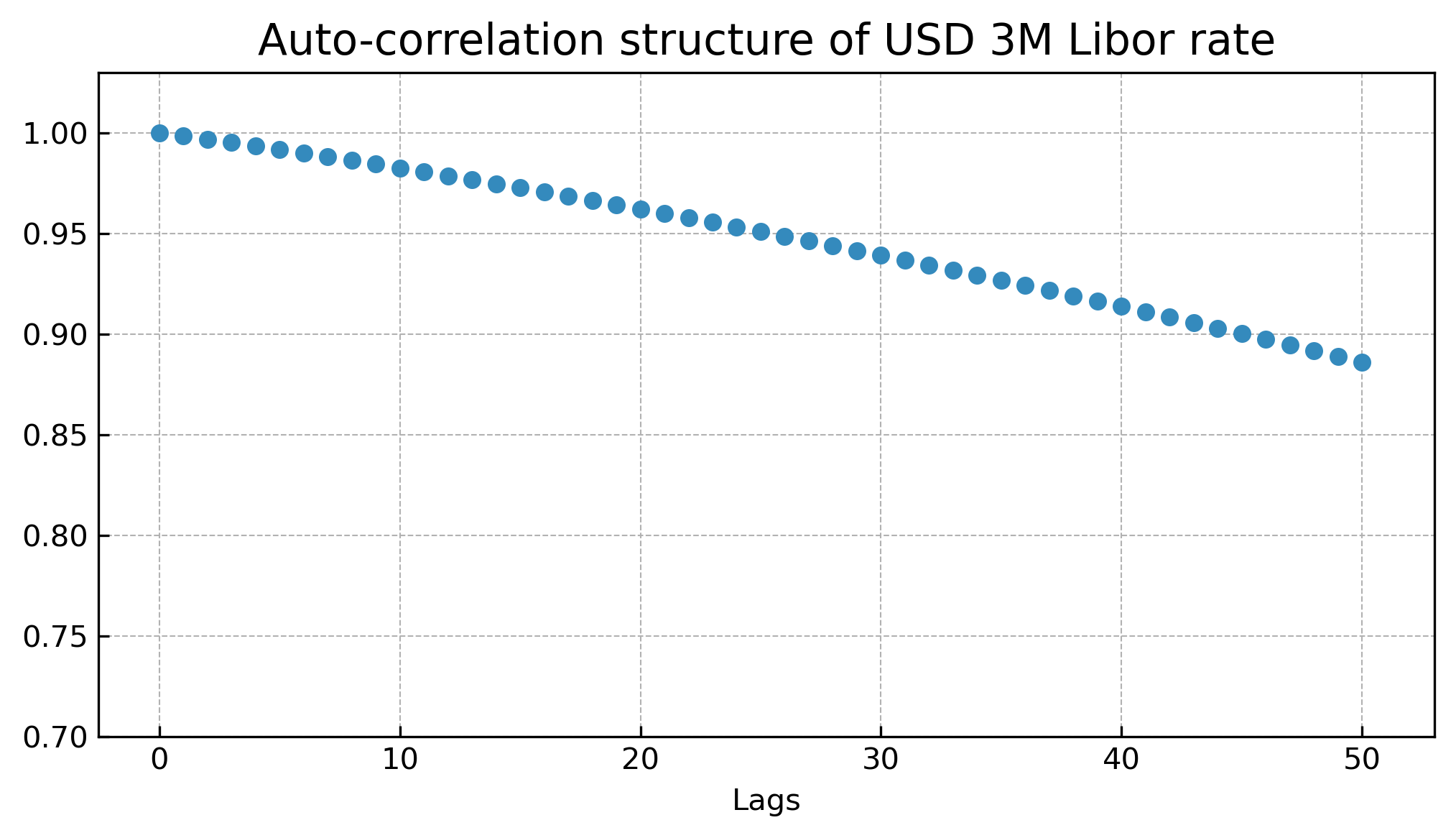}
\par\end{centering}
\caption{\protect\label{fig:Sample_auto_corr_r}Sample path of $(r_{t})_{t\geq0}$ generated with the calibrated parameters (upper left); Theoretical auto-correlation structure of $(r_{t})_{t\geq0}$ with the calibrated parameters (upper right); Empirical auto-correlation structure of USD 3M Libor rates (bottom). }
\end{figure}

Let us now consider the calibration of the volatility and correlation
parameters. To this end, we consider the fractional kernel of the
form $G_{\nu}(t,s)=1_{s<t}\frac{(t-s)^{H_{\nu}-1/2}}{\Gamma(H_\nu+1/2)}$ as
well as the shifted fractional kernel of the form $G_{\nu}(t,s)=1_{s<t}\frac{(t-s+\varepsilon)^{H_{\nu}-1/2}}{\Gamma(H_\nu+1/2)},$
with $\varepsilon=1/52.$ The data we have for the calibration are
the S\&P500 implied volatility data for different maturities ranging
from 1 week to 1.82 years and different strikes. Using the operator
discretization method with $N=40$ to approximate the characteristic
function of the log-forward index and a Fourier method, we can price call options efficiently. Therefore, we calibrate the volatility and
correlation parameters by minimizing the RMSE between model and S\&P500
implied volatility. To have a more parsimonious model, we decide to
fix $\kappa_{\nu}=0$ and $\rho_{r\nu}=0$. We prefer to capture the
correlation between the index and the interest rate $\rho_{Ir}$ rather
than the correlation $\rho_{r\nu}$, but we have a leverage of flexibility
if we decide to calibrate $\rho_{r\nu}$ as well. \\

Results of this
calibration are displayed on Figure \ref{fig:S=000026P500-implied-volatility_fractional}, \ref{fig:S=000026P500-implied-volatility_Path-dependent} and \ref{fig:ATM_IV_term_structure}, we
observe that, for both kernels, $\hat{H}_{\nu}<1/2$. The RMSE are
0.5321\% for the fractional kernel and 0.4602\% for the shifted fractional
kernel. Overall, the calibration is good for both kernels, especially
around the ATM, yet the shifted fractional kernel outperforms the fractional
kernel. Moreover, as we can observe on Figure \ref{fig:S=000026P500-ATM-skew},
the calibrated shifted fractional kernel reproduces the term structure of the ATM skew with a concave shape on the log-log scale for the chosen date, which is not the case for the fractional kernel. The power law decay for low maturities is not compatible with the concave shape of the ATM skew term structure, which is why the fractional kernel fails to reproduce the market term structure. We also observe that $\hat{H}_\nu^{\text{Shifted fractional}}<\hat{H}_\nu^{\text{Fractional}}$, indicating that the shifted kernel exhibits a relative faster decrease for longer maturities while $\varepsilon=1/52$ prevents it from having an exploding skew for shorter maturities, allowing it to better capture the concave shape of the skew (in log-log scale). Our results demonstrate that the rough model underperforms the non-rough model associated with the shifted fractional kernel in capturing the entire volatility surface, confirming the findings of \cite*{abi2024volatility, delemotte2023yet, guyon2023volatility}. To validate that the shapes of the implied volatilities generated with $N=40$ for the calibrated parameters are accurate, and to avoid drawing spurious conclusions, we compare them to the Monte Carlo confidence intervals. The results can be seen in Figures \ref{fig:IV_calib_vs_MC_fract} and \ref{fig:IV_calib_vs_MC_PD}, and as we can observe, the shapes obtained with $N=40$ are consistent with the confidence intervals. \\

We also observe that the calibrated correlations between interest rate and index processes
$\hat{\rho}_{Ir}$ are significant. To ensure that these estimated values make sense, we looked at the empirical 180 days rolling correlation between the S\&P500 index and the USD 3M Libor rate (daily close values). The results are displayed on Figure \ref{fig:rolling_correlation_SP500_USD_rate}.
We first observe that the correlation is not constant and almost never
zero. Secondly, we notice that, for the calibration date we have chosen,
the empirical rolling correlation is negative. This seems in line with the correlation values
obtained during calibration on the implied volatility surface of the
S\&P500, and thus assuming correlation between processes appears coherent
with market data. 

\begin{figure}[H]
\begin{centering}
\includegraphics[width=1\textwidth]{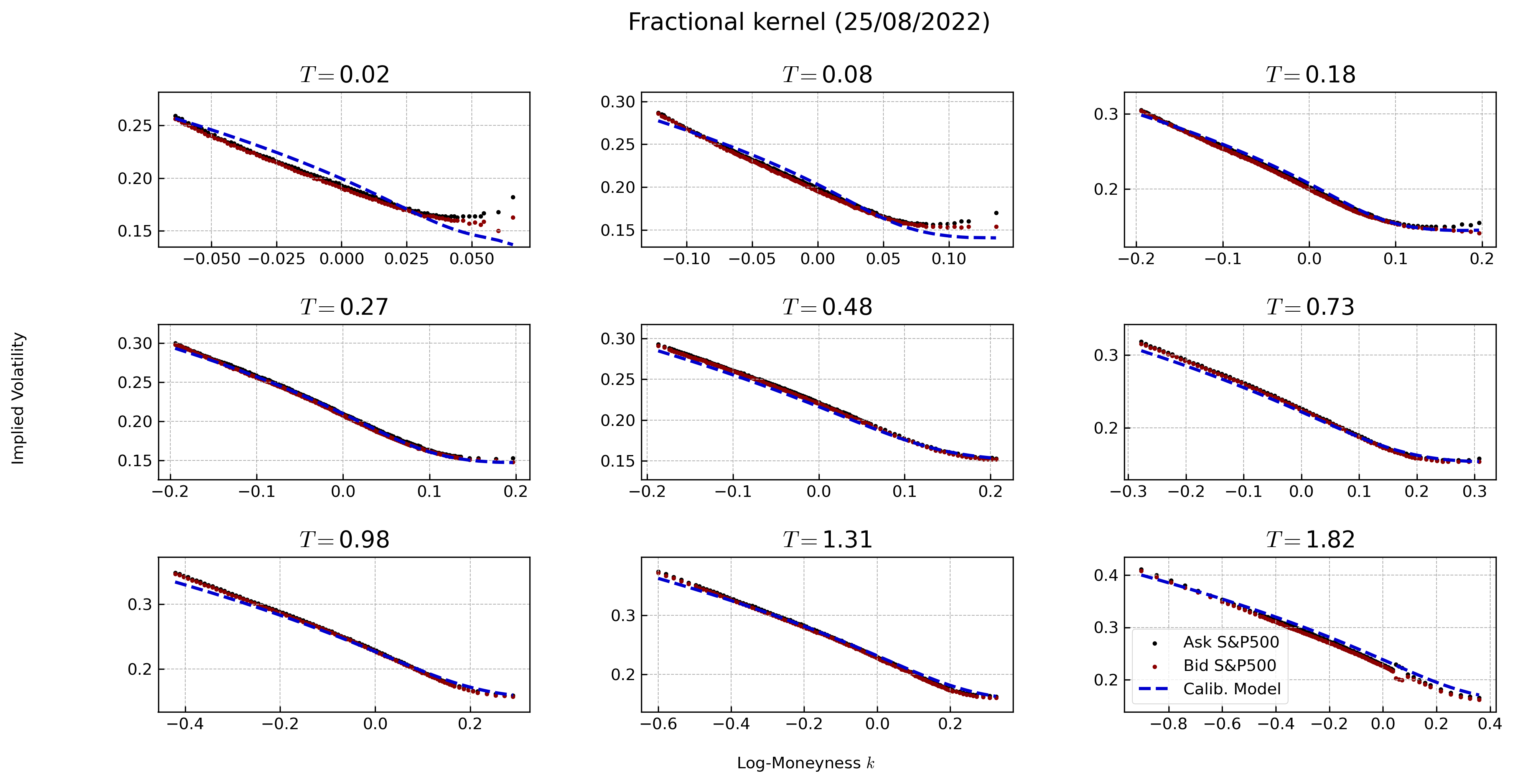}
\par\end{centering}
\caption{\protect\label{fig:S=000026P500-implied-volatility_fractional}S\&P500
implied volatility: calibrated fractional model vs market data of
25/08/2022. RMSE: 0.5321\%. Calibrated parameters: $\hat{\nu}_{0}=0.1964$,
$\hat{\theta}_{\nu}=-0.0248,$ $\hat{\eta}_{\nu}=0.2123,$ $\hat{\rho}_{I\nu}=-0.7981,$
$\hat{\rho}_{Ir}=-0.5971$ and $\hat{H}_{\nu}=0.2992$.}

\end{figure}

\begin{figure}[H]

\begin{centering}
\includegraphics[width=1\textwidth]{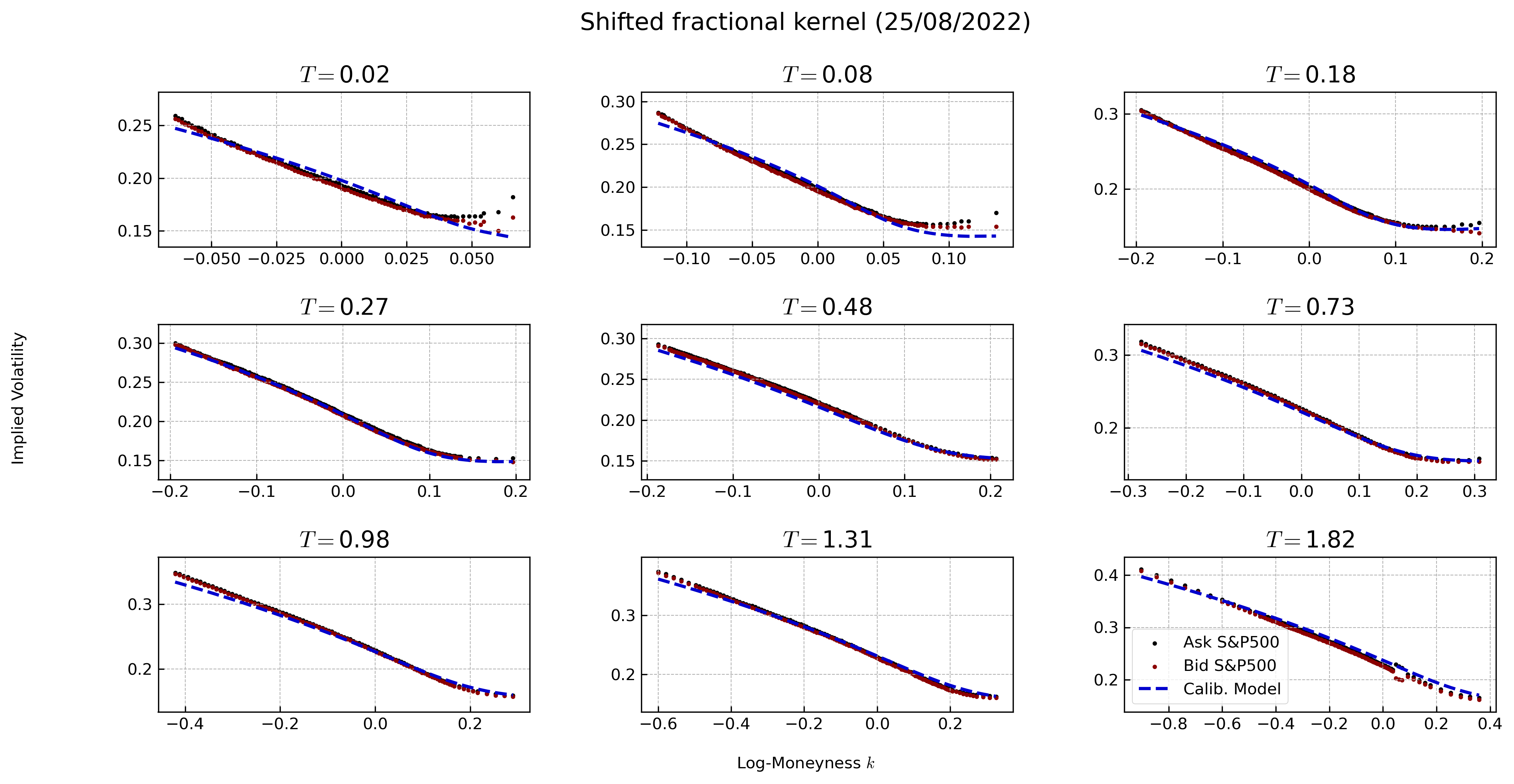}\caption{\protect\label{fig:S=000026P500-implied-volatility_Path-dependent}S\&P500
implied volatility: calibrated shifted fractional model vs market data
of 25/08/2022. RMSE: 0.4602\%. Calibrated parameters: $\hat{\nu}_{0}=0.1978$,
$\hat{\theta}_{\nu}=-0.0259,$ $\hat{\eta}_{\nu}= 0.2164,$ $\hat{\rho}_{I\nu}=-0.7868,$
$\hat{\rho}_{Ir}=-0.6107$ and $\hat{H}_{\nu}=0.2273$.}
\par\end{centering}
\end{figure}

\begin{figure}[H]
\begin{centering}
\includegraphics[width=0.8\textwidth]{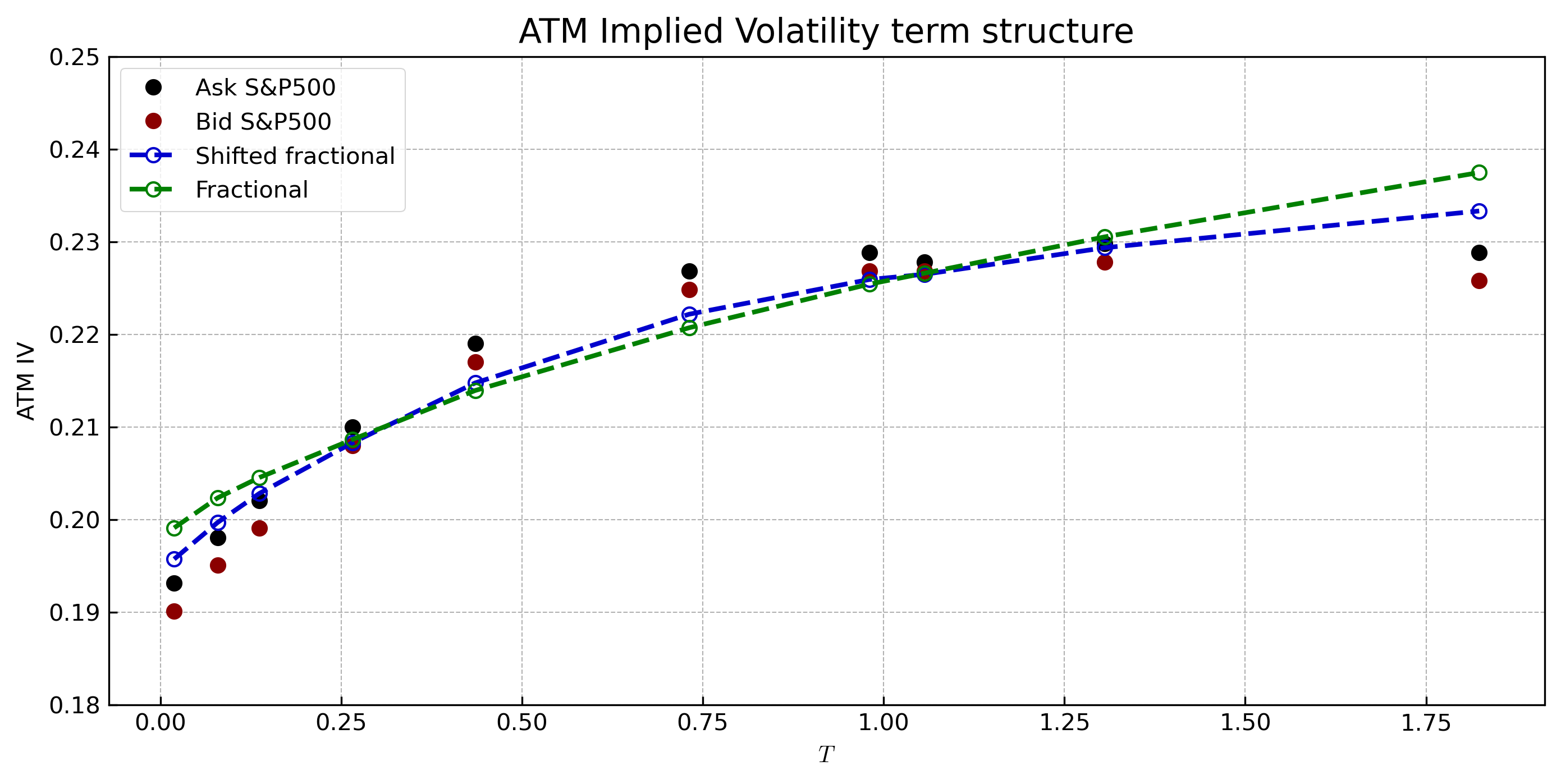}
\par\end{centering}
\centering{}\caption{\protect\label{fig:ATM_IV_term_structure} ATM implied volatility term structure:
calibrated models vs market data of 25/08/2022 with calibrated parameters.}
\end{figure}

\begin{figure}[H]
\begin{centering}
\includegraphics[width=0.75\textwidth]{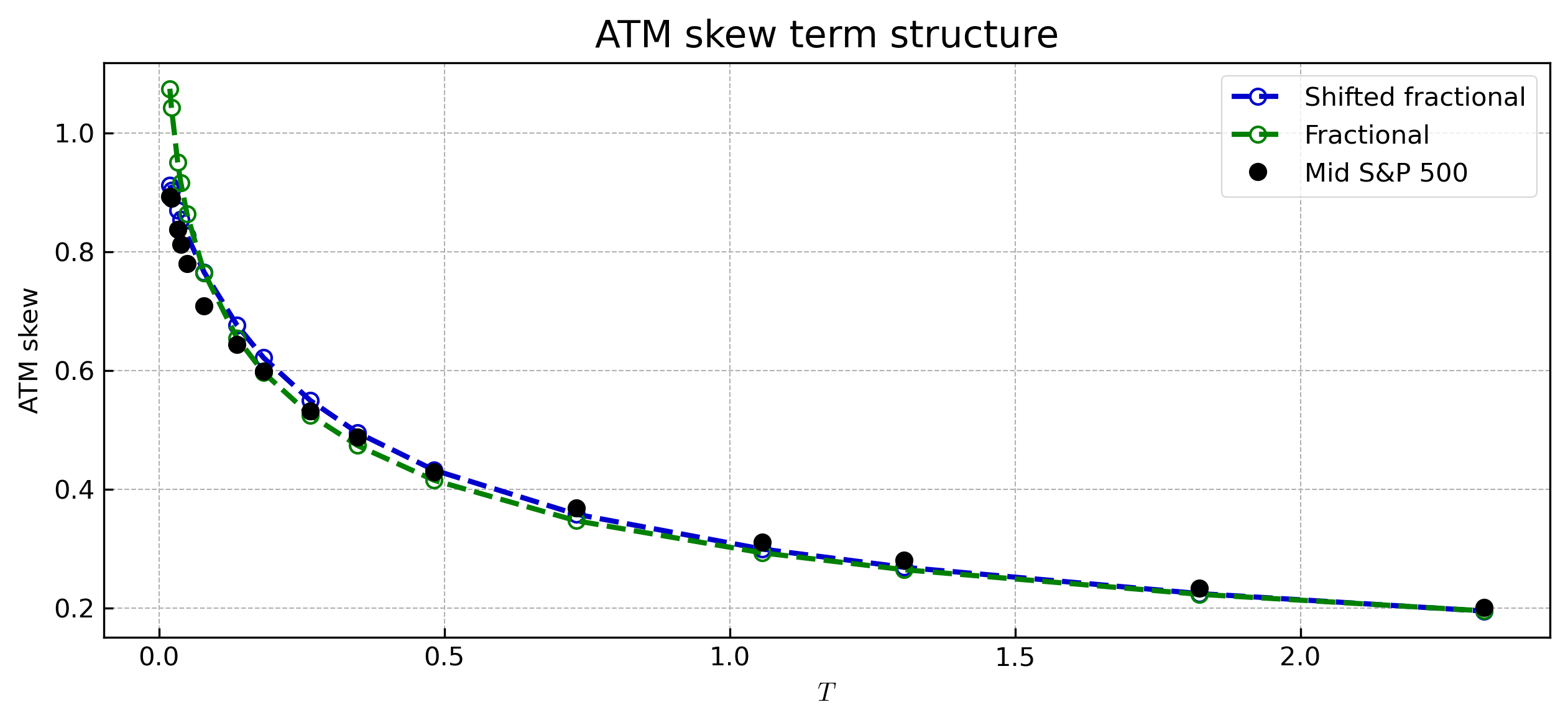}
\includegraphics[width=0.75\textwidth]{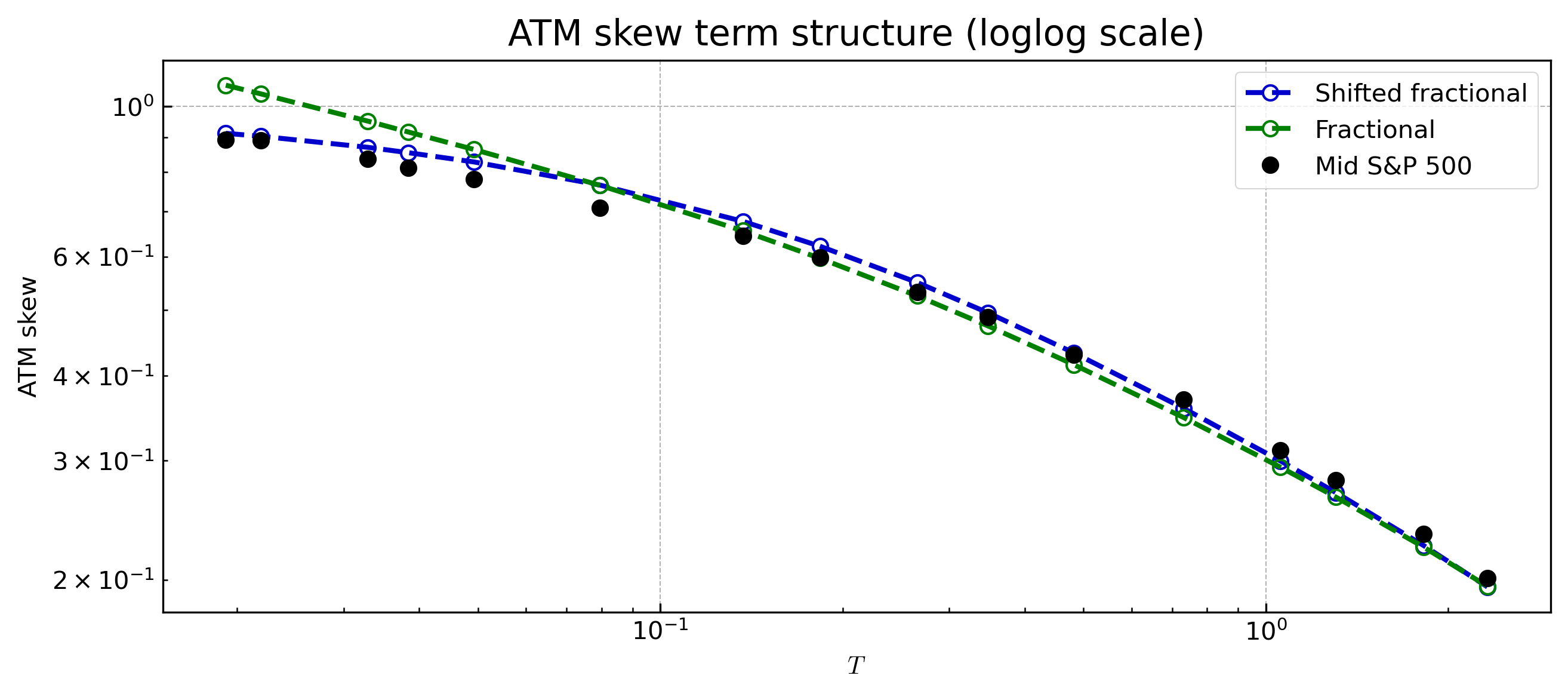}\\
\par\end{centering}
\centering{}\caption{\protect\label{fig:S=000026P500-ATM-skew} ATM skew term structure:
calibrated models vs market data of 25/08/2022 with calibrated parameters.}
\end{figure}

\begin{figure}[H]
\begin{centering}
\includegraphics[width=0.75\textwidth]{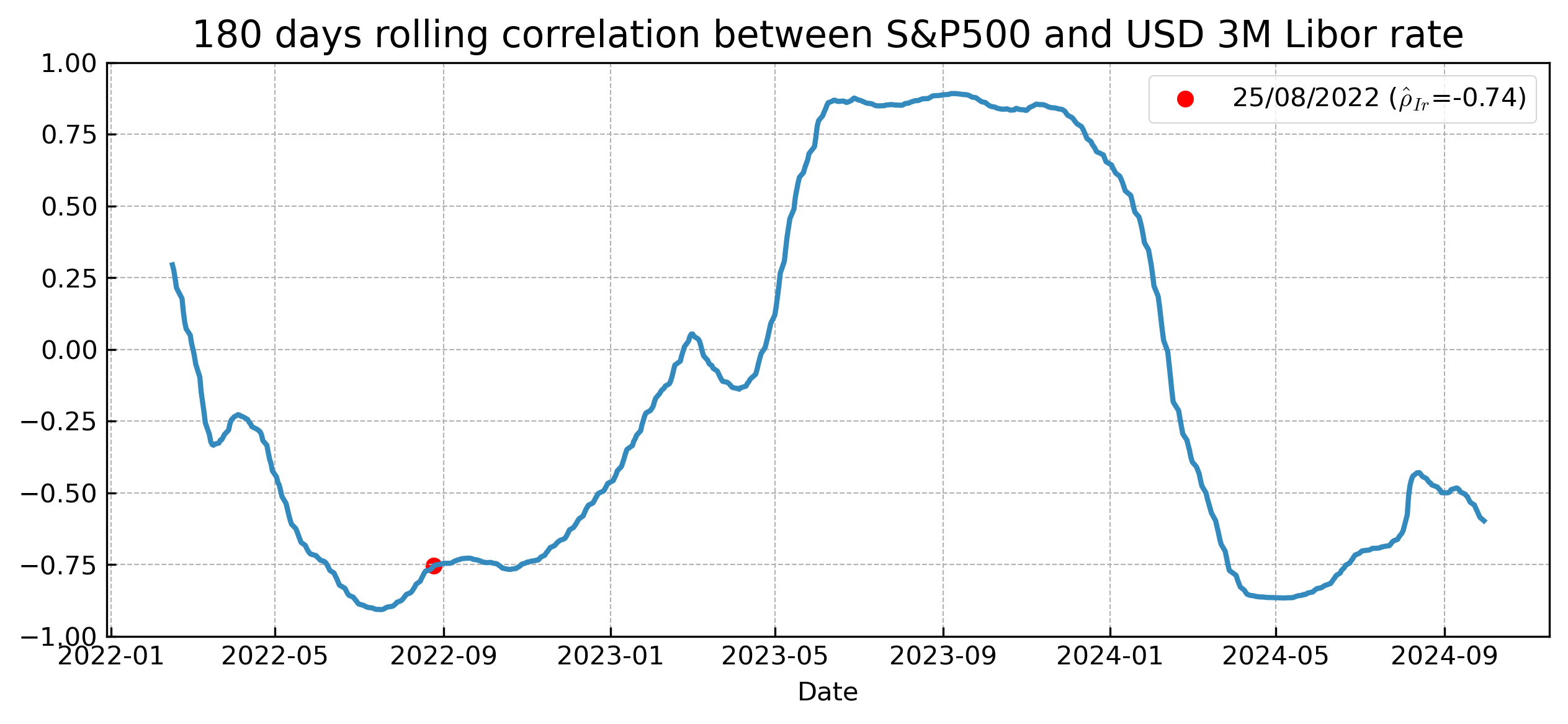}
\par\end{centering}
\centering{}\caption{\protect\label{fig:rolling_correlation_SP500_USD_rate} 180 days rolling correlation between S\&P500 and USD 3M Libor rate.}
\end{figure}

\section{\protect\label{sec:Link-with-conventional_models}Link with conventional
quadratic linear models}

\subsection{Link with operator Riccati equations}

In this section, we provide the link between the analytic expression
of the characteristic function \eqref{eq:chf_explicit} and expression
that depends on Riccati equations. 
\begin{prop}
\label{prop:-chf_riccati_1}Let $g_{0}(.)$ be given by \eqref{eq:g_0_t},
$G_{\nu}(.)$ a Volterra kernel as in Definition \ref{def:L2_kernel-1}. 
Fix $u\in\mathbb{C}$ such that $0\leq\mathcal{\mathfrak{R}}(u)\leq1.$
Then, for all $t\leq T$,
\begin{equation}
E^{\mathbb{Q}^{T}}\bigg[\exp\bigg(u\log\frac{I_{T}^{T}}{I_{t}^{T}}\bigg)\bigg|\mathcal{F}_{t}\bigg]=\exp\left(\phi_{t}^{u}+\chi_{t}^{u}+a^{u}\int_{t}^{T}h_{t}^{u}(s)^{2}ds+\int_{t}^{T}\int_{t}^{T}h_{t}^{u}(s)h_{t}^{u}(w)\:\bar{\psi}_{t}^{u}(s,w)dsdw\right),\label{eq:chf_explicit-1}
\end{equation}
 with $\chi_{t}^{u}$ as in Theorem \ref{thm:chf_general}, $\phi_{t}^{u}$
such that
\begin{align}
\dot{\phi}_{t}^{u} & =-a^{u}\eta_{\nu}^{2}\int_{t}^{T}G_{\nu}(s,t)^{2}ds-\eta_{\nu}^{2}\int_{t}^{T}\int_{t}^{T}G_{\nu}(s,t)G_{\nu}(w,t)\:\bar{\psi}_{t}^{u}(s,w)\,ds\,dw,\:t<T,\nonumber \\
\phi_{T}^{u} & =0,\label{eq:phi_t_ODE}
\end{align}
and $\bar{\psi}_{t}^{u}(s,w)$ that satisfies  a Riccati equation
of the form 
\begin{align*}
\dot{\bar{\psi}}_{t}^{u}(s,w)= & 2(a^{u})^{2}\eta_{\nu}^{2}\,G_{\nu}(s,t)G_{\nu}(w,t)+2a^{u}\eta_{\nu}^{2}\bigg(G_{\nu}(s,t)\,(\mathbf{G}_{\nu}^{*}\bar{\psi}_{t}^{u}(.,w))(t)+G_{\nu}(w,t)\,(\mathbf{G}_{\nu}^{*}\bar{\psi}_{t}^{u}(s,.))(t)\bigg)\\
 & +2\eta_{\nu}^{2}\,(\mathbf{G}_{\nu}^{*}\bar{\psi}_{t}^{u}(s,.))(t)\:(\mathbf{G}_{\nu}^{*}\bar{\psi}_{t}^{u}(.,w))(t),\:t<T,\:(s,w)\in(t,T]^{2}\:a.e.,\\
\bar{\psi}_{t}(t,s)= & \bar{\psi}_{t}(s,t)=b^{u}\bigg(a^{u}G_{\nu}(s,t)+(\mathbf{G}_{\nu}^{*}\bar{\psi}_{t}^{u}(s,.))(t)\bigg),\:t\leq s\leq T.
\end{align*}
\end{prop}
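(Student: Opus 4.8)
The plan is to obtain the statement directly from Theorem~\ref{thm:chf_general} by \emph{unpacking} the operator $\mathbf{\Psi}_{t}^{u}$ and then differentiating it in $t$. Three structural facts carry the argument. First, each of $(\text{id}-b^{u}\mathbf{G}_{\nu})^{-1}$, $(\text{id}-b^{u}\mathbf{G}_{\nu}^{*})^{-1}$ and $(\text{id}-2a^{u}\tilde{\mathbf{\Sigma}}_{t}^{u})^{-1}$ equals $\text{id}$ plus a Hilbert--Schmidt (hence integral) operator, since $\mathbf{A}(\text{id}-\mathbf{A})^{-1}$ is Hilbert--Schmidt whenever $\mathbf{A}$ is and $\mathbf{G}_{\nu},\tilde{\mathbf{\Sigma}}_{t}^{u}$ are integral operators; multiplying out, $\mathbf{\Psi}_{t}^{u}-a^{u}\,\text{id}$ is an integral operator, whose kernel we call $\bar{\psi}_{t}^{u}$; it is symmetric because $\mathbf{\Psi}_{t}^{u}$ is self-adjoint (the conjugation defining $\tilde{\mathbf{\Sigma}}_{t}^{u}$ preserves the symmetry of $\mathbf{\Sigma}_{t}$). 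Since $h_{t}^{u}$ is supported in $[t,T]$, this already gives
\[
\langle h_{t}^{u},\mathbf{\Psi}_{t}^{u}h_{t}^{u}\rangle_{L^{2}}=a^{u}\int_{t}^{T}h_{t}^{u}(s)^{2}\,ds+\int_{t}^{T}\int_{t}^{T}h_{t}^{u}(s)\,h_{t}^{u}(w)\,\bar{\psi}_{t}^{u}(s,w)\,ds\,dw,
\]
so \eqref{eq:chf_explicit-1} is nothing but \eqref{eq:chf_explicit} rewritten, and all the content lies in the equations for $\bar{\psi}_{t}^{u}$ and $\phi_{t}^{u}$. Second, and for later use: because $G_{\nu}$ is a Volterra kernel, $\Sigma_{t}(s,w)$ vanishes as soon as $\min(s,w)\le t$, a property inherited by $\tilde{\mathbf{\Sigma}}_{t}^{u}$ and by $\mathbf{C}_{t}:=a^{u}\big((\text{id}-2a^{u}\tilde{\mathbf{\Sigma}}_{t}^{u})^{-1}-\text{id}\big)$.

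For the Riccati, the crucial point is that the only $t$-dependence in $\mathbf{\Psi}_{t}^{u}$ enters through the middle factor $\tilde{\mathbf{\Sigma}}_{t}^{u}$, with $\frac{d}{dt}\tilde{\mathbf{\Sigma}}_{t}^{u}=(\text{id}-b^{u}\mathbf{G}_{\nu})^{-1}\dot{\mathbf{\Sigma}}_{t}(\text{id}-b^{u}\mathbf{G}_{\nu}^{*})^{-1}$ and $\dot{\mathbf{\Sigma}}_{t}$ the \emph{rank-one} operator of kernel $-\eta_{\nu}^{2}G_{\nu}(\cdot,t)G_{\nu}(\cdot,t)$ (Theorem~\ref{thm:chf_general}). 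Applying $\frac{d}{dt}(\text{id}-\mathbf{A}_{t})^{-1}=(\text{id}-\mathbf{A}_{t})^{-1}\dot{\mathbf{A}}_{t}(\text{id}-\mathbf{A}_{t})^{-1}$ with $\mathbf{A}_{t}=2a^{u}\tilde{\mathbf{\Sigma}}_{t}^{u}$ and recombining the outer $(\text{id}-b^{u}\mathbf{G}_{\nu}^{*})^{-1}$ and $(\text{id}-b^{u}\mathbf{G}_{\nu})^{-1}$ factors, the expression telescopes into the operator Riccati $\dot{\mathbf{\Psi}}_{t}^{u}=2\,\mathbf{\Psi}_{t}^{u}\dot{\mathbf{\Sigma}}_{t}\mathbf{\Psi}_{t}^{u}$. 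By self-adjointness and the rank-one form of $\dot{\mathbf{\Sigma}}_{t}$, the right-hand side is the operator with kernel $-\eta_{\nu}^{2}\big(\mathbf{\Psi}_{t}^{u}G_{\nu}(\cdot,t)\big)(s)\big(\mathbf{\Psi}_{t}^{u}G_{\nu}(\cdot,t)\big)(w)$, and writing $\mathbf{\Psi}_{t}^{u}=a^{u}\,\text{id}+\bar{\mathbf{\Psi}}_{t}^{u}$ one has $\big(\mathbf{\Psi}_{t}^{u}G_{\nu}(\cdot,t)\big)(s)=a^{u}G_{\nu}(s,t)+(\mathbf{G}_{\nu}^{*}\bar{\psi}_{t}^{u}(s,\cdot))(t)$; expanding this product over $(s,w)$ yields the stated quadratic (Riccati) equation for $\bar{\psi}_{t}^{u}$.

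The remaining pieces are the edge value and the equation for $\phi_{t}^{u}$. For the edge value, I would use $(\text{id}-b^{u}\mathbf{G}_{\nu}^{*})^{-1}=\text{id}+b^{u}\mathbf{G}_{\nu}^{*}(\text{id}-b^{u}\mathbf{G}_{\nu}^{*})^{-1}$ to write $\mathbf{\Psi}_{t}^{u}=\mathbf{K}_{t}(\text{id}-b^{u}\mathbf{G}_{\nu})^{-1}+b^{u}\mathbf{G}_{\nu}^{*}\mathbf{\Psi}_{t}^{u}$ with $\mathbf{K}_{t}:=a^{u}(\text{id}-2a^{u}\tilde{\mathbf{\Sigma}}_{t}^{u})^{-1}=a^{u}\,\text{id}+\mathbf{C}_{t}$; evaluating kernels at first argument $t$, the $\mathbf{C}_{t}$-contribution vanishes there by the second fact above and $(\text{id}-b^{u}\mathbf{G}_{\nu})^{-1}-\text{id}$ is a Volterra resolvent (hence zero at the running time), leaving exactly $\bar{\psi}_{t}^{u}(t,s)=b^{u}\big(a^{u}G_{\nu}(s,t)+(\mathbf{G}_{\nu}^{*}\bar{\psi}_{t}^{u}(s,\cdot))(t)\big)$, using the symmetry of $\bar{\psi}_{t}^{u}$. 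For $\phi_{t}^{u}$, Theorem~\ref{thm:chf_general} gives $\phi_{t}^{u}=-\int_{t}^{T}\text{Tr}(\mathbf{\Psi}_{s}^{u}\dot{\mathbf{\Sigma}}_{s})\,ds$, whence $\phi_{T}^{u}=0$ and $\dot{\phi}_{t}^{u}=\text{Tr}(\mathbf{\Psi}_{t}^{u}\dot{\mathbf{\Sigma}}_{t})$; inserting $\mathbf{\Psi}_{t}^{u}=a^{u}\,\text{id}+\bar{\mathbf{\Psi}}_{t}^{u}$ and the kernel of $\dot{\mathbf{\Sigma}}_{t}$ into the trace produces \eqref{eq:phi_t_ODE}, while $\chi_{t}^{u}$ is carried over unchanged from Theorem~\ref{thm:chf_general}.

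The step I expect to be the main obstacle is the rigorous passage from these operator identities to the pointwise kernel statements: one needs the strong differentiability of $t\mapsto\mathbf{\Sigma}_{t}$ with the claimed rank-one derivative and of $t\mapsto\mathbf{\Psi}_{t}^{u}$ in operator norm (which follows from the invertibility of $\text{id}-2a^{u}\tilde{\mathbf{\Sigma}}_{t}^{u}$ for each $t$, already established when $\mathbf{\Psi}_{t}^{u}$ was introduced, together with the continuity of $t\mapsto\tilde{\mathbf{\Sigma}}_{t}^{u}$), and, more delicately, that the running-time evaluations $(\mathbf{G}_{\nu}^{*}\bar{\psi}_{t}^{u}(s,\cdot))(t)$ and $G_{\nu}(\cdot,t)$ are legitimate a.e.\ point values and that $\bar{\psi}_{t}^{u}$ is regular enough up to the diagonal edge $\{s=t\}\cup\{w=t\}$. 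The rank-one structure of $\dot{\mathbf{\Sigma}}_{t}$ is precisely what makes the kernel-level Riccati meaningful in spite of the distributional $a^{u}\,\text{id}$ part of $\mathbf{\Psi}_{t}^{u}$; I would control these regularity points exactly as in \cite{key-4} under Definition~\ref{def:L2_kernel-1}.
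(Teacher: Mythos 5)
Your proposal follows essentially the same route as the paper: decompose $\mathbf{\Psi}_{t}^{u}=a^{u}\,\text{id}+\bar{\mathbf{\Psi}}_{t}^{u}$ into a scalar plus an integral part, rewrite the bilinear form accordingly, exploit the operator Riccati $\dot{\mathbf{\Psi}}_{t}^{u}=2\mathbf{\Psi}_{t}^{u}\dot{\mathbf{\Sigma}}_{t}\mathbf{\Psi}_{t}^{u}$ together with the rank-one structure of $\dot{\mathbf{\Sigma}}_{t}$ to pass to a pointwise kernel equation, and split the trace for $\phi_{t}^{u}$. The one real difference is expository: the paper invokes Lemma~\ref{lem:integral_operator} (itself resting on Lemma~B.1 of \cite{key-4}) for both the decomposition with symmetric integral remainder and the edge relation $\bar{\psi}_{t}^{u}(t,s)=b^{u}\big(a^{u}G_{\nu}(s,t)+(\mathbf{G}_{\nu}^{*}\bar{\psi}_{t}^{u}(s,\cdot))(t)\big)$, and separately cites the same lemma for the operator Riccati, whereas you re-derive all three directly from the resolvent algebra, i.e.\ from $\mathbf{A}(\text{id}-\mathbf{A})^{-1}$ being Hilbert--Schmidt, from $\tfrac{d}{dt}(\text{id}-\mathbf{A}_{t})^{-1}=(\text{id}-\mathbf{A}_{t})^{-1}\dot{\mathbf{A}}_{t}(\text{id}-\mathbf{A}_{t})^{-1}$, and from the factorization $\mathbf{\Psi}_{t}^{u}=\mathbf{K}_{t}(\text{id}-b^{u}\mathbf{G}_{\nu})^{-1}+b^{u}\mathbf{G}_{\nu}^{*}\mathbf{\Psi}_{t}^{u}$ plus the Volterra support of $\mathbf{C}_{t}$ and of the resolvent of $b^{u}\mathbf{G}_{\nu}$. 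This makes your account more self-contained, at the cost of having to defer (as you openly acknowledge) the justification of strong differentiability of $t\mapsto\tilde{\mathbf{\Sigma}}_{t}^{u}$ and the a.e.\ sense of kernel evaluations at the running time, which the paper discharges by citing \cite{key-4}.

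One concrete remark worth keeping: your computation via the rank-one factorization yields the kernel
\[
\dot{\bar{\psi}}_{t}^{u}(s,w)=-2\eta_{\nu}^{2}\Big(a^{u}G_{\nu}(s,t)+(\mathbf{G}_{\nu}^{*}\bar{\psi}_{t}^{u}(s,\cdot))(t)\Big)\Big(a^{u}G_{\nu}(w,t)+(\mathbf{G}_{\nu}^{*}\bar{\psi}_{t}^{u}(\cdot,w))(t)\Big),
\]
with a leading factor $-2\eta_{\nu}^{2}$ inherited from $\dot{\Sigma}_{t}(s,w)=-\eta_{\nu}^{2}G_{\nu}(s,t)G_{\nu}(w,t)$, whereas the displayed Riccati in the proposition (and in the paper's own proof) carries $+$ signs on all three terms. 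The minus sign is the correct one --- the paper itself uses the negative sign consistently downstream, e.g.\ in producing the $-2\eta_{\nu}^{2}$ quadratic term of $\dot{\Gamma}_{t}^{u}$ in Proposition~\ref{prop:riccati_ODE_last} --- so the displayed signs in the proposition's Riccati and in the line ``$\dot{\bar{\psi}}_{t}^{u}(s,w)=2(a^{u})^{2}\eta_{\nu}^{2}G_{\nu}(s,t)G_{\nu}(w,t)+\cdots$'' of the proof should all be flipped. Your blind derivation flags this correctly (modulo the dropped factor of $2$ in the sentence introducing the rank-one kernel, which you later restore implicitly).
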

\begin{proof}
From Lemma \ref{lem:integral_operator}, we have that $\mathbf{\Psi}_{t}^{u}=a^{u}\text{id}+\bar{\mathbf{\Psi}}_{t}^{u}$,
where $\bar{\mathbf{\Psi}}_{t}^{u}$ is an integral operator induced
by a symmetric kernel $\bar{\psi}_{t}^{u}(s,w)$ such that 
\begin{align*}
\bar{\psi}_{t}^{u}(t,s)=\bar{\psi}_{t}^{u}(s,t) & =b^{u}\int_{t}^{T}G_{\nu}(w,t)(a^{u}\delta_{w=s}+\bar{\psi}_{t}^{u}(s,w))dw,\:t\leq s\leq T\\
 & =b^{u}\bigg(a^{u}G_{\nu}(s,t)+(\mathbf{G}_{\nu}^{*}\bar{\psi}_{t}^{u}(s,.))(t)\bigg),\:t\leq s\leq T.
\end{align*}
In this case, we have that 
\[
\langle h_{t}^{u},\mathbf{\Psi}_{t}^{u}h_{t}^{u}\rangle_{L^{2}}=a^{u}\int_{t}^{T}h_{t}^{u}(s)^{2}\:ds+\int_{t}^{T}\int_{t}^{T}h_{t}^{u}(s)h_{t}^{u}(w)\:\bar{\psi}_{t}^{u}(s,w)\:dsdw.
\]
Moreover, we also know from Lemma B.1. in \cite{key-4} that 
\begin{align*}
\dot{\mathbf{\Psi}}_{t}^{u} & =2\mathbf{\Psi}_{t}^{u}\dot{\mathbf{\Sigma}}_{t}\mathbf{\Psi}_{t}^{u},\:t<T,\\
\mathbf{\Psi}_{T}^{u} & =a^{u}(\text{id}-b^{u}\mathbf{G}_{\nu}^{*})^{-1}(\text{id}-b^{u}\mathbf{G}_{\nu})^{-1},
\end{align*}
and by the definition, we also have that 
\begin{align*}
\dot{\bar{\mathbf{\Psi}}}_{t}^{u} & =2\mathbf{\Psi}_{t}^{u}\dot{\mathbf{\Sigma}}_{t}\mathbf{\Psi}_{t}^{u},\:t<T,\\
 & =2(a^{u}\text{id}+\bar{\mathbf{\Psi}}_{t}^{u})\dot{\mathbf{\Sigma}}_{t}(a^{u}\text{id}+\bar{\mathbf{\Psi}}_{t}^{u}),\:t<T,\\
\bar{\mathbf{\Psi}}_{T}^{u} & =a^{u}(\text{id}-b^{u}\mathbf{G}_{\nu}^{*})^{-1}(\text{id}-b^{u}\mathbf{G}_{\nu})^{-1}-a^{u}\text{id},
\end{align*}
As $\dot{\bar{\mathbf{\Psi}}}_{t}^{u}$ is composed of integral operators,
it is also an integral operator induced by the following kernel 
\[
2((a^{u}\delta+\bar{\mathbf{\Psi}}_{t}^{u})\star\dot{\mathbf{\Sigma}}_{t}\star(a^{u}\delta+\bar{\mathbf{\Psi}}_{t}^{u}))(s,w),
\]
with $\delta$ the kernel induced by the identity operator such that
$(\text{id}f)(s)=\int_{t}^{T}\delta_{s=w}(ds,dw)f(w)=f(s).$ Thus, using
the dominated convergence theorem, we obtain that $t\to\bar{\psi}_{t}^{u}(s,w)$
solves a Riccati equation given, for $t<s$ and $w\leq T,$ by
\[
\dot{\bar{\psi}}_{t}^{u}(s,w)=2((a^{u}\,\delta+\bar{\mathbf{\Psi}}_{t}^{u})\star\dot{\mathbf{\Sigma}}_{t}\star(a^{u}\,\delta+\bar{\mathbf{\Psi}}_{t}^{u}))(s,w).
\]
More explicitly, since $G(z,t)=0$ for $z\leq t,$ we also have, for
$(s,w)\in(t,T]^{2}\:a.e.$, 
\begin{align*}
\dot{\bar{\psi}}_{t}^{u}(s,w)= & 2(a^{u})^{2}\eta_{\nu}^{2}\,G_{\nu}(s,t)G_{\nu}(w,t)+2a^{u}\eta_{\nu}^{2}\,G_{\nu}(s,t)\int_{t}^{T}G_{\nu}(z,t)\,\bar{\psi}_{t}^{u}(z,w)\,dz\\
 & +2a^{u}\eta_{\nu}^{2}\,G_{\nu}(w,t)\int_{t}^{T}G_{\nu}(z,t)\,\bar{\psi}_{t}^{u}(s,z)\,dz\\
 & +2\eta_{\nu}^{2}\,\int_{t}^{T}G_{\nu}(z,t)\,\bar{\psi}_{t}^{u}(s,z)\,dz\int_{t}^{T}G_{\nu}(z^{'},t)\,\bar{\psi}_{t}^{u}(z^{'},w)\,dz^{'},\:t<T.
\end{align*}
Thus, using the integral operator $\mathbf{G}_{\nu}^{*},$ we obtain
that, for $(s,w)\in(t,T]^{2}\:a.e.$, 
\begin{align*}
\dot{\bar{\psi}}_{t}^{u}(s,w)= & 2(a^{u})^{2}\eta_{\nu}^{2}\,G_{\nu}(s,t)G_{\nu}(w,t)+2a^{u}\eta_{\nu}^{2}\bigg(G_{\nu}(s,t)\,(\mathbf{G}_{\nu}^{*}\bar{\psi}_{t}^{u}(.,w))(t)+G_{\nu}(w,t)\,(\mathbf{G}_{\nu}^{*}\bar{\psi}_{t}^{u}(s,.))(t)\bigg)\\
 & +2\eta_{\nu}^{2}\,(\mathbf{G}_{\nu}^{*}\bar{\psi}_{t}^{u}(s,.))(t)\:(\mathbf{G}_{\nu}^{*}\bar{\psi}_{t}^{u}(.,w))(t),\:t<T.
\end{align*}
In addition, from Theorem \ref{thm:chf_general}, we know that $t\to\phi_{t}^{u}$
satisfies the following ODE 
\begin{align*}
\dot{\phi}_{t}^{u} & =\text{Tr}(\mathbf{\Psi}_{t}^{u}\dot{\mathbf{\Sigma}_{t}}),\:t<T\\
 & =\text{Tr}((a^{u}\text{id}+\bar{\mathbf{\Psi}}_{t}^{u})\dot{\mathbf{\Sigma}_{t}}),\:t<T\\
 & =a^{u}\text{Tr}(\dot{\mathbf{\Sigma}_{t}})+\text{Tr}(\bar{\mathbf{\Psi}}_{t}^{u}\dot{\mathbf{\Sigma}_{t}}),\:t<T\\
 & =-a^{u}\eta_{\nu}^{2}\int_{t}^{T}G_{\nu}(s,t)^{2}ds-\eta_{\nu}^{2}\int_{t}^{T}\int_{t}^{T}G_{\nu}(s,t)G_{\nu}(w,t)\:\bar{\psi}_{t}^{u}(s,w)\,ds\,dw,\:t<T,
\end{align*}
with $\phi_{T}^{u}=0.$ 
\end{proof}
We go now a step further and deduce more explicit Riccati equations by
considering completely monotone Volterra kernels. 
\begin{defn}
A kernel $G:[0,T]^{2}\to\mathbb{R}$ is a completely monotone Volterra
kernel if it satisfies Definition \ref{def:L2_kernel-1} and admits
a Laplace representation of the form 
\begin{equation}
G(t,s)=1_{s<t}\int_{\mathbb{R}^{+}}e^{-(t-s)x}\lambda(dx),\label{eq:completely_monotone_kernel}
\end{equation}
where $\lambda(.)$ is a positive measure. 
\end{defn}
\begin{prop}
\label{prop:riccati_ODE_last}Assume that the kernel function $G_{\nu}(t,s)$
is completely monotone and can be represented as \eqref{eq:completely_monotone_kernel}.
Moreover, assume that $g_{0}(.)$ is given by \eqref{eq:g_0_t} and that the kernel satisfies Definition \ref{def:L2_kernel-1}. Fix
$u\in\mathbb{C}$ such that $0\leq\mathcal{\mathfrak{R}}(u)\leq1.$
Then, for all $t\leq T$,
\begin{equation}
E^{\mathbb{Q}^{T}}\bigg[\exp\bigg(u\log\frac{I_{T}^{T}}{I_{t}^{T}}\bigg)\bigg|\mathcal{F}_{t}\bigg]=\exp\bigg(\Theta_{t}^{u}+2\int_{\mathbb{R}_{+}}\Lambda_{t}^{u}(x)Y_{t}(x)\:\lambda(dx)+\int_{\mathbb{R}_{+}^{2}}\Gamma_{t}^{u}(x,y)Y_{t}(x)Y_{t}(y)\:\lambda(dx)\:\lambda(dy)\bigg),\label{eq:chf_explicit-1-1-1}
\end{equation}
where $t\to(\Theta_{t}^{u},\Lambda_{t}^{u},\Gamma_{t}^{u})$ solve
Riccati equations of the form 
\begin{align}
\dot{\Theta}_{t}^{u}= & \dot{\chi}_{t}^{u}-a^{u}\bigg(g_{0}^{T}(t)+\rho_{Ir}\eta_{r}B_{G_{r}}(t,T)\bigg)^{2}-\eta_{\nu}^{2}\bigg(2\bigg(\int_{\mathbb{R}_{+}}\Lambda_{t}^{u}(x)\:\lambda(dx))\bigg)^{2}+\int_{\mathbb{R}_{+}}\int_{\mathbb{R}_{+}}\Gamma_{t}^{u}(x,y)\:\lambda(dx)\:\lambda(dy)\bigg)\nonumber \\
 & -2\int_{\mathbb{R}_{+}}\bigg(b^{u}g_{0}^{T}(t)+u\eta_{\nu}\rho_{\nu r}\eta_{r}B_{G_{r}}(t,T)\bigg)\:\Lambda_{t}^{u}(x^{'})\:\lambda(dx'),\:t<T,\label{eq:theta_t}
\end{align}
\begin{align}
\dot{\Lambda}_{t}^{u}(x)= & -a^{u}\bigg(g_{0}^{T}(t)+\rho_{Ir}\eta_{r}B_{G_{r}}(t,T)\bigg)+x\Lambda_{t}^{u}(x)-2\eta_{\nu}^{2}\bigg(\int_{\mathbb{R}_{+}}\Gamma_{t}^{u}(x,x')\:\lambda(dx')\bigg)\bigg(\int_{\mathbb{R}_{+}}\Lambda_{t}^{u}(y)\:\lambda(dy)\bigg)\nonumber \\
 & -\int_{\mathbb{R}_{+}}\bigg(b^{u}g_{0}^{T}(t)+u\eta_{\nu}\rho_{\nu r}\eta_{r}B_{G_{r}}(t,T)\,\bigg)\Gamma_{t}^{u}(x,x^{'})\:\lambda(dx')\nonumber \\
 & -b^{u}\int_{\mathbb{R}_{+}}\Lambda_{t}^{u}(x^{'})\:\lambda(dx'),\:t<T,\:x\in\mathbb{R}_{+},\label{eq:lambda_t}
\end{align}
 and 
\begin{align}
\dot{\Gamma}_{t}^{u}(x,y)= & (x+y)\Gamma_{t}^{u}(x,y)-a^{u}-2\,\eta_{\nu}^{2}\bigg(\int_{\mathbb{R}_{+}}\Gamma_{t}^{u}(x,x')\:\lambda(dx')\bigg)\bigg(\int_{\mathbb{R}_{+}}\Gamma_{t}^{u}(y,y')\:\lambda(dy')\bigg)\nonumber \\
 & -b^{u}\bigg(\int_{\mathbb{R}_{+}}\Gamma_{t}^{u}(x,x^{'})\:\lambda(dx')+\int_{\mathbb{R}_{+}}\Gamma_{t}^{u}(y^{'},y)\:\lambda(dy')\bigg),\:t<T,\:(x,y)\in\mathbb{R}_{+}^{2},\label{eq:gamma_t}
\end{align}
with $\Theta_{T}^{u}=\Lambda_{T}^{u}=\Gamma_{T}^{u}=0.$
\end{prop}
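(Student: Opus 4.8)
The plan is to \emph{lift} the completely monotone Volterra volatility to an (infinite‑dimensional but Markovian) system indexed by $x\in\mathbb{R}_+$, and then to read off \eqref{eq:chf_explicit-1-1-1} from the semi‑explicit formula of Proposition~\ref{prop:-chf_riccati_1} by taking suitable Laplace transforms of the objects appearing there. Concretely, using the representation \eqref{eq:completely_monotone_kernel} I would introduce, for $x\in\mathbb{R}_+$ and $t\le T$, the lifted factor $Y_t(x):=\int_0^t e^{-(t-s)x}\bigl(\kappa_\nu\nu_s\,ds+\eta_\nu\,dW_\nu^{\mathbb{Q}^T}(s)\bigr)$, which solves $dY_t(x)=\bigl(-xY_t(x)+\kappa_\nu\nu_t\bigr)dt+\eta_\nu\,dW_\nu^{\mathbb{Q}^T}(t)$, $Y_0(x)=0$, and satisfies, by stochastic Fubini together with \eqref{eq:completely_monotone_kernel}, $\int_0^t G_\nu(s,w)\bigl(\kappa_\nu\nu_w\,dw+\eta_\nu\,dW_\nu^{\mathbb{Q}^T}(w)\bigr)=\int_{\mathbb{R}_+}e^{-(s-t)x}Y_t(x)\,\lambda(dx)$ for $s\ge t$; in particular $\nu_t=g_0^T(t)+\int_{\mathbb{R}_+}Y_t(x)\,\lambda(dx)$. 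It follows that for $s\ge t$ the function $h_t^u$ decomposes as $h_t^u(s)=m_t^u(s)+\int_{\mathbb{R}_+}e^{-(s-t)x}Y_t(x)\,\lambda(dx)$, where the remainder $m_t^u(s):=g_0^T(s)+\rho_{Ir}\eta_r B_{G_r}(s,T)-\int_t^s G_\nu(s,w)(b^u\rho_{Ir}-u\eta_\nu\rho_{\nu r})\eta_r B_{G_r}(w,T)\,dw$ is deterministic, with $m_t^u(t)=g_0^T(t)+\rho_{Ir}\eta_r B_{G_r}(t,T)$ and $\partial_t m_t^u(s)=G_\nu(s,t)(b^u\rho_{Ir}-u\eta_\nu\rho_{\nu r})\eta_r B_{G_r}(t,T)$.

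Next I would substitute this decomposition into the identity $\langle h_t^u,\mathbf{\Psi}_t^u h_t^u\rangle_{L^2}=a^u\int_t^T h_t^u(s)^2\,ds+\int_t^T\int_t^T h_t^u(s)h_t^u(w)\bar\psi_t^u(s,w)\,ds\,dw$ provided by Proposition~\ref{prop:-chf_riccati_1}, and expand it into contributions that are constant, linear and quadratic in the family $(Y_t(x))_{x\ge 0}$. Using the symmetry of $\bar\psi_t^u$, this exhibits the exponent in \eqref{eq:chf_explicit-1} as the exponent in \eqref{eq:chf_explicit-1-1-1} with $\Gamma_t^u(x,y):=a^u\int_t^T e^{-(s-t)(x+y)}ds+\int_t^T\int_t^T e^{-(s-t)x}e^{-(w-t)y}\bar\psi_t^u(s,w)\,ds\,dw$, $\Lambda_t^u(x):=a^u\int_t^T m_t^u(s)e^{-(s-t)x}\,ds+\int_t^T\int_t^T e^{-(s-t)x}m_t^u(w)\bar\psi_t^u(s,w)\,ds\,dw$ and $\Theta_t^u:=\phi_t^u+\chi_t^u+a^u\int_t^T m_t^u(s)^2\,ds+\int_t^T\int_t^T m_t^u(s)m_t^u(w)\bar\psi_t^u(s,w)\,ds\,dw$; all three vanish at $t=T$, which gives the terminal conditions $\Theta_T^u=\Lambda_T^u=\Gamma_T^u=0$. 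One also records the elementary identities $\int_{\mathbb{R}_+}e^{-(s-t)x}\lambda(dx)=G_\nu(s,t)$ and $\int_{\mathbb{R}_+}\Gamma_t^u(x,x')\lambda(dx')=a^u\int_t^T e^{-(s-t)x}G_\nu(s,t)\,ds+\int_t^T e^{-(s-t)x}\bigl(\mathbf{G}_\nu^\ast\bar\psi_t^u(s,\cdot)\bigr)(t)\,ds$ (and the analogue for $\int_{\mathbb{R}_+}\Lambda_t^u\lambda$), which are exactly what is needed to re‑encode the nested $\mathbf{G}_\nu^\ast$‑convolutions appearing in the Riccati equation for $\bar\psi_t^u$.

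The ODEs \eqref{eq:theta_t}--\eqref{eq:gamma_t} would then follow by differentiating $\Gamma_t^u,\Lambda_t^u,\Theta_t^u$ in $t$. Leibniz' rule splits the derivative into an endpoint part at $s=t$ (resp.\ $w=t$), where one substitutes the boundary value $\bar\psi_t^u(t,s)=b^u\bigl(a^u G_\nu(s,t)+(\mathbf{G}_\nu^\ast\bar\psi_t^u(s,\cdot))(t)\bigr)$ and the identities above to produce the $-b^u\int_{\mathbb{R}_+}\Gamma_t^u(x,x')\lambda(dx')$, $-b^u\int_{\mathbb{R}_+}\Lambda_t^u(x')\lambda(dx')$ and $-a^u\bigl(g_0^T(t)+\rho_{Ir}\eta_r B_{G_r}(t,T)\bigr)(\cdots)$ terms, and an integrand part, where one uses $\partial_t e^{-(s-t)x}=x\,e^{-(s-t)x}$ (the source of the $x\Lambda_t^u(x)$ and $(x+y)\Gamma_t^u(x,y)$ terms), the ODE for $\phi_t^u$ and the Riccati equation for $\bar\psi_t^u$ from Proposition~\ref{prop:-chf_riccati_1}, and $\partial_t m_t^u(s)$ as above. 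The Laplace transform of the quadratic forcing in the $\bar\psi_t^u$‑Riccati then produces precisely the products of $\lambda$‑integrals of $\Gamma_t^u$ and $\Lambda_t^u$ displayed in \eqref{eq:lambda_t}--\eqref{eq:gamma_t}, while $\dot\chi_t^u$ combined with $-a^u m_t^u(t)^2$ yields the $\dot\chi_t^u-a^u(g_0^T(t)+\rho_{Ir}\eta_r B_{G_r}(t,T))^2$ term of \eqref{eq:theta_t}; collecting the output by its degree in $(Y_t(x))_x$ delivers \eqref{eq:gamma_t} (quadratic part), \eqref{eq:lambda_t} (linear part) and \eqref{eq:theta_t} (constant part). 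A self‑contained alternative bypasses Proposition~\ref{prop:-chf_riccati_1}: once the lift is in place, $\bigl(\log I_t^T,(Y_t(x))_{x\ge0}\bigr)$ is a (measure‑valued) affine Markov system, so $\E^{\mathbb{Q}^T}[\exp(u\log(I_T^T/I_t^T))\mid\mathcal{F}_t]$ is a functional of $(t,Y_t)$; positing the exponential--affine--quadratic form on the right‑hand side of \eqref{eq:chf_explicit-1-1-1}, applying It\^{o}'s formula to the $\mathbb{Q}^T$‑martingale $\exp\bigl(u\log(I_t^T/I_0^T)\bigr)$ times that functional, and setting its drift to zero, one reads off \eqref{eq:theta_t}--\eqref{eq:gamma_t} by matching $dt$‑terms of each degree in $Y_t$, with $a^u=\tfrac12(u^2-u)$ coming from the It\^{o} correction of $\log I^T$ and $b^u=\kappa_\nu+\eta_\nu u\rho_{I\nu}$ from the $\kappa_\nu$‑drift of $Y$ together with the $\rho_{I\nu}$‑covariation of $W_I^{\mathbb{Q}^T}$ and $W_\nu^{\mathbb{Q}^T}$.

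The main obstacle is the differentiation carried out in the third step: one differentiates (double) Laplace transforms of $\bar\psi_t^u$ with respect to a parameter $t$ that sits simultaneously in the integration limits, in the exponentials $e^{-(s-t)x}$, and in $\bar\psi_t^u$ itself — whose own equation holds only almost everywhere and in the strong‑derivative sense — and one must then recognize the resulting nested $\mathbf{G}_\nu^\ast$‑convolutions as $\lambda$‑integrals of $\Gamma_t^u$ and $\Lambda_t^u$; this hinges on the completely monotone representation \eqref{eq:completely_monotone_kernel} and on repeated (stochastic) Fubini interchanges, all of which have to be justified under the mere $L^2$‑regularity of Definition~\ref{def:L2_kernel-1}. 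Verifying that the $\Theta_t^u,\Lambda_t^u,\Gamma_t^u$ defined in the second step indeed live in the appropriate function spaces and are the unique solutions of \eqref{eq:theta_t}--\eqref{eq:gamma_t} is inherited from the solvability of the operator Riccati equation behind $\mathbf{\Psi}_t^u$, but stating this cleanly also requires some care.
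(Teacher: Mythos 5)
Your proposal follows essentially the same two-step route as the paper: you lift the completely monotone kernel via the Laplace representation to obtain the factor family $Y_t(x)$ and the decomposition $h_t^u(s)=\bar h_t^u(s)+1_{t\le s}\int_{\mathbb R_+}e^{-x(s-t)}Y_t(x)\lambda(dx)$ (the paper's Lemma~\ref{lem:laplace_representation}), substitute into the quadratic form from Proposition~\ref{prop:-chf_riccati_1} to read off $\Theta_t^u,\Lambda_t^u,\Gamma_t^u$ as (double) Laplace transforms of $\bar\psi_t^u$ weighted by $\bar h_t^u$ with the correct terminal values, and then differentiate using Leibniz, the boundary relation $\bar\psi_t^u(t,s)=b^u(a^uG_\nu(s,t)+(\mathbf G_\nu^\ast\bar\psi_t^u(s,\cdot))(t))$, the Riccati equation for $\bar\psi_t^u$, and the identity $\int_{\mathbb R_+}e^{-(s-t)x}\lambda(dx)=G_\nu(s,t)$ to re-encode the nested convolutions as $\lambda$-integrals of $\Gamma_t^u$ and $\Lambda_t^u$ — precisely as in the paper's proof. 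Your closing remarks on the a.e./strong-derivative regularity of $\bar\psi_t^u$ and the Fubini interchanges, and the sketched Itô/affine alternative, are sensible additions but not needed; the core argument matches the paper's.
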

\begin{proof}
The proof is given in Section \ref{sec:Proofs}.
\end{proof}
\begin{rem}
A general result about the existence and the uniqueness of solutions
to Riccati equations \eqref{eq:theta_t}-\eqref{eq:lambda_t}-\eqref{eq:gamma_t}
is provided in \cite{key-3}. \QEDA
\end{rem}

\subsection{\protect\label{subsec:Riccati-equations-for-Markov-vol}Riccati equations
for Markovian volatility models }

In this section, we consider some particular completely monotone kernels
associated to Markovian volatility models. For those models, we derive
a simplified version of the characteristic function based on Riccati
equations. In particular, for the indicator $G_{\nu}(t,s)=1_{s<t}$
and exponential $G_{\nu}(t,s)=1_{s<t}\alpha\exp(-\beta(t-s))$ kernels,
we obtain closed form expressions similar to the characteristic functions
deduced in \cite{key-25}. 
\begin{prop}
\label{prop:chf_multi_factors_ODE}Suppose that $G_{\nu}(t,s)=1_{s<t}\sum_{i=1}^{N}w_{i}e^{-x_{i}(t-s)}$
and $g_{0}(t)=\nu_{0}+\theta_{\nu}\sum_{i=1}^{N}\int_{0}^{t}w_{i}\exp(-x_{i}(t-s))ds$
with $\nu_{0}>0$ and $\theta_{\nu}\in\mathbb{R}$, then, for $u\in\mathbb{C}$
such that $0\leq\mathcal{\mathfrak{R}}(u)\leq1,$
\begin{equation}
E^{\mathbb{Q}^{T}}\bigg[\exp\bigg(u\log\frac{I_{T}^{T}}{I_{t}^{T}}\bigg)\bigg|\mathcal{F}_{t}\bigg]=\exp\left(A_{t}^{u;\,N}+2\sum_{i=1}^{N}B_{t}^{u;\,i}\:\nu_{t}^{i}+\sum_{i=1}^{N}\sum_{j=1}^{N}C_{t}^{u;\,ij}\:\nu_{t}^{i}\nu_{t}^{j}\right),\label{eq:chf_markov_vol_n_factors}
\end{equation}
 with $\left((\nu_{t}^{i})_{i=1,...,N}\right)_{0\leq t\leq T}$ solution
of the following SDEs 
\begin{align*}
d\nu_{t}^{i} & =(-x_{i}\nu_{t}^{i}+\theta(t)+\kappa_{\nu}\:\nu_{t})\:dt+\eta_{\nu}\:dW_{\nu}^{\mathbb{Q}^{T}}(t),\:t>0,\:i=1,...,N,\\
\nu_{0}^{i} & =0,
\end{align*}
on $(\Omega,\mathcal{F},(\mathcal{F}_{t})_{t\leq T},\mathbb{Q}^{T})$
with $\theta(t):=\theta_{\nu}-\eta_{\nu}\sum_{i=1}^{n}\eta_{i}\rho_{i\nu}B_{i}(t,T)$,
such that, almost surely, 
\[
\nu_{t}=\nu_{0}+\sum_{i=1}^{N}w_{i}\nu_{t}^{i},\:t\leq T,
\]
and $A_{t}^{u;\,N}$, $(B_{t}^{u;\,i})_{i=1,...,N}$ and $(C_{t}^{u;\,ij})_{i,j=1,...,N}$
time-dependent functions satisfying Riccati equations of the form
\begin{align}
\dot{A}_{t}^{u;\,N} & =-a^{u}\bigg(\nu_{0}^{2}+\eta_{r}^{2}B_{G_{r}}(t,T)^{2}-2\nu_{0}\rho_{Ir}\eta_{r}B_{G_{r}}(t,T)\bigg)\nonumber \\
 & -2\sum_{i=1}^{N}B_{t}^{u;\,i}\bigg(\theta(t)+u\eta_{\nu}\rho_{\nu r}\eta_{r}B_{G_{r}}(t,T)+b^{u}\nu_{0}\bigg)-\eta_{\nu}^{2}\sum_{i,j=1}^{N}(2B_{t}^{u;\,i}B_{t}^{u;\,j}+C_{t}^{u;\,ij}),\:t<T.\label{eq:A_n_ODE}
\end{align}
 
\begin{align}
\dot{B}_{t}^{u;\,i}= & x_{i}B_{t}^{u;\,i}-a^{u}w_{i}\bigg(\nu_{0}+\rho_{Ir}\eta_{r}B_{G_{r}}(t,T)\bigg)-2\eta_{\nu}^{2}\sum_{j,k=1}^{N}C_{t}^{u;\,ij}B_{t}^{u;\,k}\nonumber \\
 & -w_{i}b^{u}\sum_{j=1}^{N}B_{t}^{u;\,j}-\sum_{j=1}^{N}C_{t}^{u;\,ij}\bigg(\theta(t)+u\eta_{\nu}\rho_{\nu r}\eta_{r}B_{G_{r}}(t,T)+b^{u}\nu_{0}\bigg),\:t<T,\:i=1,...,N,\label{eq:B_n_ODE}
\end{align}
 
\begin{equation}
\dot{C}_{t}^{u;\,ij}=(x_{i}+x_{j})C_{t}^{u;\,ij}-w_{i}w_{j}a^{u}-2\eta_{\nu}^{2}\sum_{k=1}^{N}\sum_{l=1}^{N}C_{t}^{u;\,ik}C_{t}^{u;\,jl}-b^{u}\sum_{k=1}^{N}\bigg(w_{j}C_{t}^{u;\,ik}+w_{i}C_{t}^{u;\,kj}\bigg)\label{eq:C_n_ODE}
\end{equation}
with $A_{T}^{u;\,N}=B_{T}^{u;\,i}=C_{T}^{u;\,ij}=0,$ for $i,j=1,...,N.$
\end{prop}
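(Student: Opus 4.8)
The plan is to read off Proposition~\ref{prop:chf_multi_factors_ODE} from Proposition~\ref{prop:riccati_ODE_last} by specializing the Laplace representation \eqref{eq:completely_monotone_kernel} to a finitely supported measure. The kernel $G_{\nu}(t,s)=1_{s<t}\sum_{i=1}^{N}w_i e^{-x_i(t-s)}$ is completely monotone with $\lambda(dx)=\sum_{i=1}^{N}w_i\,\delta_{x_i}(dx)$, so Proposition~\ref{prop:riccati_ODE_last} applies, and under this $\lambda$ every single integral $\int_{\mathbb{R}_+}(\cdot)\,\lambda(dx)$ collapses to $\sum_i w_i(\cdot)|_{x=x_i}$ and every double integral to $\sum_{i,j}w_i w_j(\cdot)|_{(x,y)=(x_i,x_j)}$. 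In particular the infinite-dimensional Riccati system \eqref{eq:theta_t}--\eqref{eq:gamma_t}, once evaluated at the nodes $x\in\{x_1,\dots,x_N\}$ and $(x,y)\in\{x_1,\dots,x_N\}^2$, becomes a closed finite-dimensional ODE system for $\Theta_t^u$, $(\Lambda_t^u(x_i))_i$ and $(\Gamma_t^u(x_i,x_j))_{i,j}$.

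Next I would reconcile the Markovian factors $\nu_t^i$ with the family $Y_t(x)$ entering $\nu_t=g_0^T(t)+\int_{\mathbb{R}_+}Y_t(x)\,\lambda(dx)$ in Proposition~\ref{prop:riccati_ODE_last}, so that here $\nu_t=g_0^T(t)+\sum_i w_i Y_t(x_i)$ with each $Y_t(x_i)$ an Ornstein--Uhlenbeck-type process started from $0$. Setting $m_t^i:=\int_0^t e^{-x_i(t-s)}\theta(s)\,ds$ and $\nu_t^i:=Y_t(x_i)+m_t^i$, one checks, from \eqref{eq:g_0_t} and the definitions of $g_0^T$ and $\theta$, that $g_0^T(t)=\nu_0+\sum_i w_i m_t^i$, that $\dot m_t^i=-x_i m_t^i+\theta(t)$, and that $\nu_t^i$ solves the stated linear SDE with $\nu_0^i=0$; hence $\nu_t=\nu_0+\sum_i w_i\nu_t^i$ almost surely, which is the decomposition asserted in the statement. (For a single Hull--White factor $\theta(t)=\theta_{\nu}-\eta_{\nu}\eta_{r}\rho_{\nu r}B_{G_{r}}(t,T)$, written $\theta_{\nu}-\eta_{\nu}\sum_i\eta_i\rho_{i\nu}B_i(t,T)$ in the multi-factor notation.)

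Finally I would push the substitution $Y_t(x_i)=\nu_t^i-m_t^i$ through the exponent in \eqref{eq:chf_explicit-1-1-1}. Expanding $2\sum_i w_i\Lambda_t^u(x_i)(\nu_t^i-m_t^i)+\sum_{i,j}w_i w_j\Gamma_t^u(x_i,x_j)(\nu_t^i-m_t^i)(\nu_t^j-m_t^j)$ and collecting the parts of degree $2$, $1$ and $0$ in $(\nu_t^i)_i$ gives \eqref{eq:chf_markov_vol_n_factors} with
\[
C_t^{u;\,ij}:=w_i w_j\,\Gamma_t^u(x_i,x_j),\qquad B_t^{u;\,i}:=w_i\,\Lambda_t^u(x_i)-\sum_{j=1}^{N}C_t^{u;\,ij}\,m_t^j,
\]
and $A_t^{u;\,N}$ the residual constant $\Theta_t^u-2\sum_i w_i\Lambda_t^u(x_i)m_t^i+\sum_{i,j}C_t^{u;\,ij}m_t^i m_t^j$. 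Differentiating these three identities in $t$ and inserting the node-evaluated Riccati ODEs from the first step together with $\dot m_t^i=-x_i m_t^i+\theta(t)$, one simplifies to obtain \eqref{eq:A_n_ODE}--\eqref{eq:C_n_ODE}; for instance the linear term $x_i B_t^{u;\,i}$ in \eqref{eq:B_n_ODE} materializes precisely when the $(x_i+x_j)\Gamma_t^u(x_i,x_j)$ term of \eqref{eq:gamma_t} is combined with the $-x_j m_t^j$ part of $\dot m_t^j$. The terminal conditions $A_T^{u;\,N}=B_T^{u;\,i}=C_T^{u;\,ij}=0$ follow from $\Theta_T^u=\Lambda_T^u=\Gamma_T^u=0$.

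The routine but delicate part is the bookkeeping in this last step: tracking how the deterministic shifts $m_t^i$ redistribute terms among the constant, linear and quadratic parts of the exponent and checking that the algebra closes exactly onto \eqref{eq:A_n_ODE}--\eqref{eq:C_n_ODE} (the purely quadratic equation \eqref{eq:C_n_ODE}, which the shifts do not affect, is the easy one and is a direct specialization of \eqref{eq:gamma_t}). An alternative and arguably cleaner route for the ODEs avoids Proposition~\ref{prop:riccati_ODE_last} altogether: since $(\nu_t^1,\dots,\nu_t^N)$ is a finite-dimensional Gaussian Markov process and $\log I_t^T$ has drift quadratic and diffusion linear in $(\nu_t^i)_i$ --- the Gaussian short-rate contribution being handled exactly as in Theorem~\ref{thm:chf_general} through $h_t^u$ and $\chi_t^u$ --- the conditional expectation in \eqref{eq:chf_markov_vol_n_factors} must be exponential-quadratic in $(\nu_t^i)_i$; inserting this ansatz into the martingale property of $t\mapsto\mathbb{E}^{\mathbb{Q}^{T}}[\exp(u\log I_T^T)\mid\mathcal{F}_{t}]$, applying It\^o's formula and forcing the drift term to vanish yields \eqref{eq:A_n_ODE}--\eqref{eq:C_n_ODE} directly.
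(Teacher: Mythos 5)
Your proof follows essentially the same route as the paper's: specialize Proposition~\ref{prop:riccati_ODE_last} to the discrete measure $\lambda(dx)=\sum_i w_i\delta_{x_i}(dx)$, identify $Y_t(x_i)=\nu_t^i-\theta_{x_i}(t)$ (your $m_t^i$ is the paper's $\theta_{x_i}(t)$), substitute into \eqref{eq:chf_explicit-1-1-1}, and collect quadratic, linear, and constant terms to recover $A_t^{u;N}$, $B_t^{u;i}$, $C_t^{u;ij}$ with the same defining formulas, then differentiate. Your sketch is correct, including the useful observation that the shift leaves $C_t^{u;ij}$ untouched; the alternative direct-martingale ansatz you mention at the end is a reasonable aside but is not what the paper does.
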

\begin{proof}
The proof is given in Section \ref{sec:Proofs}. 
\end{proof}
\begin{cor}
\label{corr:chf_stein-stein}Suppose that $G_{\nu}(t,s)=1_{s<t}\:\alpha\exp(-\beta(t-s))$
and $g_{0}(t)=\nu_{0}+\theta_{\nu}\int_{0}^{t}\alpha\exp(-\beta(t-s))ds,$
with $\nu_{0}>0$, $\alpha,\beta\in\mathbb{R}_{+}$ and $\theta_{\nu}\in\mathbb{R}$.
Then, for $u\in\mathbb{C}$ such that $0\leq\mathcal{\mathfrak{R}}(u)\leq1,$
\begin{equation}
E^{\mathbb{Q}^{T}}\bigg[\exp\bigg(u\log\frac{I_{T}^{T}}{I_{t}^{T}}\bigg)\bigg|\mathcal{F}_{t}\bigg]=\exp(A_{t}^{u}+2B_{t}^{u}\:\nu_{t}+C_{t}^{u}\:\nu_{t}^{2}),\label{eq:chf_markov_vol}
\end{equation}
 with $A_{t}^{u}$, $B_{t}^{u}$ and $C_{t}^{u}$ time-dependent functions
given by 
\begin{align}
A_{t}^{u} & =-\frac{1}{2}u(1-u)V_{r}(t,T)+\int_{t}^{T}\bigg[\bigg(\alpha\theta_{\nu}+(u-1)\rho_{\nu r}\alpha\eta_{\nu}\eta_{r}B_{G_{r}}(t,T)\bigg)B_{s}^{u}\nonumber \\
 & \;\;\;\;\;\;\;\;\;\;\;\;\;\;\;\;\;\;+\frac{1}{2}\eta_{\nu}^{2}\bigg((B_{s}^{u})^{2}+C_{s}^{u}\bigg)\bigg]\:ds,\label{eq:A_stein}
\end{align}
\begin{align}
B_{t}^{u} & =-\frac{1}{2}\frac{u(1-u)}{\gamma_{1}+\gamma_{2}e^{-2\gamma(T-t)}}\bigg[\gamma_{0}(1+e^{-2\gamma(T-t)})\nonumber \\
 & \;\;\;\;\;+(\gamma_{3}-\gamma_{4}e^{-2\gamma(T-t)}-(\gamma_{5}e^{-\kappa_{r}(T-t)}-\gamma_{6}e^{-(2\gamma+\kappa_{r})(T-t)})-\gamma_{7}e^{-\gamma(T-t)}\bigg],\label{eq:B_stein}
\end{align}

\begin{equation}
C_{t}^{u}=-\frac{1}{2}u(1-u)\frac{1-e^{-2\gamma(T-t)}}{\gamma_{1}+\gamma_{2}e^{-2\gamma(T-t)}},\label{eq:C_stein}
\end{equation}
where $V_{r}(t,T)$ is the integrated variance of the interest rate
process \eqref{eq:IR_HW_multi_factor} and 
\begin{align*}
\gamma & =\sqrt{(-(\alpha\kappa_{\nu}-\beta)-\rho_{I\nu}\alpha\eta_{\nu}u)^{2}-\alpha^{2}\eta_{\nu}^{2}(u^{2}-u)},\:\:\:\:\:\:\gamma_{0}=\frac{\alpha\theta_{\nu}}{\gamma},\\
\gamma_{1} & =\gamma-(\alpha\kappa_{\nu}-\beta+\rho_{I\nu}\alpha\eta_{\nu}u),\:\:\:\:\:\:\:\:\:\:\:\:\:\:\:\:\:\:\:\:\:\:\:\:\:\:\:\:\:\:\:\:\:\:\:\:\:\:\:\:\:\:\:\gamma_{2}=\gamma((\alpha\kappa_{\nu}-\beta+\rho_{I\nu}\alpha\eta_{\nu}u),\\
\gamma_{3} & =\frac{\rho_{Ii}\eta_{r}\gamma_{1}+\rho_{\nu i}\eta_{r}\alpha\eta_{\nu}(u-1)}{\kappa_{r}\gamma},\:\:\:\:\:\:\:\:\:\:\:\:\:\:\:\:\:\:\:\:\:\:\:\:\:\:\:\:\:\:\:\:\:\:\:\:\:\:\:\:\gamma_{4}=\frac{\rho_{Ir}\eta_{r}\gamma_{2}-\rho_{\nu r}\eta_{r}\alpha\eta_{\nu}(u-1)}{\kappa_{r}\gamma},\\
\gamma_{5} & =\frac{\rho_{Ir}\eta_{r}\gamma_{1}+\rho_{\nu r}\eta_{r}\alpha\eta_{\nu}(u-1)}{\kappa_{r}(\gamma-\kappa_{r})},\:\:\:\:\:\:\:\:\:\:\:\:\:\:\:\:\:\:\:\:\:\:\:\:\:\:\:\:\:\:\:\:\:\:\:\:\:\:\:\gamma_{6}=\frac{\rho_{Ir}\eta_{r}\gamma_{2}-\rho_{\nu r}\eta_{r}\alpha\eta_{\nu}(u-1)}{\kappa_{r}(\gamma+\kappa_{r})},\\
\gamma_{7} & =(\gamma_{3}-\gamma_{4})-(\gamma_{5}-\gamma_{6}).
\end{align*}
\end{cor}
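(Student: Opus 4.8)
\textbf{Proof strategy for Corollary \ref{corr:chf_stein-stein}.}

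The plan is to specialize Proposition \ref{prop:chf_multi_factors_ODE} to the one-factor case $N=1$ with $w_1 = \alpha$ and $x_1 = \beta$, and then to solve the resulting scalar Riccati system \eqref{eq:A_n_ODE}--\eqref{eq:C_n_ODE} in closed form. First I would observe that for the exponential kernel $G_\nu(t,s) = 1_{s<t}\,\alpha e^{-\beta(t-s)}$ and the prescribed $g_0$, Proposition \ref{prop:chf_multi_factors_ODE} applies directly: there is a single auxiliary process $\nu^1$ with $\nu_t = \nu_0 + \alpha\,\nu_t^1$, so the quadratic form $A_t^{u;1} + 2B_t^{u;1}\nu_t^1 + C_t^{u;1,1}(\nu_t^1)^2$ can be re-expressed as $A_t^u + 2B_t^u\nu_t + C_t^u\nu_t^2$ after absorbing the affine change of variables $\nu_t^1 = (\nu_t - \nu_0)/\alpha$. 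This re-parametrization fixes the relation between $(A^u, B^u, C^u)$ and $(A^{u;1}, B^{u;1}, C^{u;1,1})$ and, crucially, transforms the original Riccati ODEs into ODEs for the new triple; I would carry out this substitution carefully to get the coefficients $\alpha\kappa_\nu - \beta$, $\rho_{I\nu}\alpha\eta_\nu$, etc., that appear in the definition of $\gamma$.

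Next I would solve the $C_t^u$ equation. After the change of variables it becomes a scalar Riccati ODE with constant coefficients of the form $\dot C_t^u = \text{(const)}\,C_t^u + \text{(const)} + \text{(const)}\,(C_t^u)^2$ with terminal condition $C_T^u = 0$; the discriminant of the associated quadratic is exactly $\gamma^2$, and the standard linearization (writing $C_t^u$ as a log-derivative, or using the tangent/coth substitution) yields the closed form \eqref{eq:C_stein}. Then, with $C_t^u$ known, the equation for $B_t^u$ is a \emph{linear} first-order ODE (inhomogeneous, with the inhomogeneity involving $C_t^u$ and the interest-rate term $B_{G_r}(t,T) = \frac{1}{\kappa_r}(1 - e^{-\kappa_r(T-t)})$ from Table \ref{tab:Resolvents_example} and Proposition \ref{prop:zero-coupon-price}); integrating it via the integrating factor $\exp(\int \text{stuff})$ produces the combination of $e^{-2\gamma(T-t)}$, $e^{-\kappa_r(T-t)}$, $e^{-(2\gamma+\kappa_r)(T-t)}$ and $e^{-\gamma(T-t)}$ terms displayed in \eqref{eq:B_stein}, and matching the terminal condition $B_T^u = 0$ pins down $\gamma_7$ as stated. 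Finally $A_t^u$ is obtained by direct integration of \eqref{eq:A_n_ODE} once $B^u$ and $C^u$ are in hand, giving \eqref{eq:A_stein} with $V_r(t,T) = \eta_r^2\int_t^T B_{G_r}(s,T)^2\,ds$ identified as the integrated variance of the short rate (this matches $\chi_t^u$ plus the $\rho_{Ir}$-dependent contributions after collecting terms).

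The main obstacle I anticipate is purely bookkeeping: tracking the many cross-terms generated by the correlations $\rho_{I\nu}$, $\rho_{Ir}$, $\rho_{\nu r}$ through the change of variables and the integrating-factor computation, and verifying that the explicit constants $\gamma_0, \ldots, \gamma_7$ are exactly the ones that make $B_T^u = 0$ and reproduce the known formula of \cite{key-25} in the relevant limit. There is no conceptual difficulty — existence and uniqueness of the Riccati solutions is guaranteed by the remark following Proposition \ref{prop:riccati_ODE_last} (and the scalar constant-coefficient structure makes it elementary here) — so the proof reduces to (i) invoking Proposition \ref{prop:chf_multi_factors_ODE} with $N=1$, (ii) performing the affine substitution $\nu^1 \leftrightarrow \nu$, (iii) solving the decoupled chain $C^u \to B^u \to A^u$, and (iv) a consistency check against \cite{key-25}. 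I would present steps (i)--(ii) in full and then state that (iii)--(iv) follow by elementary but lengthy computation, displaying only the intermediate linear ODE for $B_t^u$ and its integrating factor.
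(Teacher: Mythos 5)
Your proposal is correct and follows essentially the same route as the paper: invoke Proposition~\ref{prop:chf_multi_factors_ODE} with $N=1$, transfer to the variable $\nu_t = \nu_0 + w_1\nu_t^1$, and then solve the decoupled Riccati chain $C^u \to B^u \to A^u$ by elementary ODE methods, matching the laborious algebra to \cite{key-25,key-26}. One small remark: your identification $w_1 = \alpha$, $x_1 = \beta$ is the one that actually reproduces the kernel $G_\nu(t,s) = 1_{s<t}\alpha e^{-\beta(t-s)}$ and yields the displayed $\gamma$ with discriminant $(-(\alpha\kappa_\nu-\beta)-\rho_{I\nu}\alpha\eta_\nu u)^2 - \alpha^2\eta_\nu^2(u^2-u)$, whereas the paper's proof writes $w_1=1$, $x_1=\alpha$, which appears to be a typographical slip.
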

\begin{proof}
Based on Proposition \ref{prop:chf_multi_factors_ODE} with $N=1$,
$w_{1}=1$ and $x_{1}=\alpha$, we deduce the Riccati equations satisfied
by $A_{t}^{u},$$B_{t}^{u},$ $C_{t}^{u}$. Thus, using a laborious
development as in \cite{key-25,key-26}, we can obtain the explicit
forms given by \eqref{eq:A_stein}-\eqref{eq:B_stein}-\eqref{eq:C_stein}.
\end{proof}

\subsection{\protect\label{subsec:Multi-factor-approximation}Multi-factor approximation
for completely monotone Volterra kernels}

In this section, we propose another alternative to approximate
the characteristic function when considering completely monotone Volterra
kernels in the sense of Definition \ref{def:L2_kernel-1}. The approximate
method is based on a Laplace representation of completely monotone
kernels and expression of the characteristic function in terms of
Riccati equations. Unlike the approach proposed in Section \ref{subsec:numerical_implementation_general_kernels},
a convergence result is deduced for this multi-factor approximate
approach. Using a Laplace transform representation, we have that completely
monotone kernels can be rewritten such that 
\begin{equation}
G_{\nu}(t,s)=1_{s<t}\int_{\mathbb{R}^{+}}e^{-(t-s)x}\lambda(dx),\label{eq:completely_monotone_kernel-1}
\end{equation}
with $\lambda(.)$ a positive measure. Hence a natural approximation
of the kernel is given by 
\begin{equation}
\hat{G}_{\nu}(t,s):=1_{s<t}\sum_{i=1}^{N}\hat{w}_{i}e^{-(t-s)\hat{x}_{i}},\label{eq:approx_kernel}
\end{equation}
where $(\hat{w}_{i})_{i=1,...,N}$ are the weights and $(\hat{x}_{i})_{i=1,...,N}$
the mean reversion terms that should be appropriately defined. Under
suitable choice, $\hat{G}_{\nu}$ converges in $L^{2}([0,T]^{2},\mathbb{R})$
to the completely monotone kernels. Therefore, based on Proposition
\ref{prop:chf_multi_factors_ODE}, we can propose another approximate
solution of the characteristic function associated to completely monotone
Volterra kernels. 
\begin{lem}
\label{lem:conve_kernel}Suppose that for all $N\geq1$, $(\hat{w}_{i})_{i=1,...,N}$
and $(\hat{x}_{i})_{i=1,...,N}$ are such that 
\[
\hat{w}_{i}:=\int_{k_{i-1}}^{k_{i}}\lambda(dk),\:\hat{x}_{i}:=\frac{1}{\hat{w}_{i}}\int_{k_{i-1}}^{k_{i}}k\;\lambda(dk),
\]
such that $k_{0}=0<k_{1}<...<k_{n}$ and 
\[
k_{n}\to\infty,\:\sum_{i=1}^{N}\int_{k_{i-1}}^{k_{i}}(k_{i}-k)^{2}\;\lambda(dk)\to0.
\]
\emph{as $N$ goes to infinity.} For $N\geq1$ fixed, there exists
a function $f_{N}^{(2)}((k_{i})_{i=0,...,N})$ such that the following
inequality holds
\begin{equation}
||\hat{G}_{\nu}-G_{\nu}||_{L^{2}([0,T]^{2},\mathbb{R})}\leq f_{N}^{(2)}((k_{i})_{i=1,...,N}),\label{eq:upper_bound_L2}
\end{equation}
and $\hat{G}_{\nu}$ converges in $L^{2}([0,T]^{2},\mathbb{R})$ to
$G_{\nu}$ when $N$ goes to infinity i.e. 
\[
||\hat{G}_{\nu}-G_{\nu}||_{L^{2}([0,T]^{2},\mathbb{R})}\to0,
\]
as $N$ goes to infinity. 
\end{lem}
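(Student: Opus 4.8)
The plan is to estimate $\|\hat G_\nu-G_\nu\|_{L^2([0,T]^2,\mathbb R)}$ directly, by splitting the error into a \emph{quantisation} part — on each subinterval $[k_{i-1},k_i]$ we replace the whole family $\{e^{-(t-s)k}\}_k$ by the single exponential $e^{-(t-s)\hat x_i}$ — and a \emph{tail} part coming from the mass of $\lambda$ on $(k_N,\infty)$, which the finite sum \eqref{eq:approx_kernel} does not see. Using \eqref{eq:completely_monotone_kernel-1} together with $\hat w_i=\int_{k_{i-1}}^{k_i}\lambda(dk)$, one has for $s<t$, writing $\tau:=t-s$,
\[
(\hat G_\nu-G_\nu)(t,s)=\sum_{i=1}^{N}\int_{k_{i-1}}^{k_i}\bigl(e^{-\tau\hat x_i}-e^{-\tau k}\bigr)\,\lambda(dk)\;-\;\int_{k_N}^{\infty}e^{-\tau k}\,\lambda(dk).
\]

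The key point for the quantisation part is that $\hat x_i$ is the $\lambda$-barycentre of $[k_{i-1},k_i]$, so the first-order term in the Taylor expansion of $k\mapsto e^{-\tau k}$ around $\hat x_i$ integrates to zero against $\lambda$ on $[k_{i-1},k_i]$; bounding the second-order remainder via $e^{-\tau\xi}\le 1$ and then using that the mean minimises $x\mapsto\int_{k_{i-1}}^{k_i}(k-x)^2\lambda(dk)$ (evaluated at $x=k_i$) gives
\[
\Bigl|\int_{k_{i-1}}^{k_i}\bigl(e^{-\tau\hat x_i}-e^{-\tau k}\bigr)\lambda(dk)\Bigr|\le\frac{\tau^2}{2}\int_{k_{i-1}}^{k_i}(k-\hat x_i)^2\lambda(dk)\le\frac{\tau^2}{2}\int_{k_{i-1}}^{k_i}(k_i-k)^2\lambda(dk).
\]
Summing over $i$, the quantisation error at lag $\tau$ is at most $\tfrac{\tau^2}{2}S_N$ with $S_N:=\sum_{i=1}^N\int_{k_{i-1}}^{k_i}(k_i-k)^2\lambda(dk)$. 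For the tail, since $G_\nu$ satisfies Definition \ref{def:L2_kernel-1}, the kernel $(t,s)\mapsto 1_{s<t}\int_{k_N}^{\infty}e^{-(t-s)k}\lambda(dk)$ is dominated by $|G_\nu|\in L^2([0,T]^2,\mathbb R)$ and tends to $0$ pointwise as $k_N\to\infty$, so by dominated convergence its $L^2$-norm $\varepsilon(k_N)$ is finite and $\varepsilon(k_N)\downarrow0$.

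Combining: restricting to $\{s<t\}$ and using $(a-b)^2\le 2a^2+2b^2$,
\[
\|\hat G_\nu-G_\nu\|_{L^2([0,T]^2,\mathbb R)}^2\le 2\int_0^T\!\!\int_0^t\Bigl(\tfrac{(t-s)^2}{2}S_N\Bigr)^{\!2}ds\,dt+2\,\varepsilon(k_N)^2\le C_T\,S_N^2+2\,\varepsilon(k_N)^2,
\]
with $C_T$ an explicit constant proportional to $T^6$. Hence $\|\hat G_\nu-G_\nu\|\le f_N^{(2)}\bigl((k_i)_{i=0,\dots,N}\bigr):=\sqrt{C_T}\,S_N+\sqrt2\,\varepsilon(k_N)$, which depends only on the partition and, under the assumptions $k_N\to\infty$ and $S_N\to0$, tends to $0$ as $N\to\infty$; this yields both \eqref{eq:upper_bound_L2} and the claimed convergence.

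I expect the only non-routine step to be the control of the tail $\int_{k_N}^{\infty}e^{-\tau k}\lambda(dk)$: for canonical examples such as the fractional kernel, $\lambda$ has infinite mass on $(1,\infty)$, so one cannot bound the tail by $\lambda\bigl((k_N,\infty)\bigr)$ and must instead retain the exponential weight and invoke $G_\nu\in L^2$ plus dominated convergence to obtain a genuinely vanishing, partition-dependent bound. The Taylor estimate, the barycentre minimality, and the passage from $[0,T]^2$ to lag variables are all elementary.
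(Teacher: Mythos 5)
Your argument is correct, and it is in substance the same as the one the paper implicitly relies on: the paper does not prove the lemma itself but simply refers to \cite[Proposition 3.3]{key-1}, which uses precisely the quantisation-plus-tail decomposition you carry out (barycentre choice of $\hat x_i$ kills the linear term of the Taylor expansion; second-order remainder bounded by $\frac{\tau^2}{2}\int_{k_{i-1}}^{k_i}(k-\hat x_i)^2\lambda(dk)$, which the barycentre-minimality step relates to $\int_{k_{i-1}}^{k_i}(k_i-k)^2\lambda(dk)$; the tail $\int_{k_N}^\infty e^{-\tau k}\lambda(dk)$ is killed by $k_N\to\infty$). Your handling of the tail via domination by $G_\nu\in L^2$ and dominated convergence is sound and avoids any integrability issues with $\lambda$ on $(k_N,\infty)$, which, as you note, is essential for fractional-type examples where $\lambda$ has infinite total mass. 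The one mild difference worth flagging is that your $\varepsilon(k_N)$ is defined only as an $L^2$-norm and obtained through dominated convergence, so it is not in closed form; the cited reference gives a more explicit tail bound when $\lambda$ has a density (e.g.\ the power-law density of the fractional kernel), which is what makes the minimisation of $f_N^{(2)}$ over partitions in the subsequent Remark practical. For the purposes of the lemma as stated — existence of a partition-dependent majorant that vanishes as $N\to\infty$ — your proof is complete and correct.
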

\begin{proof}
We refer to the proof of \cite[Proposition 3.3.]{key-1}. 
\end{proof}
\begin{rem}
As stated in \cite{key-1}, there are different possible choices
of the auxiliary terms $(k_{i})_{i=0,...,N}$. In this paper, for fixed
$N\geq1,$ we take auxiliary terms $(k_{i})_{i=0,...,N}$ that minimize
the upper bound \eqref{eq:upper_bound_L2}. Thus, we choose $(k_{i})_{i=1,...,N}$
solutions of
\[
\inf_{(k_{i})_{i=1,..,N}\in\mathcal{E}_{N}}f_{N}^{(2)}((k_{i})_{i=1,...,N}),
\]
with $\mathcal{E}_{N}:=\{(k_{i})_{i=1,..,N}:\;k_{0}=0<k_{1}<...<k_{n}\}.$ 
\end{rem}
\begin{prop}
\label{prop:second_approx_fractional_kernel}Let $(I_{t}^{T})_{0\leq t\leq T}$
be the solution of \eqref{eq:dyn_forward_inflation} with the completely
monotone kernel \eqref{eq:completely_monotone_kernel-1} and $(\tilde{I}_{t}^{T})_{0\leq t\leq T}$
solution of \eqref{eq:dyn_forward_inflation} with the approximate
kernel \eqref{eq:approx_kernel}. Suppose that the assumptions of
Lemma \ref{lem:conve_kernel} are satisfied and that, for $u\in\mathbb{C}$
such that $0\leq\mathcal{\mathfrak{R}}(u)\leq1,$ the sequence $(\exp(u\log\tilde{I}_{T}^{T}))_{N\in\mathbb{N}}$
is uniformly integrable, then
\[
\lim_{N\to\infty}\exp(A_{0}^{u;\,N})=E^{\mathbb{Q}^{T}}\bigg[\exp\bigg(u\log\frac{I_{T}^{T}}{I_{0}^{T}}\bigg)\bigg|\mathcal{F}_{0}\bigg],
\]
where $A_{0}^{u;\,N}$ is the solution of Riccati equation \eqref{eq:A_n_ODE}
with $w_{i}=\hat{w}_{i}$ and $x_{i}=\hat{x}_{i}$, for $i=1,...,N.$ 
\end{prop}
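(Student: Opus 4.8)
The plan is to combine the $L^2$-convergence of the approximating kernels $\hat G_\nu$ to $G_\nu$ (Lemma~\ref{lem:conve_kernel}) with a stability estimate for the forward-index SDE \eqref{eq:dyn_forward_inflation} with respect to the volatility kernel, and then pass to the limit using uniform integrability. First I would observe that, by Proposition~\ref{prop:chf_multi_factors_ODE} applied with weights $w_i=\hat w_i$ and mean-reversions $x_i=\hat x_i$, the quantity $\exp(A_0^{u;\,N})$ is \emph{not} the full characteristic function of $\tilde I_T^T$ evaluated at time $0$; rather, \eqref{eq:chf_markov_vol_n_factors} gives
\[
E^{\mathbb{Q}^T}\!\left[\exp\!\left(u\log\frac{\tilde I_T^T}{\tilde I_0^T}\right)\right]
=\exp\!\left(A_0^{u;\,N}+2\sum_i B_0^{u;\,i}\nu_0^i+\sum_{i,j}C_0^{u;\,ij}\nu_0^i\nu_0^j\right),
\]
and since the auxiliary factors start at $\nu_0^i=0$, the sums vanish and $\exp(A_0^{u;\,N})$ \emph{is} exactly $E^{\mathbb{Q}^T}[\exp(u\log(\tilde I_T^T/\tilde I_0^T))]$. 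So the statement reduces to showing
\[
E^{\mathbb{Q}^T}\!\left[\exp\!\left(u\log\frac{\tilde I_T^T}{\tilde I_0^T}\right)\right]\xrightarrow[N\to\infty]{}E^{\mathbb{Q}^T}\!\left[\exp\!\left(u\log\frac{ I_T^T}{ I_0^T}\right)\right].
\]

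The key step is then to prove that $\log\tilde I_T^T\to\log I_T^T$ in probability (or in $L^2(\mathbb{Q}^T)$). I would write $\log I_t^T=\log I_0^T-\tfrac12\int_0^t(\nu_s^2+2\rho_{Ir}\eta_r\nu_s B_{G_r}(s,T)+\eta_r^2 B_{G_r}(s,T)^2)\,ds+\int_0^t\nu_s\,dW_I^{\mathbb{Q}^T}+\int_0^t\eta_r B_{G_r}(s,T)\,dW_r^{\mathbb{Q}^T}$, and similarly for $\tilde I_t^T$ with $\nu$ replaced by $\tilde\nu$ (the solution of \eqref{eq:vol_forward_measure} driven by $\hat G_\nu$). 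The difference $\log\tilde I_T^T-\log I_T^T$ is controlled by $\int_0^T|\tilde\nu_s-\nu_s|^2\,ds$ plus a stochastic-integral term handled by Itô isometry, so everything comes down to the $L^2$ stability estimate
\[
\E^{\mathbb{Q}^T}\!\left[\int_0^T|\tilde\nu_s-\nu_s|^2\,ds\right]\le C\,\|\hat G_\nu-G_\nu\|_{L^2([0,T]^2,\mathbb{R})}^2,
\]
which follows from the linear Volterra structure of \eqref{eq:vol_risk-neutral_measure}–\eqref{eq:vol_forward_measure}: subtracting the two equations, bounding the drift term via Cauchy–Schwarz and the $L^2$-bound on $G_\nu$, bounding the noise term via Itô isometry, and closing with a Grönwall-type argument for Volterra equations (e.g.\ as in the a~priori estimates behind \cite[Theorem A.3]{key-4}). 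Combining this with Lemma~\ref{lem:conve_kernel} yields $\log\tilde I_T^T\to\log I_T^T$ in $L^2(\mathbb{Q}^T)$, hence in probability, hence $\exp(u\log(\tilde I_T^T/\tilde I_0^T))\to\exp(u\log(I_T^T/I_0^T))$ in probability (note $\tilde I_0^T=I_0^T$ is deterministic and kernel-independent).

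Finally, since the sequence $(\exp(u\log\tilde I_T^T))_{N}$ is assumed uniformly integrable and it converges in probability, Vitali's convergence theorem gives convergence of the expectations, which is exactly the claimed limit. The main obstacle is the Volterra stability estimate: unlike the Markovian/SDE case there is no Itô-formula shortcut, so one must argue directly at the level of the linear stochastic Volterra equation, carefully tracking how the perturbation of the kernel propagates through both the deterministic convolution term $\int_0^t G_\nu(t,s)\kappa_\nu\nu_s\,ds$ and the stochastic convolution $\int_0^t G_\nu(t,s)\eta_\nu\,dW_\nu^{\mathbb{Q}^T}(s)$, and then invoking a Grönwall lemma adapted to kernels of continuous and bounded type in $L^2$ (Definition~\ref{def:L2_kernel-1}); the shift from $\mathbb{Q}$ to $\mathbb{Q}^T$ only modifies $g_0$ into $g_0^T$ and is harmless since that modification is itself continuous in $G_\nu$ in $L^2$.
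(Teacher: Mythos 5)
Your proof follows the same overall architecture as the paper's: identify $\exp(A_0^{u;\,N})$ with the characteristic function $E^{\mathbb{Q}^T}[\exp(u\log(\tilde I_T^T/\tilde I_0^T))]$ via Proposition~\ref{prop:chf_multi_factors_ODE} and the fact that $\nu_0^i=0$, establish a convergence-in-distribution statement for $\tilde I_T^T$, and pass to the limit of the expectations using the assumed uniform integrability. Where you differ is the \emph{mode} of convergence and how it is obtained. The paper simply invokes weak convergence $\tilde I_T^T\xrightarrow{\mathcal L}I_T^T$, citing ``arguments similar to'' an external reference, and then combines it with uniform integrability (Skorokhod representation plus Vitali, implicitly). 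You instead propose to prove the stronger $L^2(\mathbb{Q}^T)$ (hence in-probability) convergence of $\log\tilde I_T^T$ directly, via a Gr\"onwall-type stability estimate for the linear Volterra equation driven by the two kernels, and then apply Vitali. That route is more self-contained and, if completed, gives a quantitative rate controlled by $\|\hat G_\nu-G_\nu\|_{L^2}$, which the paper's weak-convergence citation does not.

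One technical point in your sketch needs to be tightened. You assert that $\log\tilde I_T^T-\log I_T^T$ is ``controlled by $\int_0^T|\tilde\nu_s-\nu_s|^2\,ds$ plus a stochastic-integral term handled by It\^o isometry,'' but this misses the quadratic variance-drift term: the difference contains $-\tfrac12\int_0^T(\tilde\nu_s^2-\nu_s^2)\,ds = -\tfrac12\int_0^T(\tilde\nu_s-\nu_s)(\tilde\nu_s+\nu_s)\,ds$, which is \emph{not} bounded by $\int_0^T|\tilde\nu_s-\nu_s|^2\,ds$ alone since the factor $\tilde\nu_s+\nu_s$ is unbounded. To close the $L^2$ estimate you must invoke fourth moments of $\tilde\nu_s-\nu_s$ and $\tilde\nu_s+\nu_s$ (e.g.\ by Cauchy--Schwarz) and then use the Gaussian structure to reduce fourth moments to squared second moments, together with a uniform-in-$N$ moment bound for $\tilde\nu$. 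This is all doable with the machinery the paper already has (the moment bounds behind \cite[Theorem A.3]{key-4} and Definition~\ref{def:L2_kernel-1}), so your approach is salvageable, but the stability estimate as written is incomplete. Also note that for the final step convergence in probability is enough, so targeting $L^2$ convergence of $\log\tilde I_T^T$ is stronger than strictly necessary.
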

\begin{proof}
Using arguments similar to those of \cite{key-20}, we can show that,
for $t\in[0,T],$
\[
\tilde{I}_{t}^{T}\xrightarrow{\mathcal{L}}I_{t}^{T},
\]
as $N$ goes to infinity, where $\xrightarrow{\mathcal{L}}$ stands
for weak convergence. Therefore, using the uniform integrability of
$\exp(u\log\tilde{I}_{T}^{T})$ and the fact that $I_{0}^{T}=\tilde{I}_{0}^{T},$
we obtain that 
\[
\lim_{N\to\infty}E^{\mathbb{Q}^{T}}\bigg[\exp\bigg(u\log\frac{\tilde{I}_{T}^{T}}{\tilde{I}_{0}^{T}}\bigg)\bigg|\mathcal{F}_{0}\bigg]=E^{\mathbb{Q}^{T}}\bigg[\exp\bigg(u\log\frac{I_{T}^{T}}{I_{0}^{T}}\bigg)\bigg|\mathcal{F}_{0}\bigg],
\]
and thus, we conclude the proof using Corollary \ref{prop:chf_multi_factors_ODE}
and the fact that $\nu_{0}^{i}=0$ for $i=1,...,N.$
\end{proof}
Based on Proposition \ref{prop:second_approx_fractional_kernel},
we can deduce a second approximate solution of the characteristic
function for which a theoretical convergence result is established.
However, it is a semi-closed solution in the sense that it requires
solving Riccati equations. \\
\\
Finally, we decide to compare this ``multi-factor'' approximate
method with the ``operator discretization'' method proposed
in Section \ref{subsec:numerical_implementation_general_kernels}.
For this purpose, we consider the same kernels and model parameters
as in Section \ref{subsec:numerical_implementation_general_kernels}.
Figure \ref{fig:Implied-volatility-dynamics_H_03_multi_factors} presents
the implied volatility generated for different $N$ by the multi-factor
method where the system of Riccati $(N^{2}+N+1)$ equations \eqref{eq:A_n_ODE}-\eqref{eq:B_n_ODE}-\eqref{eq:C_n_ODE}
is solved numerically using an implicit method. We observe a convergence
as the number of factors increases. However,
the speed of convergence differs between the methods. As revealed by Figure
\ref{fig:Implied-volatility-ATM_H_01_}, the operator discretization
approach converges much faster than the multi-factor approach, either
with respect to $N$ or in terms of computation time. Nevertheless, it
should be noted that the speed of convergence of the ``multi-factor''
method depends strongly on the choice of multi-factor parameters $(\hat{w}_{i},\hat{x}_{i})_{i=1,...,N}$
, as well as the numerical method used to solve the system of Riccati $(N^{2}+N+1)$ equations and other choices than those used in this paper could
give a faster speed of convergence. The aim of this comparison is
above all to illustrate that the ``operator discretization'' method
converges quickly and is fairly simple to implement compared with
the ``multi-factor'' method. 

\begin{figure}[H]
\centering{}\includegraphics[width=0.5\textwidth]{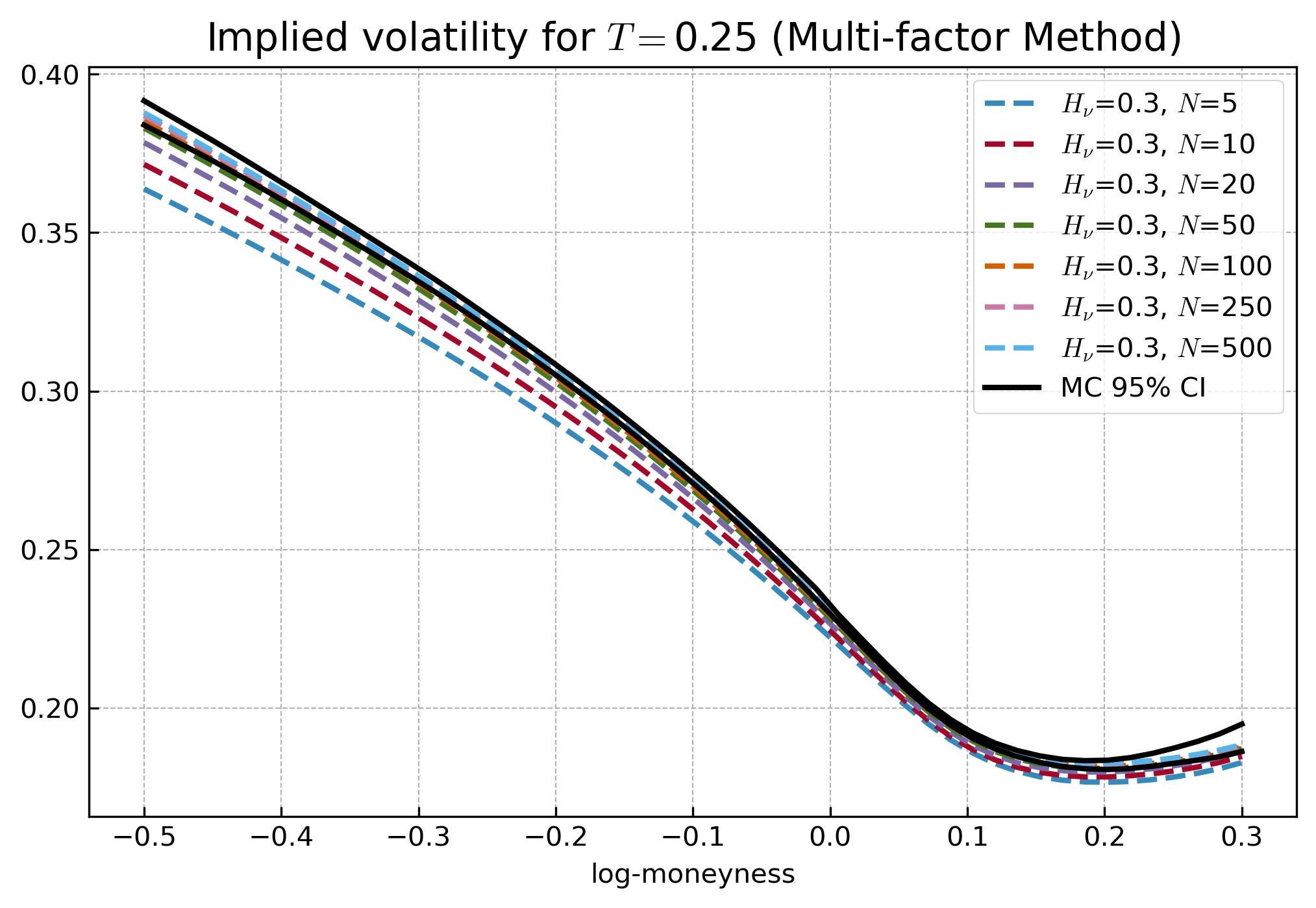}\caption{\protect\label{fig:Implied-volatility-dynamics_H_03_multi_factors}
Implied volatility dynamics generated by the multi-factor approximation
method with $H_{\nu}=0.3$ (right). Monte Carlo confidence intervals
are generated with $200\:000$ simulations of risk processes using
a Euler scheme with with $500$ as discretization steps.}
\end{figure}
 
\begin{figure}[H]
\centering{}\includegraphics[width=0.5\textwidth]{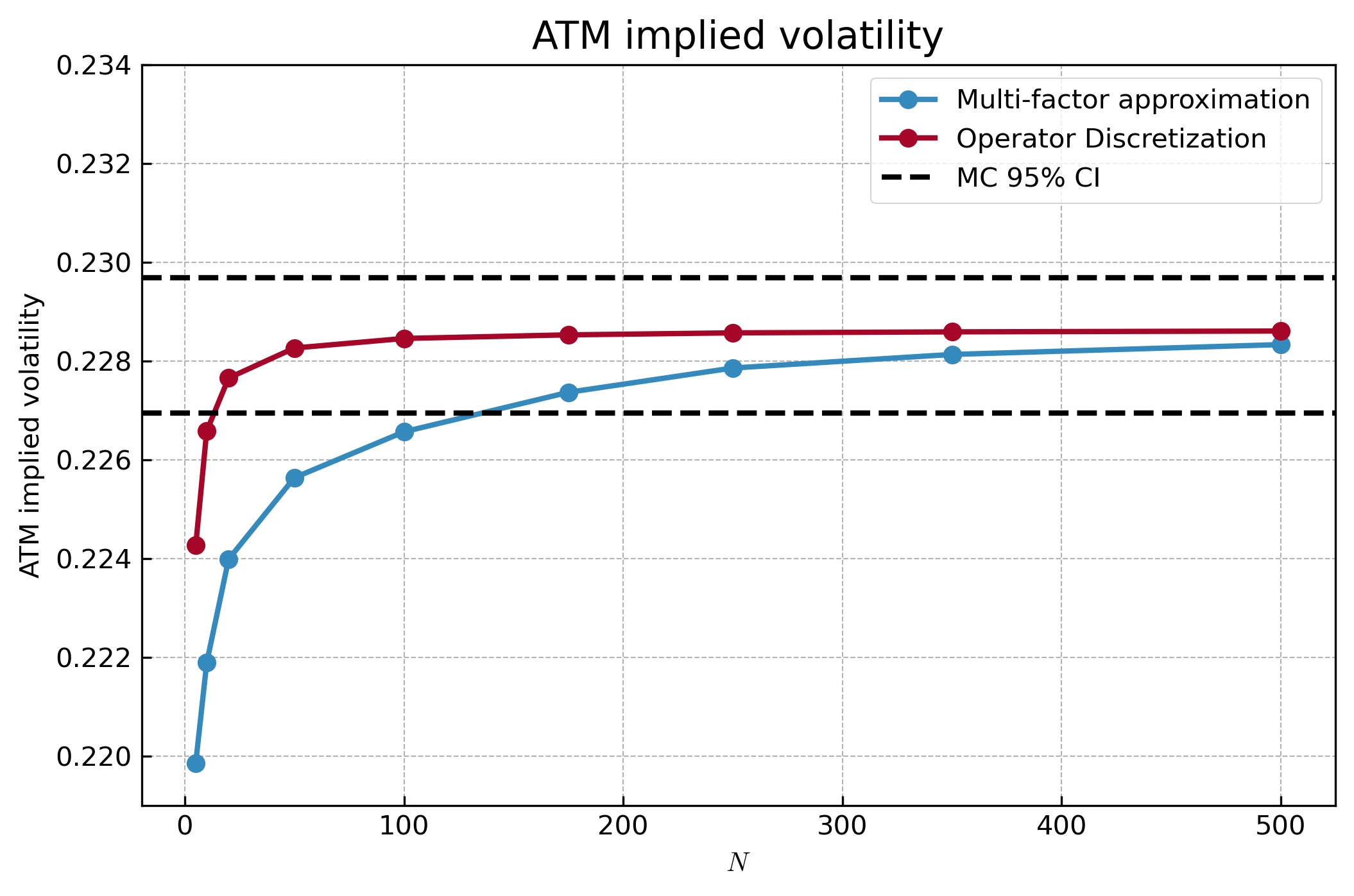}\includegraphics[width=0.49\textwidth]{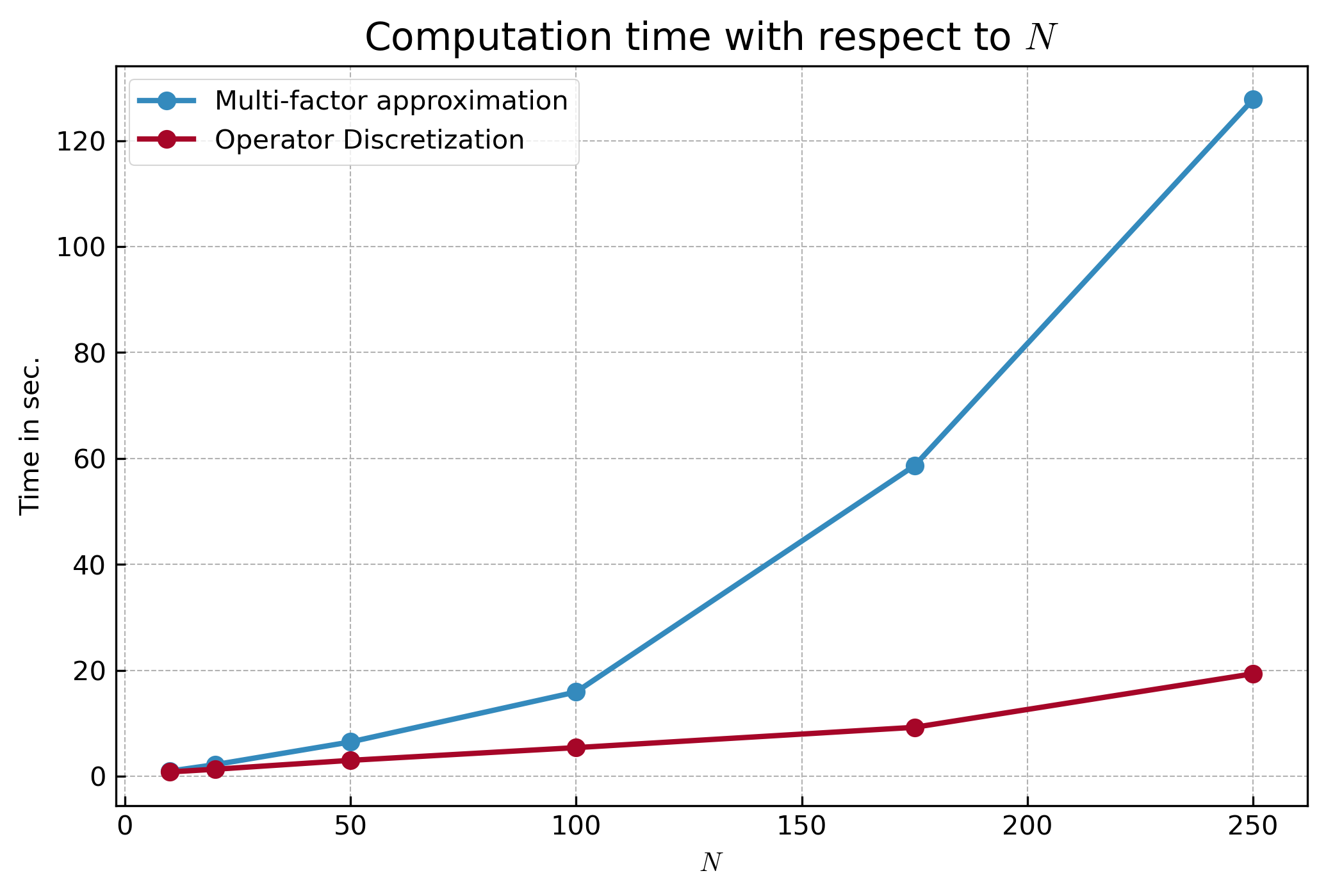}\caption{\protect\label{fig:Implied-volatility-ATM_H_01_}ATM forward implied
volatility generated by the operator discretization and multi-factor
approximation methods with $H_{\nu}=0.3$ (left) and the associated
computation time (right).}
\end{figure}

\section{Proofs\protect\label{sec:Proofs}}

\subsection{Proof of Theorem \ref{thm:chf_general}}

We first consider two useful lemmas before proving Theorem \ref{thm:chf_general}. 
\begin{lem}
\label{lem:integral_operator}For $u\in\mathbb{C}$ such that $0\leq\mathcal{\mathfrak{R}}(u)\leq1,$
there exists an integral operator $\bar{\mathbf{\Psi}}_{t}^{u}$ induced
by a symmetric kernel $\bar{\psi}_{t}^{u}(s,w)$ such that 
\begin{equation}
\mathbf{\Psi}_{t}^{u}=a^{u}\text{id}+\bar{\mathbf{\Psi}}_{t}^{u},\label{eq:operator_wrt_integral_op}
\end{equation}
and for any $f\in L^{2}([0,T],\mathbb{C}),$
\begin{equation}
(\mathbf{\Psi}_{t}^{u}f1_{t})(t)=(a^{u}\text{id}+b^{u}\mathbf{G}_{\nu}^{*}\mathbf{\Psi}_{t}^{u})(f1_{t})(t),\label{eq:operator_f_t}
\end{equation}
with $1_{t}:s\to1_{t\leq s}.$ Moreover, the kernel $\bar{\psi}_{t}(s,w)$
satisfies 
\begin{equation}
\bar{\psi}_{t}^{u}(t,s)=\bar{\psi}_{t}^{u}(s,t)=b^{u}\int_{t}^{T}G_{\nu}(w,t)(a\delta_{w=s}+\bar{\psi}_{t}^{u}(s,w))dw,\:t\leq s\leq T,\:a.e.\label{eq:cond_psi_t}
\end{equation}
\end{lem}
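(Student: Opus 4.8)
The plan is to establish the three claims of Lemma~\ref{lem:integral_operator} in order, working directly from the definition \eqref{eq:def_psi_t} of $\mathbf{\Psi}_{t}^{u}$. First I would prove the decomposition \eqref{eq:operator_wrt_integral_op}. Write $\mathbf{\Psi}_{t}^{u} = (\text{id}-b^{u}\mathbf{G}_{\nu}^{*})^{-1}a^{u}(\text{id}-2a^{u}\tilde{\mathbf{\Sigma}}_{t}^{u})^{-1}(\text{id}-b^{u}\mathbf{G}_{\nu})^{-1}$ and expand. The key observation is that $\mathbf{G}_{\nu}$, $\mathbf{G}_{\nu}^{*}$ and $\tilde{\mathbf{\Sigma}}_{t}^{u}$ are all integral (Hilbert--Schmidt) operators, hence so are $(\text{id}-b^{u}\mathbf{G}_{\nu})^{-1}-\text{id}$, $(\text{id}-b^{u}\mathbf{G}_{\nu}^{*})^{-1}-\text{id}$ and $(\text{id}-2a^{u}\tilde{\mathbf{\Sigma}}_{t}^{u})^{-1}-\text{id}$ (each resolvent-minus-identity is an integral operator because the underlying operator is, by the Fredholm resolvent formula, assuming the inverses exist --- which is the content of the remark after the definition that $\mathbf{\Psi}_{t}^{u}$ is well defined). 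Then $\mathbf{\Psi}_{t}^{u} = a^{u}\,\text{id} + (\text{everything else})$, where ``everything else'' is a sum of products each containing at least one integral operator and is therefore itself an integral operator; call its kernel $\bar{\psi}_{t}^{u}$. Symmetry of $\bar{\psi}_{t}^{u}$ follows from the structure $(\text{id}-b^{u}\mathbf{G}_{\nu}^{*})^{-1}\,M\,(\text{id}-b^{u}\mathbf{G}_{\nu})^{-1}$ with $M := a^{u}(\text{id}-2a^{u}\tilde{\mathbf{\Sigma}}_{t}^{u})^{-1}$ self-adjoint (since $\tilde{\mathbf{\Sigma}}_{t}^{u}$ is self-adjoint, being a covariance operator conjugated symmetrically), so that $(\mathbf{\Psi}_{t}^{u})^{*} = \mathbf{\Psi}_{t}^{u}$, which forces $\bar{\psi}_{t}^{u}(s,w)=\bar{\psi}_{t}^{u}(w,s)$.

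Next I would derive the pointwise relation \eqref{eq:operator_f_t}. The idea is to factor out the leftmost resolvent: from \eqref{eq:def_psi_t},
\[
(\text{id}-b^{u}\mathbf{G}_{\nu}^{*})\,\mathbf{\Psi}_{t}^{u} = a^{u}(\text{id}-2a^{u}\tilde{\mathbf{\Sigma}}_{t}^{u})^{-1}(\text{id}-b^{u}\mathbf{G}_{\nu})^{-1},
\]
hence $\mathbf{\Psi}_{t}^{u} = a^{u}(\text{id}-2a^{u}\tilde{\mathbf{\Sigma}}_{t}^{u})^{-1}(\text{id}-b^{u}\mathbf{G}_{\nu})^{-1} + b^{u}\mathbf{G}_{\nu}^{*}\mathbf{\Psi}_{t}^{u}$. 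Applying this to $f1_{t}$ and evaluating at the point $t$, the crucial cancellation is that the operators $\mathbf{G}_{\nu}$, $\tilde{\mathbf{\Sigma}}_{t}^{u}$ and their resolvents, when applied to a function supported on $[t,T]$ and evaluated at $t$, reproduce only the value at $t$: indeed $(\mathbf{G}_{\nu}\varphi)(t)=\int_{0}^{T}G_{\nu}(t,z)\varphi(z)\,dz = 0$ since $G_{\nu}(t,z)=0$ for $z\geq t$, so $(\text{id}-b^{u}\mathbf{G}_{\nu})^{-1}(f1_{t})$ and $(\text{id}-2a^{u}\tilde{\mathbf{\Sigma}}_{t}^{u})^{-1}$ of it agree with $(f1_{t})$ at the point $t$ up to the $a^{u}$ factor. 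This gives $(\mathbf{\Psi}_{t}^{u}f1_{t})(t) = a^{u}(f1_{t})(t) + b^{u}(\mathbf{G}_{\nu}^{*}\mathbf{\Psi}_{t}^{u}f1_{t})(t)$, which is \eqref{eq:operator_f_t}. I would be careful here to justify that evaluation at a point is meaningful --- this needs the kernels to have enough regularity (the continuity-in-$L^2$ condition of Definition~\ref{def:L2_kernel-1} plus the fact that $\tilde{\mathbf{\Sigma}}_{t}^{u}$ has a continuous kernel), so the argument should be read in the appropriate a.e./continuous-version sense, exactly as in \cite{key-4}.

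Finally, for \eqref{eq:cond_psi_t}, I would specialize \eqref{eq:operator_f_t}. Taking $f = \delta$-like test functions, or more cleanly reading off the kernel identity corresponding to the operator identity $\mathbf{\Psi}_{t}^{u}(\cdot\, 1_{t})\big|_{t} = a^{u}(\cdot\,1_{t})\big|_{t} + b^{u}\mathbf{G}_{\nu}^{*}\mathbf{\Psi}_{t}^{u}(\cdot\,1_{t})\big|_{t}$: the kernel of $s\mapsto \mathbf{\Psi}_{t}^{u}$ restricted to its first argument equal to $t$ is $\bar{\psi}_{t}^{u}(t,s)$ (the $a^{u}\text{id}$ part contributes nothing off-diagonal after the evaluation, or contributes the $a\delta$ term inside the integral). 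Using $\mathbf{\Psi}_{t}^{u} = a^{u}\text{id}+\bar{\mathbf{\Psi}}_{t}^{u}$ on the right-hand side, $(\mathbf{G}_{\nu}^{*}\mathbf{\Psi}_{t}^{u})(t,s) = \int_{t}^{T}G_{\nu}(w,t)\big(a^{u}\delta_{w=s}+\bar{\psi}_{t}^{u}(w,s)\big)dw$, and multiplying by $b^{u}$ yields \eqref{eq:cond_psi_t}; symmetry $\bar{\psi}_{t}^{u}(t,s)=\bar{\psi}_{t}^{u}(s,t)$ was already established. The main obstacle I anticipate is the rigorous justification of pointwise evaluation at $t$ and of the identity ``resolvent-of-Volterra-operator applied to a function on $[t,T]$, evaluated at $t$, equals identity there'' --- i.e.\ making precise in which sense $\bar{\psi}_{t}^{u}(t,\cdot)$ and the traces $(\mathbf{G}_{\nu}^{*}\varphi)(t)$ are defined, given that a priori these objects live only in $L^2$; this is handled by passing to continuous representatives and invoking the boundedness/continuity properties from Definition~\ref{def:L2_kernel-1}, along the lines of the analogous lemma in \cite{key-4}. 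The algebraic manipulations with the three resolvents are routine once the functional-analytic setup is in place.
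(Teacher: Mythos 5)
Your proposal is correct, and the route you take for \eqref{eq:cond_psi_t} is essentially the paper's: write $(\bar{\mathbf{\Psi}}_{t}^{u}f1_{t})(t)$ once via the symmetric kernel $\bar{\psi}_{t}^{u}(t,\cdot)$, once via \eqref{eq:operator_f_t} after substituting $\mathbf{\Psi}_{t}^{u}=a^{u}\text{id}+\bar{\mathbf{\Psi}}_{t}^{u}$, and match the two integrands against arbitrary $f\in L^{2}$.

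The genuine difference is in how you treat \eqref{eq:operator_wrt_integral_op}--\eqref{eq:operator_f_t}. The paper simply quotes these from \cite[Lemma B.1]{key-4} and only derives \eqref{eq:cond_psi_t}; you reconstruct them from the definition of $\mathbf{\Psi}_{t}^{u}$. Your argument for \eqref{eq:operator_wrt_integral_op} (each resolvent-minus-identity is Hilbert--Schmidt, so $\mathbf{\Psi}_{t}^{u}-a^{u}\text{id}$ is a sum of products each containing at least one Hilbert--Schmidt factor) and for \eqref{eq:operator_f_t} (left-multiply by $\text{id}-b^{u}\mathbf{G}_{\nu}^{*}$, then use the Volterra support structure so that $(\text{id}-2a^{u}\tilde{\mathbf{\Sigma}}_{t}^{u})^{-1}(\text{id}-b^{u}\mathbf{G}_{\nu})^{-1}(f1_{t})$ agrees with $f1_{t}$ at the point $t$) are both sound and are presumably the content of the cited lemma. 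The latter step deserves one more sentence than you give it: you mostly argue via $G_{\nu}(t,z)=0$ for $z\geq t$, but you also need $(\tilde{\mathbf{\Sigma}}_{t}^{u}\psi)(t)=0$ for arbitrary $\psi$, which follows from $\Sigma_{t}(t,w)=\eta_{\nu}^{2}\int_{t}^{T}G_{\nu}(t,z)G_{\nu}(w,z)\,dz=0$ and the fact that $\Sigma_{t}(s,\cdot)$ vanishes identically for $s<t$, so that the left factor $(\text{id}-b^{u}\mathbf{G}_{\nu})^{-1}$ again acts as the identity at $t$. One small terminological point: since $a^{u},b^{u}\in\mathbb{C}$, the relevant symmetry is transpose symmetry $\bar{\psi}_{t}^{u}(s,w)=\bar{\psi}_{t}^{u}(w,s)$ under the paper's convention $G^{*}(s,w)=G(w,s)$ (no conjugation), not Hilbert-space self-adjointness; your argument gives exactly that, but ``self-adjoint'' is the wrong word in the complex setting and worth avoiding. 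The remaining caveats you flag about pointwise evaluation are the right ones and are, as you note, handled by the continuity-in-$L^{2}$ assumption in Definition~\ref{def:L2_kernel-1}.
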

\begin{proof}
From \cite[Lemma B.1.]{key-4}, we know that in our setting there
exists an integral operator $\bar{\mathbf{\Psi}}_{t}^{u}$ induced
by a symmetric kernel $\bar{\psi}_{t}(s,u)$ such that \eqref{eq:operator_wrt_integral_op}
and \eqref{eq:operator_f_t} are satisfied. Moreover, from \eqref{eq:operator_f_t},
since $\mathbf{G}_{\nu}^{*}$ is an integral operator and $\bar{\psi}_{t}^{u}(s,w)$
is symmetric, we deduce that, for any $f\in L^{2}([0,T],\mathbb{C})$,
\begin{align*}
(\bar{\mathbf{\Psi}}_{t}^{u}f1_{t})(t) & =(b^{u}\mathbf{G}_{\nu}^{*}\mathbf{\Psi}_{t}^{u})(f1_{t})(t)\\
 & =b^{u}\int_{t}^{T}G_{\nu}^{*}(t,s)\:(\mathbf{\Psi}_{t}^{u}f1_{t})(s)\:ds\\
 & =b^{u}\int_{t}^{T}f(s)\int_{t}^{T}G_{\nu}(w,t)(a^{u}\delta_{w=s}+\bar{\psi}_{t}^{u}(s,w))dw\:ds
\end{align*}
Moreover, since $\bar{\mathbf{\Psi}}_{t}$ is a symmetric integral
operator, we also have that 
\begin{align*}
(\bar{\mathbf{\Psi}}_{t}^{u}f1_{t})(t) & =\int_{t}^{T}\bar{\psi}_{t}^{u}(t,s)\:f(s)\:ds.\\
 & =\int_{t}^{T}\bar{\psi}_{t}^{u}(s,t)\:f(s)\:ds
\end{align*}
We conclude that, for any $f\in L^{2}([0,T],\mathbb{C})$, 
\begin{align*}
0 & =\int_{t}^{T}f(s)\bigg[\bar{\psi}_{t}^{u}(s,t)-b^{u}\int_{t}^{T}G_{\nu}(w,t)(a^{u}\delta_{w=s}+\bar{\psi}_{t}^{u}(s,w))dw\bigg]\:ds\\
 & =\int_{t}^{T}f(s)\bigg[\bar{\psi}_{t}^{u}(t,s)-b^{u}\int_{t}^{T}G_{\nu}(w,t)(a^{u}\delta_{w=s}+\bar{\psi}_{t}^{u}(s,w))dw\bigg]\:ds,
\end{align*}
and thus, for $t\leq s\leq T,$
\[
\bar{\psi}_{t}^{u}(t,s)=\bar{\psi}_{t}^{u}(s,t)=b^{u}\int_{t}^{T}G_{\nu}(w,t)(a^{u}\delta_{w=s}+\bar{\psi}_{t}^{u}(s,w))dw,
\]
almost everywhere. 
\end{proof}

\begin{lem} 
\label{lem:h_Psi_t_dynamic}For $u\in\mathbb{C}$ such that $0\leq\mathcal{\mathfrak{R}}(u)\leq1$
and $t\in[0,T],$ let us define $h_{t}^{u}(.)$ such that 
\[
h_{t}^{u}(s):=g_{t}(s)+1_{t\leq s}\bigg(\rho_{Ir}\eta_{r}B_{G_{r}}(s,T)-\int_{t}^{s}G_{\nu}(s,w)\:(b^{u}\rho_{Ir}-u\eta_{\nu}\rho_{\nu r})\eta_{r}B_{G_{r}}(w,T)\:dw\bigg),\;s,t\leq T.
\]
Then the dynamics of $t\to\langle h_{t}^{u},\mathbf{\Psi}_{t}^{u}h_{t}^{u}\rangle_{L^{2}}$
is given by 
\begin{align}
d\langle h_{t}^{u},\mathbf{\Psi}_{t}^{u}h_{t}^{u}\rangle_{L^{2}}= & \bigg(-a^{u}\bigg(\nu_{t}+\rho_{Ir}\eta_{r}B_{G_{r}}(t,T)\bigg)^{2}\nonumber \\
 & -2u\eta_{\nu}\left(\rho_{I\nu}\nu_{t}+\rho_{\nu r}\eta_{r}B_{G_{r}}(t,T)\right)(\mathbf{G}_{\nu}^{*}\mathbf{\Psi}_{t}^{u}h_{t})(t)\nonumber \\
 & -Tr(\mathbf{\Psi}_{t}^{u}\dot{\mathbf{\Sigma}}_{t})+\langle h_{t}^{u},\dot{\mathbf{\Psi}}_{t}^{u}h_{t}^{u}\rangle_{L^{2}}\bigg)\:dt\nonumber \\
 & +2\eta_{\nu}(\mathbf{G}_{\nu}^{*}\mathbf{\Psi}_{t}^{u}h_{t}^{u})(t)\:dW_{\nu}^{\mathbb{Q}^{T}}(t).\label{eq:dyn_quad_function}
\end{align}
\end{lem}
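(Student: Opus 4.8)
The plan is to differentiate $t\mapsto\langle h_t^u,\mathbf{\Psi}_t^u h_t^u\rangle_{L^2}$ directly, following the computation behind the analogous lemma in \cite{key-4} and carefully tracking the additional terms produced by the difference $h_t^u-g_t$.

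First I would compute, for fixed $s\in(0,T]$, the semimartingale dynamics of $t\mapsto h_t^u(s)$. Inserting the closed form $g_t(s)=1_{t\le s}\big(g_0^T(s)+\int_0^tG_\nu(s,w)\kappa_\nu\nu_w\,dw+\int_0^tG_\nu(s,w)\eta_\nu\,dW_\nu^{\mathbb{Q}^T}(w)\big)$ into the definition of $h_t^u$ and applying Leibniz's rule to the term $-\int_t^sG_\nu(s,w)(b^u\rho_{Ir}-u\eta_\nu\rho_{\nu r})\eta_rB_{G_r}(w,T)\,dw$, one finds, for $t<s\le T$,
\begin{equation*}
dh_t^u(s)=G_\nu(s,t)\Big[\big(\kappa_\nu\nu_t+(b^u\rho_{Ir}-u\eta_\nu\rho_{\nu r})\eta_rB_{G_r}(t,T)\big)\,dt+\eta_\nu\,dW_\nu^{\mathbb{Q}^T}(t)\Big].
\end{equation*}
Moreover, comparing $g_t(t)$ with equation \eqref{eq:vol_forward_measure} satisfied by $\nu$, and noting that the $\int_t^s$ term vanishes at $s=t$, one obtains the identity $h_t^u(t)=\nu_t+\rho_{Ir}\eta_rB_{G_r}(t,T)$, which will be crucial for the final simplification.

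Next I would use Lemma \ref{lem:integral_operator} to write $\mathbf{\Psi}_t^u=a^u\,\mathrm{id}+\bar{\mathbf{\Psi}}_t^u$, with $\bar{\mathbf{\Psi}}_t^u$ induced by the symmetric kernel $\bar\psi_t^u$, so that, since $h_t^u$ vanishes on $[0,t)$,
\begin{equation*}
\langle h_t^u,\mathbf{\Psi}_t^u h_t^u\rangle_{L^2}=a^u\int_t^Th_t^u(s)^2\,ds+\int_t^T\!\int_t^Th_t^u(s)h_t^u(w)\,\bar\psi_t^u(s,w)\,ds\,dw .
\end{equation*}
Applying Itô's formula to the right-hand side I would collect four groups of contributions: (i) the boundary terms from the moving lower limits $t$ in the simple and the double integral, which add up to $-a^uh_t^u(t)^2\,dt-2h_t^u(t)\,(\bar{\mathbf{\Psi}}_t^u h_t^u)(t)\,dt$; (ii) the first-order term $2\langle dh_t^u,\mathbf{\Psi}_t^u h_t^u\rangle$, into which I substitute the dynamics above and use $\int_0^TG_\nu(s,t)(\mathbf{\Psi}_t^u h_t^u)(s)\,ds=(\mathbf{G}_\nu^*\mathbf{\Psi}_t^u h_t^u)(t)=:m_t$, getting a drift $2\big(\kappa_\nu\nu_t+(b^u\rho_{Ir}-u\eta_\nu\rho_{\nu r})\eta_rB_{G_r}(t,T)\big)m_t\,dt$ and a martingale part $2\eta_\nu m_t\,dW_\nu^{\mathbb{Q}^T}(t)$; (iii) the quadratic-variation correction $\eta_\nu^2\langle G_\nu(\cdot,t),\mathbf{\Psi}_t^u G_\nu(\cdot,t)\rangle_{L^2}\,dt$, which equals $-\mathrm{Tr}(\mathbf{\Psi}_t^u\dot{\mathbf{\Sigma}}_t)\,dt$ because $\dot{\mathbf{\Sigma}}_t$ is induced by $-\eta_\nu^2G_\nu(s,t)G_\nu(w,t)$ and by the trace formula for products of integral operators; (iv) the term $\langle h_t^u,\dot{\bar{\mathbf{\Psi}}}_t^u h_t^u\rangle\,dt=\langle h_t^u,\dot{\mathbf{\Psi}}_t^u h_t^u\rangle\,dt$ from the time-dependence of $\bar\psi_t^u$, the operator $\dot{\mathbf{\Psi}}_t^u$ being available from Lemma B.1 in \cite{key-4}. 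The martingale part already coincides with the last line of \eqref{eq:dyn_quad_function}.

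It then remains to simplify the drift. By the identity \eqref{eq:operator_f_t} applied to $f1_t=h_t^u$ one has $(\bar{\mathbf{\Psi}}_t^u h_t^u)(t)=b^um_t$, so the boundary terms read $-a^uh_t^u(t)^2\,dt-2b^uh_t^u(t)\,m_t\,dt$; adding these to the drift from (ii) and substituting $h_t^u(t)=\nu_t+\rho_{Ir}\eta_rB_{G_r}(t,T)$ together with $b^u=\kappa_\nu+u\eta_\nu\rho_{I\nu}$, the $m_t$-terms collapse after a short algebraic manipulation to $-2u\eta_\nu\big(\rho_{I\nu}\nu_t+\rho_{\nu r}\eta_rB_{G_r}(t,T)\big)m_t\,dt$, which is exactly \eqref{eq:dyn_quad_function}. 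The step I expect to be the main obstacle is the bookkeeping of the boundary terms in (i): since $\mathbf{\Psi}_t^u$ effectively acts on functions supported on $[t,T]$, the double integral above has a moving lower limit in \emph{both} variables, and overlooking the second one produces a spurious $b^uh_t^u(t)\,m_t$ drift and hence the wrong formula. The remaining analytical points — differentiation under the integral sign, a stochastic Fubini argument to write the differential of the quadratic functional, and the semimartingale property of $t\mapsto\langle h_t^u,\mathbf{\Psi}_t^u h_t^u\rangle_{L^2}$ — are routine given Definition \ref{def:L2_kernel-1}, the moment bounds on $(r,\nu)$, and the regularity of $t\mapsto\mathbf{\Psi}_t^u$.
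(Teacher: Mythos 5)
Your proposal is correct and follows essentially the same route as the paper's proof: compute $dh_t^u(s)$ and the boundary identity $h_t^u(t)=\nu_t+\rho_{Ir}\eta_r B_{G_r}(t,T)$, expand the quadratic form via Lemma~\ref{lem:integral_operator}, apply the Leibniz/It\^o rule tracking the moving lower limits in both integration variables (the paper hides the second boundary term inside $d(\bar{\mathbf{\Psi}}_t^u h_t^u)(s)$, while you make it explicit in the double integral), identify the quadratic-variation correction with $-\mathrm{Tr}(\mathbf{\Psi}_t^u\dot{\mathbf{\Sigma}}_t)$, and finally collapse the drift using $(\bar{\mathbf{\Psi}}_t^u h_t^u)(t)=b^u(\mathbf{G}_\nu^*\mathbf{\Psi}_t^u h_t^u)(t)$ from~\eqref{eq:operator_f_t} and $b^u=\kappa_\nu+u\eta_\nu\rho_{I\nu}$. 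The bookkeeping of the two boundary terms that you flag as the delicate point is indeed exactly where the paper's calculation requires care, and you handle it correctly.
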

\begin{proof}
The proof is obtained directly by adapting the proof of \cite[Lemma B.2.]{key-4} with $h_t^u$ instead of $g_t$.

\end{proof}
Based now of Lemma \ref{lem:integral_operator} and \ref{lem:h_Psi_t_dynamic},
we are now ready to prove Theorem \ref{thm:chf_general}. 
\begin{proof}
As explained in \cite{key-4}, it is sufficient to make the proof
for $u\in\mathbb{R}$ such $0\leq u\leq1.$ Fix $u\in[0,1]$ and consider
the processes $(U_{t})_{0\leq t\leq T}$ and $(M_{t})_{0\leq t\leq T}$
defined, for $t\in[0,T],$ by 
\begin{equation}
U_{t}=u\log I_{t}^{T}+\phi_{t}^{u}+\chi_{t}^{u}+\langle h_{t}^{u},\mathbf{\Psi}_{t}^{u}h_{t}^{u}\rangle_{L^{2}},\label{eq:dyn_u}
\end{equation}
and 
\[
M_{t}=\exp(U_{t}).
\]
 If we prove that $(M_{t})_{0\leq t\leq T}$ is a martingale under
$\mathbb{Q}^{T}$, then the proof is complete since, in this case,
we have that 
\[
E^{\mathbb{Q}^{T}}(M_{T}|\mathcal{F}_{t})=M_{t},
\]
and thus 
\[
E^{\mathbb{Q}_{}^{T}}\bigg[\exp\bigg(u\log\frac{I_{T}^{T}}{I_{t}^{T}}\bigg)\bigg|\mathcal{F}_{t}\bigg]=\exp(\phi_{t}^{u}+\chi_{t}^{u}+\langle h_{t}^{u},\mathbf{\Psi}_{t}^{u}h_{t}^{u}\rangle_{L^{2}}).
\]
Let us prove that $(M_{t})_{0\leq t\leq T}$ is a (true) martingale
under $\mathbb{Q}^{T}$. First, we can prove that $(M_{t})_{0\leq t\leq T}$
is a local martingale by showing that the drift of the dynamic of
$(M_{t})_{0\leq t\leq T}$ is null. Using the dynamic of $(I_{t}^{T})_{0\leq t\leq T}$
given by \eqref{eq:dyn_forward_inflation}, the dynamic of $(\langle h_{t}^{u},\mathbf{\Psi}_{t}^{u}h_{t}^{u}\rangle_{L^{2}})_{0\leq t\leq T}$
given by \eqref{eq:dyn_quad_function} and the fact that $a^{u}=\frac{1}{2}(u^{2}-u)$,
we observe that 
\begin{align*}
dU_{t}= & u\:d(\log I_{t}^{T})+(\dot{\phi}_{t}^{u}+\dot{\chi}_{t}^{u})dt+d\langle h_{t}^{u},\mathbf{\Psi}_{t}^{u}h_{t}^{u}\rangle_{L^{2}}\\
= & \bigg(-\frac{u^{2}}{2}\bigg(\nu_{t}^{2}+\eta_{r}^{2}B_{G_{r}}(t,T)^{2}+2\nu_{t}\eta_{r}B_{G_{r}}(t,T)\rho_{Ir}\bigg)\\
 & +\langle h_{t}^{u},\dot{\mathbf{\Psi}}_{t}^{u}h_{t}^{u}\rangle_{L^{2}}-2u\eta_{\nu}(\rho_{I\nu}\nu_{t}+\rho_{\nu r}\eta_{r}B_{G_{r}}(t,T))(\mathbf{G}_{\nu}^{*}\mathbf{\Psi}_{t}^{u}h_{t}^{u})(t)\:\bigg)\:dt\\
 & +u\nu_{t}\:dW_{I}^{\mathbb{Q}^{T}}(t)+u\eta_{r}B_{G_{r}}(t,T)dW_{r}^{\mathbb{Q}^{T}}(t)+2\eta_{\nu}(\mathbf{G}_{\nu}^{*}\mathbf{\Psi}_{t}^{u}h_{t}^{u})(t))\:dW_{\nu}^{\mathbb{Q}^{T}}(t).
\end{align*}
Moreover, the quadratic variation of $(U_{t})_{0\leq t\leq T}$ satisfies
\begin{align*}
d\langle U\rangle_{t}= & \bigg((u^{2}\nu_{t}^{2}+4\eta_{\nu}^{2}(\mathbf{G}_{\nu}^{*}\mathbf{\Psi}_{t}^{u}h_{t}^{u})(t))^{2}+u^{2}\eta_{r}^{2}B_{G_{r}}(t,T)^{2}\\
 & +2u^{2}\nu_{t}\rho_{Ir}\eta_{r}B_{G_{r}}(t,T)\\
 & +4u\eta_{\nu}(\mathbf{G}_{\nu}^{*}\mathbf{\Psi}_{t}^{u}h_{t}^u)(t))\bigg(\rho_{I\nu}\nu_{t}+\rho_{r\nu}\eta_{r}B_{G_{r}}(t,T)\bigg)\:\bigg)dt.
\end{align*}
As the dynamic of $(M_{t})_{0\leq t\leq T}$ can be written such that
\[
dM_{t}=M_{t}\,\bigg(\frac{1}{2}d\langle U\rangle_{t}+dU_{t}\bigg),
\]
we easily obtain that 
\begin{align*}
dM_{t}= & M_{t}\,\bigg(2\eta_{\nu}^{2}((\mathbf{G}_{\nu}^{*}\mathbf{\Psi}_{t}^{u}h_{t}^{u})(t))^{2}+\langle h_{t}^{u},\dot{\mathbf{\Psi}}_{t}^{u}h_{t}^{u}\rangle_{L^{2}}\bigg)\:dt\\
 & +M_{t}\,\bigg(u\nu_{t}\:dW_{I}^{\mathbb{Q}^{T}}(t)+u\eta_{r}B_{G_{r}}(t,T)dW_{r}^{\mathbb{Q}_{}^{T}}(t)+2\eta_{\nu}(\mathbf{G}_{\nu}^{*}\mathbf{\Psi}_{t}^{u}h_{t}^{u})(t))\:dW_{\nu}^{\mathbb{Q}_{}^{T}}(t)\bigg),
\end{align*}
but as 
\[
2\eta_{\nu}^{2}((\mathbf{G}_{\nu}^{*}\mathbf{\Psi}_{t}^{u}h_{t}^{u})(t))^{2}=-2\langle h_{t}^{u},\mathbf{\Psi}_{t}^{u}\dot{\mathbf{\Sigma}}_{t}\mathbf{\Psi}_{t}^{u}h_{t}^{u}\rangle_{L^{2}},
\]
we have that 
\begin{align*}
dM_{t}= & M_{t}\,\bigg(\langle h_{t}^{u},(\dot{\mathbf{\Psi}}_{t}^{u}-2\mathbf{\Psi}_{t}^{u}\dot{\mathbf{\Sigma}}_{t}\mathbf{\Psi}_{t}^{u})h_{t}^{u}\rangle_{L^{2}}\bigg)\:dt\\
 & +M_{t}\,\bigg(u\nu_{t}\:dW_{I}^{\mathbb{Q}^{T}}(t)+u\eta_{r}B_{G_{r}}(t,T)dW_{r}^{\mathbb{Q}^{T}}(t)+2\eta_{\nu}(\mathbf{G}_{\nu}^{*}\mathbf{\Psi}_{t}^{u}h_{t}^{u})(t))\:dW_{I}^{\mathbb{Q}^{T}}(t)\bigg)\\
= & M_{t}\,\bigg(u\nu_{t}\:dW_{I}^{\mathbb{Q}^{T}}(t)+u\eta_{r}B_{G_{r}}(t,T)dW_{r}^{\mathbb{Q}^{T}}(t)+2\eta_{\nu}(\mathbf{G}_{\nu}^{*}\mathbf{\Psi}_{t}^{u}h_{t}^{u})(t))\:dW_{I}^{\mathbb{Q}^{T}}(t)\bigg),
\end{align*}
and since, from \cite[Lemma B.1.]{key-4},
\[
\dot{\mathbf{\Psi}}_{t}^{u}-2\mathbf{\Psi}_{t}^{u}\dot{\mathbf{\Sigma}}_{t}\mathbf{\Psi}_{t}^{u}=0,
\]
we have that $(M_{t})_{0\leq t\leq T}$ is a local martingale. It
remains to show that this is a true martingale. Since $\langle h_{t}^{u},\dot{\mathbf{\Psi}}_{t}^{u}h_{t}^{u}\rangle_{L^{2}}\leq0$
and $\phi_{t}^{u}=-\int_{t}^{T}Tr(\mathbf{\Psi}_{s}^{u}\dot{\mathbf{\Sigma}}_{s})\:ds\leq0$,
from \eqref{eq:dyn_u}, it follows that 

\[
U_{t}\leq u\log I_{t}^{T}+\chi_{t}^{u}.
\]
Furthermore, we observe that
\begin{align*}
U_{t}\leq & u\log I_{0}^{T}-\frac{u}{2}^{2}\int_{0}^{t}\bigg(\,\bigg(\nu_{s}+\rho_{Ir}\eta_{r}B_{G_{r}}(s,T)\bigg)^{2}+(1-\rho_{Ir}^{2})\eta_{r}^{2}B_{G_{r}}(s,T)^{2}\bigg)\:ds\\
 & +\frac{1}{2}(u^{2}-u)\bigg(\int_{0}^{T}(1-\rho_{Ir}^{2})\eta_{r}^{2}B_{G_{r}}(s,T)^{2}\:ds\bigg)\\
 & +\int_{0}^{t}u\nu_{s}\:dW_{I}^{\mathbb{Q}^{T}}(s)+\int_{0}^{t}u\eta_{r}B_{G_{r}}(s,T)dW_{r}^{\mathbb{Q}^{T}}(s),
\end{align*}
and, as $u\in[0,1],$ we deduce that 
\begin{align*}
U_{t}\leq & u\log I_{0}^{T}-\frac{u}{2}^{2}\int_{0}^{t}\bigg(\,\bigg(\nu_{s}+\rho_{Ir}\eta_{r}B_{G_{r}}(s,T)\bigg)^{2}+(1-\rho_{Ir}^{2})\eta_{r}^{2}B_{G_{r}}(s,T)^{2}\bigg)\:ds\\
 & +\int_{0}^{t}u\nu_{s}\:dW_{I}^{\mathbb{Q}^{T}}(s)+\int_{0}^{t}u\eta_{r}B_{r}(s,T)dW_{r}^{\mathbb{Q}^{T}}(s).
\end{align*}
Then, we define the process $(N_{t})_{0\leq t\leq T}$ such that 
\begin{align*}
N_{t}:= & \bigg(I_{0}^{T}\bigg)^{u}\exp\bigg(-\frac{u}{2}^{2}\int_{0}^{t}\bigg(\,\bigg(\nu_{s}+\rho_{Ir}\eta_{r}B_{G_{r}}(s,T)\bigg)^{2}+(1-\rho_{Ir}^{2})\eta_{r}^{2}B_{G_{r}}(s,T)^{2}\bigg)\:ds\\
 & +\int_{0}^{t}u\nu_{s}\:dW_{I}^{\mathbb{Q}^{T}}(s)+\int_{0}^{t}u\eta_{r}B_{G_{r}}(s,T)dW_{r}^{\mathbb{Q}^{T}}(s)\bigg),
\end{align*}
This process $(N_{t})_{0\leq t\leq T}$ is a true martingale (see
arguments in \cite[in Lemma 7.3]{key-3}). Therefore, we finally have
that 
\[
|M_{t}|\leq\exp(U_{t})\leq N_{t}
\]
and as $(M_{t})_{0\leq t\leq T}$ is a local martingale upper bounded
by a true martingale, this a true martingale and it completes the
proof. 
\end{proof}

\subsection{Proof of Proposition \ref{prop:riccati_ODE_last}}

We first consider a lemma before proving Proposition \ref{prop:riccati_ODE_last}. 
\begin{lem}
\label{lem:laplace_representation}Assume that the kernel function
$G_{\nu}(t,s)$ is completely monotone and can be represented as \eqref{eq:completely_monotone_kernel}.
Then, 
\[
\nu_{t}=g_{0}^{T}(t)+\int_{\mathbb{R}_{+}}Y_{t}(x)\:\lambda(dx),\:t\leq T,
\]
\[
g_{t}(s)=1_{t\leq s}\bigg(g_{0}^{T}(s)+\int_{\mathbb{R}_{+}}e^{-x(s-t)}Y_{t}(x)\:\lambda(dx)\bigg),
\]
with 
\begin{equation}
Y_{t}(x)=\int_{0}^{t}e^{-(t-w)x}\kappa_{\nu}\:\nu_{w}\:dw+\int_{0}^{t}e^{-(t-w)x}\eta_{\nu}\:dW_{\nu}^{\mathbb{Q}^{T}}(w),\:t\leq T,\:x\in\mathbb{R}_{+}.\label{eq:Y_t_x}
\end{equation}
Moreover, for $u\in\mathbb{C}$ such that $0\leq\mathcal{\mathfrak{R}}(u)\leq1,$
\[
h_{t}^{u}(s)=g_{t}(s)+1_{t\leq s}\bigg(\rho_{Ir}\eta_{r}B_{G_{t}}(s,T)-\int_{\mathbb{R}_{+}}b_{t}^{u}(s,x)\:\lambda(dx)\bigg),
\]
with
\begin{equation}
b_{t}^{u}(s,x)=\int_{t}^{s}e^{-(s-w)x}\bigg((b^{u}\rho_{Ir}-u\eta_{\nu}\rho_{\nu r})\eta_{r}B_{G_{r}}(w,T)\bigg)\:dw,\:t\leq s\leq T,\:x\in\mathbb{R}_{+}.\label{eq:b_t_u}
\end{equation}
\end{lem}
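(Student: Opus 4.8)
The plan is to obtain all three identities by substituting the Laplace representation $G_\nu(t,s)=1_{s<t}\int_{\mathbb{R}_+}e^{-(t-s)x}\,\lambda(dx)$ into the formulas for $\nu_t$, $g_t$ and $h_t^u$ already recorded in Sections~\ref{sec:Mathematical-financial-framework} and~\ref{sec:The-characteristic-function}, and then interchanging the $\lambda(dx)$-integral with the time integral (resp.\ with the $dW_\nu^{\mathbb{Q}^T}$-integral) by Fubini's theorem; the process $Y_t(x)$ defined in \eqref{eq:Y_t_x} is exactly what the inner time/stochastic integral collapses to once $\lambda(dx)$ is pulled out.

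For $\nu_t$, I would start from \eqref{eq:vol_forward_measure} and replace $G_\nu(t,s)$ by $\int_{\mathbb{R}_+}e^{-(t-s)x}\,\lambda(dx)$ in the drift term $\int_0^t G_\nu(t,s)\kappa_\nu\nu_s\,ds$ and in the martingale term $\int_0^t G_\nu(t,s)\eta_\nu\,dW_\nu^{\mathbb{Q}^T}(s)$. Applying the classical Fubini theorem to the first term --- licit since $G_\nu(t,\cdot)\in L^1([0,t])$ by Cauchy--Schwarz from Definition~\ref{def:L2_kernel-1} and since $s\mapsto\nu_s$ is a.s.\ continuous, hence pathwise bounded on $[0,t]$ --- and the stochastic Fubini theorem to the second, I can move $\lambda(dx)$ to the outside; the bracket that remains under $\lambda(dx)$ is precisely $Y_t(x)$, which gives $\nu_t=g_0^T(t)+\int_{\mathbb{R}_+}Y_t(x)\,\lambda(dx)$.

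The identities for $g_t$ and $h_t^u$ follow by the same computation. For $g_t$ I would use its reduced form recorded in Section~\ref{sec:The-characteristic-function}; on the relevant domain one has $w\le t\le s$, so $w<s$ whenever $G_\nu(s,w)\neq 0$, and writing $e^{-(s-w)x}=e^{-(s-t)x}\,e^{-(t-w)x}$ and interchanging integrals once more produces $g_t(s)=1_{t\le s}\big(g_0^T(s)+\int_{\mathbb{R}_+}e^{-(s-t)x}Y_t(x)\,\lambda(dx)\big)$. For $h_t^u$, the only term in its definition (stated just before Theorem~\ref{thm:chf_general}) beyond $g_t$ is $1_{t\le s}\big(\rho_{Ir}\eta_r B_{G_r}(s,T)-\int_t^s G_\nu(s,w)(b^u\rho_{Ir}-u\eta_\nu\rho_{\nu r})\eta_r B_{G_r}(w,T)\,dw\big)$; since $t\le w\le s$ in the last integral we again have $G_\nu(s,w)=\int_{\mathbb{R}_+}e^{-(s-w)x}\,\lambda(dx)$, and a deterministic Fubini interchange --- valid because $w\mapsto B_{G_r}(w,T)$ is bounded on $[0,T]$ and $G_\nu(s,\cdot)\in L^1$ --- rewrites that integral as $\int_{\mathbb{R}_+}b_t^u(s,x)\,\lambda(dx)$ with $b_t^u$ as in \eqref{eq:b_t_u}, which is the stated formula.

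The only step genuinely requiring care is the stochastic Fubini interchange: I would verify the standard sufficient condition $\int_{\mathbb{R}_+}\big(\int_0^t e^{-2(t-w)x}\,dw\big)^{1/2}\lambda(dx)<\infty$, which is controlled by the square-integrability of $G_\nu$ imposed in Definition~\ref{def:L2_kernel-1} together with the identity $G_\nu(t,s)^2=\big(\int_{\mathbb{R}_+}e^{-(t-s)x}\,\lambda(dx)\big)^2$. Everything else is a routine term-by-term identification, and this lemma then feeds directly into the proof of Proposition~\ref{prop:riccati_ODE_last} in Section~\ref{sec:Proofs}.
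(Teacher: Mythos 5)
Your proposal takes essentially the same route as the paper's proof: substitute the Laplace representation $G_\nu(t,s)=1_{s<t}\int_{\mathbb{R}_+}e^{-(t-s)x}\lambda(dx)$ into the expressions for $\nu_t$, $g_t$ and $h_t^u$, pull the $\lambda(dx)$-integral outside by (stochastic) Fubini, and identify what remains under $\lambda(dx)$ with $Y_t(x)$ respectively $b_t^u(s,x)$. The paper's proof is the same term-by-term identification and does not elaborate on the Fubini justification.

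One small correction to your final paragraph: the implication you invoke goes the wrong way. By Minkowski's integral inequality,
\[
\Big(\int_0^t\Big(\int_{\mathbb{R}_+}e^{-(t-w)x}\lambda(dx)\Big)^2 dw\Big)^{1/2}
\;\le\;
\int_{\mathbb{R}_+}\Big(\int_0^t e^{-2(t-w)x}\,dw\Big)^{1/2}\lambda(dx),
\]
so the $L^2$-bound on $G_\nu$ from Definition~\ref{def:L2_kernel-1} is a \emph{consequence} of the stochastic-Fubini sufficient condition, not a way to control it. To make your verification rigorous you would need to impose (or derive from the structure of $\lambda$) the stronger $L^1(\lambda;L^2(ds))$ bound directly, rather than deduce it from square-integrability of $G_\nu$. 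This does not affect the structure of the argument, which is otherwise identical to the paper's.
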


\begin{proof}
For $t\leq T$, applying Fubini's theorem, we can rewrite $\nu_{t}$
such that 
\begin{align*}
\nu_{t} & =g_{0}^{T}(s)+\int_{0}^{t}G_{\nu}(t,w)\kappa_{\nu}\:\nu_{u}dw+\int_{0}^{t}G_{\nu}(t,w)\eta_{\nu}\:dW_{\nu}^{\mathbb{Q}^{T}}(w)\\
 & =g_{0}^{T}(s)+\int_{0}^{t}\int_{\mathbb{R}_{+}}e^{-(t-w)x}\lambda(dx)\kappa_{\nu}\:\nu_{w}dw+\int_{0}^{t}\int_{\mathbb{R}_{+}}e^{-(t-w)x}\lambda(dx)\eta_{\nu}\:dW_{\nu}^{\mathbb{Q}^{T}}(w)\\
 & =g_{0}^{T}(s)+\int_{\mathbb{R}_{+}}\bigg(\int_{0}^{t}e^{-(t-w)x}\kappa_{\nu}\:\nu_{w}dw+\int_{0}^{t}e^{-(t-w)x}\eta_{\nu}\:dW_{\nu}^{\mathbb{Q}^{T}}(w)\bigg)\:\lambda(dx)\\
 & =g_{0}^{T}(s)+\int_{\mathbb{R}_{+}}Y_{t}(x)\:\lambda(dx),
\end{align*}
with $Y_{t}(x)$ given by \eqref{eq:Y_t_x}. Using the same arguments,
we obtain that, 
\[
g_{t}(s)=1_{t\leq s}\bigg(g_{0}^{T}(s)+\int_{\mathbb{R}_{+}}e^{-x(s-t)}Y_{t}(x)\:\lambda(dx)\bigg).
\]
 Finally, we have that, 
\begin{align*}
h_{t}^{u}(s)= & g_{t}(s)+1_{t\leq s}\bigg(\rho_{Ir}\eta_{r}B_{G_{r}}(s,T)-\int_{t}^{s}G_{\nu}(s,w)\bigg((b^{u}\rho_{Ir}-u\eta_{\nu}\rho_{\nu r})\eta_{r}B_{G_{r}}(w,T)\bigg)dw\bigg)\\
= & g_{t}(s)+1_{t\leq s}\bigg(\rho_{Ir}\eta_{r}B_{G_{r}}(s,T)-\int_{\mathbb{R}_{+}}\bigg(\int_{t}^{s}e^{-(s-w)x}\bigg((b^{u}\rho_{Ir}-u\eta_{\nu}\rho_{\nu r})\eta_{r}B_{G_{r}}(w,T)\bigg)\:dw\bigg)\:\lambda(dx)\bigg)\\
= & g_{t}(s)+1_{t\leq s}\bigg(\rho_{Ir}\eta_{r}B_{G_{r}}(s,T)-\int_{\mathbb{R}_{+}}b_{t}^{u}(s,x)\:\lambda(dx)\bigg),
\end{align*}
with $b_{t}^{u}(s,x)$ given by \eqref{eq:b_t_u}.
\end{proof}

\begin{proof}
We divide the proof into two steps. The first step is to show that
\[
E^{\mathbb{Q}^{T}}\bigg[\exp\bigg(u\log\frac{I_{T}^{T}}{I_{t}^{T}}\bigg)\bigg|\mathcal{F}_{t}\bigg]=\exp\bigg(\Theta_{t}^{u}+2\int_{\mathbb{R}_{+}}\Lambda_{t}^{u}(x)Y_{t}(x)\:\lambda(dx)+\int_{\mathbb{R}_{+}^{2}}\Gamma_{t}^{u}(x,y)Y_{t}(x)Y_{t}(y)\:\lambda(dx)\:\lambda(dy)\bigg),
\]
with $t\to(\Theta_{t}^{u},\Lambda_{t}^{u},\Gamma_{t}^{u})$ such that
\begin{equation}
\Theta_{t}^{u}:=\phi_{t}^{u}+\chi_{t}^{u}+\int_{t}^{T}a^{u}\bar{h}_{t}^{u}(s)^{2}ds+\int_{t}^{T}\int_{t}^{T}\bar{h}_{t}^{u}(s)\bar{h}_{t}(w)\bar{\psi}_{t}^{u}(s,w)dsdw,\label{eq:capital_theta}
\end{equation}
\begin{equation}
\Lambda_{t}(x):=\int_{t}^{T}a^{u}\bar{h}_{t}^{u}(s)e^{-x(s-t)}\,ds+\int_{t}^{T}\int_{t}^{T}\bar{h}_{t}^{u}(w)e^{-x(s-t)}\bar{\psi}_{t}^{u}(s,w)\,dsdw,\:x\in\mathbb{R}_{+},\label{eq:capital_lambda}
\end{equation}
and 
\begin{equation}
\Gamma_{t}(x,y):=\int_{t}^{T}\int_{t}^{T}e^{-x(s-t)}e^{-y(w-t)}\:(a^{u}\delta_{s=w}+\bar{\psi}_{t}^{u}(s,w))\:ds\,dw,\:(x,y)\in\mathbb{R}_{+}^{2},\label{eq:capital_gamma}
\end{equation}
with $\bar{h}_{t}^{u}(s)$ a time-dependent function given by 
\begin{equation}
\bar{h}_{t}^{u}(s):=1_{t\leq s}\bigg(g_{0}^{T}(s)+\rho_{Ir}\eta_{r}B_{G_{r}}(s,T)-\int_{\mathbb{R}_{+}}b_{t}^{u}(s,x)\:\lambda(dx)\bigg).\label{eq:h_t_u_bar}
\end{equation}
Then, the second step is to deduce the equations satisfied by $(\Theta_{t}^{u},\Lambda_{t}^{u},\Gamma_{t}^{u})$. 
\subsubsection*{Step 1}

Using Proposition \ref{prop:-chf_riccati_1}, we have that 
\[
E^{\mathbb{Q}^{T}}\bigg[\exp\bigg(u\log\frac{I_{T}^{T}}{I_{t}^{T}}\bigg)\bigg|\mathcal{F}_{t}\bigg]=\exp(\phi_{t}^{u}+\chi_{t}^{u}+a^{u}\int_{t}^{T}h_{t}^{u}(s)^{2}ds+\int_{t}^{T}\int_{t}^{T}h_{t}^{u}(s)h_{t}^{u}(w)\:\bar{\psi}_{t}^{u}(s,w)dsdw).
\]
Moreover, from Lemma \ref{lem:laplace_representation}, we have that
\[
h_{t}^{u}(s)=\bar{h}_{t}^{u}(s)+1_{t\leq s}\int_{\mathbb{R}_{+}}e^{-x(s-t)}Y_{t}(x)\:\lambda(dx),
\]
with $\bar{h}_{t}^{u}(s)$ given by \eqref{eq:h_t_u_bar}. \\
Therefore, using once again Fubini's theorem, we have that 
\begin{align*}
\int_{t}^{T}h_{t}^{u}(s)^{2}ds= & \int_{t}^{T}\bigg(\bar{h}_{t}^{u}(s)+\int_{\mathbb{R}_{+}}e^{-x(s-t)}Y_{t}(x)\:\lambda(dx)\bigg)\:\bigg(\bar{h}_{t}^{u}(s)+\int_{\mathbb{R}_{+}}e^{-y(s-t)}Y_{t}(y)\:\lambda(dy)\bigg)ds\\
= & \int_{t}^{T}\bigg(\bar{h}_{t}^{u}(s)^{2}+2\bar{h}_{t}^{u}(s)\int_{\mathbb{R}_{+}}e^{-x(s-t)}Y_{t}(x)\:\lambda(dx)+\int_{\mathbb{R}_{+}^{2}}e^{-x(s-t)}e^{-y(s-t)}Y_{t}(x)\:Y_{t}(y)\:\lambda(dx)\:\lambda(dy)\bigg)ds\\
= & \int_{t}^{T}\bar{h}_{t}^{u}(s)^{2}ds+\int_{\mathbb{R}_{+}}\bigg(\int_{t}^{T}2\bar{h}_{t}^{u}(s)e^{-x(s-t)}\,ds\bigg)\:Y_{t}(x)\:\lambda(dx)\\
 & +\int_{\mathbb{R}_{+}^{2}}\bigg(\int_{t}^{T}e^{-x(s-t)}e^{-y(s-t)}ds\bigg)Y_{t}(x)\:Y_{t}(y)\:\lambda(dx)\:\lambda(dy),
\end{align*}
and 
\begin{align*}
\int_{t}^{T}\int_{t}^{T}h_{t}^{u}(s)h_{t}^{u}(w)\:\bar{\psi}_{t}^{u}(s,w)dsdw= & \int_{t}^{T}\int_{t}^{T}\bar{h}_{t}^{u}(s)\bar{h}_{t}^{u}(w)\bar{\psi}_{t}^{u}(s,w)dsdw\\
 & +\int_{\mathbb{R}_{+}}\bigg(\int_{t}^{T}\int_{t}^{T}2\bar{h}_{t}^{u}(w)e^{-x(s-t)}\bar{\psi}_{t}^{u}(s,w)\,dsdw\bigg)\:Y_{t}(x)\:\lambda(dx)\\
 & +\int_{\mathbb{R}_{+}^{2}}\bigg(\int_{t}^{T}\int_{t}^{T}e^{-x(s-t)}e^{-y(u-t)}\bar{\psi}_{t}^{u}(s,w)ds\,dw\bigg)Y_{t}(x)\:Y_{t}(y)\:\lambda(dx)\:\lambda(dy).
\end{align*}
 Thus, we have that 
\begin{align*}
a^{u}\int_{t}^{T}h_{t}^{u}(s)^{2}ds+\int_{t}^{T}\int_{t}^{T}h_{t}^{u}(s)h_{t}^{u}(w)\:\bar{\psi}_{t}^{u}(s,w)dsdw= & \int_{t}^{T}a^{u}\bar{h}_{t}^{u}(s)^{2}ds+\int_{t}^{T}\int_{t}^{T}\bar{h}_{t}^{u}(s)\bar{h}_{t}^{u}(w)\bar{\psi}_{t}^{u}(s,w)dsdw\\
 & +2\int_{\mathbb{R}_{+}}\bigg(\int_{t}^{T}a^{u}\bar{h}_{t}^{u}(s)e^{-x(s-t)}\,ds\\
 & \;\;\;\;\;\;\;+\int_{t}^{T}\int_{t}^{T}\bar{h}_{t}^{u}(w)e^{-x(s-t)}\bar{\psi}_{t}^{u}(s,w)\,dsdw\bigg)\:Y_{t}(x)\:\lambda(dx)\\
 & +\int_{\mathbb{R}_{+}^{2}}\bigg(\int_{t}^{T}\int_{t}^{T}e^{-x(s-t)}e^{-y(w-t)}\:(a^{u}\delta_{s=w}\\
 & \;\;\;\;\;\;\;\;\;\;\;\;\;\;+\bar{\psi}_{t}^{u}(s,w))\:ds\,dw\bigg)Y_{t}(x)\:Y_{t}(y)\:\lambda(dx)\:\lambda(dy).
\end{align*}
By defining $t\to(\Theta_{t}^{u},\Lambda_{t}^{u},\Gamma_{t}^{u})$
by \eqref{eq:capital_theta}-\eqref{eq:capital_lambda}-\eqref{eq:capital_gamma},
we obtain that 
\[
E^{\mathbb{Q}^{T}}\bigg[\exp\bigg(u\log\frac{I_{T}^{T}}{I_{t}^{T}}\bigg)\bigg|\mathcal{F}_{t}\bigg]=\exp\bigg(\Theta_{t}^{u}+2\int_{\mathbb{R}_{+}}\Lambda_{t}^{u}(x)Y_{t}(x)\:\lambda(dx)+\int_{\mathbb{R}_{+}^{2}}\Gamma_{t}^{u}(x,y)Y_{t}(x)Y_{t}(y)\:\lambda(dx)\:\lambda(dy)\bigg).
\]

\subsubsection*{Step 2}

Let us now deduce the equations satisfied by $(\Theta_{t}^{u},\Lambda_{t}^{u},\Gamma_{t}^{u})$.
First, we have that 
\[
\Gamma_{t}^{u}(x,y)=\int_{t}^{T}\int_{t}^{T}e^{-x(s-t)}e^{-y(w-t)}\bigg(a^{u}\delta_{s=w}+\bar{\psi}_{t}^{u}(s,w)\bigg)ds\,dw.
\]
 A direct differentiation of $t\to\Gamma_{t}^{u}$ with respect to
$t$ leads to 
\begin{align*}
\dot{\Gamma}_{t}^{u}(x,y)= & (x+y)\Gamma_{t}^{u}(x,y)-a^{u}-\int_{t}^{T}\bigg(e^{-x(s-t)}+e^{-y(s-t)}\bigg)\:\bar{\psi}_{t}^{u}(s,t)ds\\
 & +\int_{t}^{T}\int_{t}^{T}e^{-x(s-t)}e^{-y(w-t)}\dot{\bar{\psi}}_{t}^{u}(s,w)ds\,dw.
\end{align*}
Using the form of $\dot{\bar{\psi}}_{t}^{u}(s,w),$ we obtain that
\[
\int_{t}^{T}\int_{t}^{T}e^{-x(s-t)}e^{-y(w-t)}\dot{\bar{\psi}}_{t}^{u}(s,w)ds\,dw=-2\,\eta_{\nu}^{2}\bigg(\int_{\mathbb{R}_{+}}\Gamma_{t}^{u}(x,x')\:\lambda(dx')\bigg)\bigg(\int_{\mathbb{R}_{+}}\Gamma_{t}^{u}(y,y')\:\lambda(dy')\bigg).
\]
Moreover, using \eqref{eq:cond_psi_t} in Lemma \ref{lem:integral_operator}
and once again the Fubini's theorem, we have that
\begin{align*}
\int_{t}^{T}e^{-x(s-t)}\:(\bar{\psi}_{t}^{u}(s,t))ds & =b^{u}\int_{t}^{T}e^{-x(s-t)}\int_{t}^{T}G_{\nu}(w,t)(a^{u}\delta_{w=s}+\bar{\psi}_{t}^{u}(s,w))dwds\\
 & =b^{u}\bigg(\int_{t}^{T}e^{-x(s-t)}\int_{t}^{T}\bigg(\int_{\mathbb{R}_{+}}e^{-x^{'}(w-t)}\:\lambda(dx')\bigg)\:(a^{u}\delta_{w=s}+\bar{\psi}_{t}^{u}(s,w)\:dwds\\
 & =b^{u}\int_{\mathbb{R}_{+}}\bigg(\int_{t}^{T}\int_{t}^{T}e^{-x^{'}(w-t)}e^{-x(s-t)}\bigg(a^{u}\delta_{w=s}+\bar{\psi}_{t}^{u}(s,w)\bigg)\:dwds\bigg)\:\lambda(dx')\\
 & =b^{u}\int_{\mathbb{R}_{+}}\Gamma_{t}(x,x^{'})\:\lambda(dx').
\end{align*}
Using the same arguments, we also obtain that 
\[
\int_{t}^{T}e^{-y(s-t)}\:(a^{u}\delta_{s=w}+\bar{\psi}_{t}^{u}(s,t))ds=b^{u}\int_{\mathbb{R}_{+}}\Gamma_{t}^{u}(y^{'},y)\:\lambda(dy').
\]
Thus, we deduce that 
\begin{align*}
\dot{\Gamma}_{t}^{u}(x,y)= & (x+y)\Gamma_{t}^{u}(x,y)-a^{u}-2\,\eta_{\nu}^{2}\bigg(\int_{\mathbb{R}_{+}}\Gamma_{t}^{u}(x,x')\:\lambda(dx')\bigg)\bigg(\int_{\mathbb{R}_{+}}\Gamma_{t}^{u}(y,y')\:\lambda(dy')\bigg)\\
 & -b^{u}\bigg(\int_{\mathbb{R}_{+}}\Gamma_{t}^{u}(x,x^{'})\:\lambda(dx')+\int_{\mathbb{R}_{+}}\Gamma_{t}^{u}(y^{'},y)\:\lambda(dy')\bigg).
\end{align*}
Let us consider now the ODE satisfied by $t\to\Lambda_{t}^{u}$. We
have that 
\begin{align*}
\Lambda_{t}^{u}(x) & =\int_{t}^{T}a^{u}\bar{h}_{t}^{u}(s)e^{-x(s-t)}\,ds+\int_{t}^{T}\int_{t}^{T}\bar{h}_{t}^{u}(w)e^{-x(s-t)}\bar{\psi}_{t}^{u}(s,w)\,dsdw\\
 & =\int_{t}^{T}\int_{t}^{T}\bar{h}_{t}^{u}(w)e^{-x(s-t)}\bigg(a^{u}\delta_{s=w}+\bar{\psi}_{t}^{u}(s,w)\bigg)\,dsdw.
\end{align*}
Using the same argument than for $\Gamma_{t}^{u},$ we have that 
\begin{align*}
\dot{\Lambda}_{t}^{u}(x)= & -a^{u}\bar{h}_{t}^{u}(t)+x\Lambda_{t}^{u}(x)+\int_{t}^{T}\int_{t}^{T}\bar{h}_{t}^{u}(w)e^{-x(s-t)}\dot{\bar{\psi}}_{t}^{u}(s,w)\,dsdw\\
 & +\int_{t}^{T}\int_{t}^{T}\dot{\bar{h}}_{t}^{u}(w)e^{-x(s-t)}\bigg(a^{u}\delta_{s=w}+\bar{\psi}_{t}^{u}(s,w)\bigg)\,dsdw\\
 & -\int_{t}^{T}(\bar{h}_{t}^{u}(t)e^{-x(s-t)}+\bar{h}_{t}^{u}(s))\:\bar{\psi}_{t}^{u}(s,t)\,ds.
\end{align*}
Using the form of $\dot{\bar{\psi}}_{t}^{u}(s,w),$ we obtain that
\[
\int_{t}^{T}\int_{t}^{T}\bar{h}_{t}^{u}(w)e^{-x(s-t)}\dot{\bar{\psi}}_{t}^{u}(s,w)\,dsdw=-2\eta_{\nu}^{2}\bigg(\int_{\mathbb{R}_{+}}\Gamma_{t}^{u}(x,x')\:\lambda(dx')\bigg)\bigg(\int_{\mathbb{R}_{+}}\Lambda_{t}^{u}(y)\:\lambda(dy)\bigg),
\]
and, using similar arguments than previously, we also have that 
\begin{align*}
\int_{t}^{T}(\bar{h}_{t}^{u}(t)e^{-x(s-t)}+\bar{h}_{t}^{u}(s))\:\bar{\psi}_{t}^{u}(s,t)\,ds & =b^{u}\int_{t}^{T}\int_{t}^{T}(\bar{h}_{t}^{u}(t)e^{-x(s-t)}+\bar{h}_{t}^{u}(s))\:G(w,t)(a^{u}\delta_{w=s}+\bar{\psi}_{t}^{u}(s,w))\:dw\,ds\\
 & =b^{u}\int_{\mathbb{R}_{+}}\bigg(\int_{t}^{T}\int_{t}^{T}e^{-x^{'}(w-t)}(\bar{h}_{t}^{u}(t)e^{-x(s-t)}+\bar{h}_{t}^{u}(s))\:(a^{u}\delta_{w=s}\\
 & \;\;\;\;\;\;\;\;\;\;\;\;\;\;\;\;\;\;\;\;\;\;\;\;\;\;\;\;+\bar{\psi}_{t}^{u}(s,w))\:dwds\bigg)\:\lambda(dx')\\
 & =b^{u}\int_{\mathbb{R}_{+}}\bigg(\bar{h}_{t}^{u}(t)\Gamma_{t}^{u}(x,x^{'})+\Lambda_{t}^{u}(x^{'})\bigg)\:\lambda(dx').
\end{align*}
Moreover, as 
\[
\dot{\bar{h}}_{t}^{u}(s)=(b^{u}\rho_{Ir}-u\eta_{\nu}\rho_{\nu r})\eta_{r}B_{G_{r}}(t,T)\int_{\mathbb{R}_{+}}e^{-(s-t)x^{'}}\,\:\lambda(dx^{'}),
\]
we have that 
\begin{align*}
 & \int_{t}^{T}\int_{t}^{T}\dot{\bar{h}}_{t}^{u}(w)e^{-x(s-t)}\bigg(a^{u}\delta_{s=w}+\bar{\psi}_{t}^{u}(s,w)\bigg)\,dsdw\\
 & =(b^{u}\rho_{Ir}-u\eta_{\nu}\rho_{\nu r})\eta_{r}B_{G_{r}}(t,T)\int_{\mathbb{R}_{+}}\Gamma_{t}^{u}(x,x^{'})\,\:\lambda(dx^{'}).
\end{align*}
Therefore, as $\bar{h}_{t}^{u}(t)=g_{0}^{T}(t)+\rho_{Ir}\eta_{r}B_{G_{r}}(t,T)$,
we obtain that 
\begin{align*}
\dot{\Lambda}_{t}^{u}(x)= & -a^{u}\bar{h}_{t}^{u}(t)+x\Lambda_{t}^{u}(x)-2\eta_{\nu}^{2}\bigg(\int_{\mathbb{R}_{+}}\Gamma_{t}^{u}(x,x')\:\lambda(dx')\bigg)\bigg(\int_{\mathbb{R}_{+}}\Lambda_{t}^{u}(y)\:\lambda(dy)\bigg)\\
 & -\int_{\mathbb{R}_{+}}\bigg(b^{u}\bar{h}_{t}^{u}(t)-(b^{u}\rho_{Ir}-u\eta_{\nu}\rho_{\nu r})\eta_{r}B_{G_{r}}(t,T)\bigg)\Gamma_{t}^{u}(x,x^{'})\:\lambda(dx')\\
 & -b^{u}\int_{\mathbb{R}_{+}}\Lambda_{t}^{u}(x^{'})\:\lambda(dx')\\
= & -a^{u}\bigg(g_{0}^{T}(t)+\rho_{Ir}\eta_{r}B_{G_{r}}(t,T)\bigg)+x\Lambda_{t}^{u}(x)-2\eta_{\nu}^{2}\bigg(\int_{\mathbb{R}_{+}}\Gamma_{t}^{u}(x,x')\:\lambda(dx')\bigg)\bigg(\int_{\mathbb{R}_{+}}\Lambda_{t}^{u}(y)\:\lambda(dy)\bigg)\\
 & -\int_{\mathbb{R}_{+}}\bigg(b^{u}g_{0}^{T}(t)+u\eta_{\nu}\rho_{\nu r}\eta_{r}B_{G_{r}}(t,T)\bigg)\Gamma_{t}^{u}(x,x^{'})\:\lambda(dx')\\
 & -b^{u}\int_{\mathbb{R}_{+}}\Lambda_{t}^{u}(x^{'})\:\lambda(dx').
\end{align*}
Finally, let us consider the equation satisfied by $t\to\Theta_{t}^{u}$.
We know that 
\begin{align*}
\Theta_{t}^{u} & =\phi_{t}^{u}+\chi_{t}^{u}+\int_{t}^{T}a^{u}\bar{h}_{t}^{u}(s)^{2}ds+\int_{t}^{T}\int_{t}^{T}\bar{h}_{t}^{u}(s)\bar{h}_{t}^{u}(w)\bar{\psi}_{t}^{u}(s,w)dsdw\\
 & =\phi_{t}^{u}+\chi_{t}^{u}+\int_{t}^{T}\int_{t}^{T}\bar{h}_{t}^{u}(s)\bar{h}_{t}^{u}(w)(a^{u}\delta_{s=w}+\bar{\psi}_{t}^{u}(s,w))dsdw.
\end{align*}
Therefore, we have that 
\begin{align*}
\dot{\Theta}_{t}^{u}= & \dot{\phi}_{t}^{u}+\dot{\chi}_{t}^{u}-a^{u}\bar{h}_{t}^{u}(t)^{2}+\int_{t}^{T}\int_{t}^{T}\bar{h}_{t}^{u}(s)\bar{h}_{t}^u(w)\dot{\bar{\psi}}_{t}^{u}(s,w)\,dsdw\\
 & -2\bar{h}_{t}^{u}(t)\int_{t}^{T}\bar{h}_{t}^{u}(s)\:\bar{\psi}_{t}^{u}(s,t)\,ds\\
 & +2\int_{t}^{T}\int_{t}^{T}\dot{\bar{h}}_{t}^{u}(s)\bar{h}_{t}^{u}(w)(a\delta_{s=w}+\bar{\psi}_{t}^{u}(s,w))\,dsdw.
\end{align*}
 Using once again the form of $\dot{\bar{\psi}}_{t}^{u}(s,w),$ we
deduce that 
\[
\int_{t}^{T}\int_{t}^{T}\bar{h}_{t}^{u}(s)\bar{h}_{t}^u(w)\dot{\bar{\psi}}_{t}^{u}(s,w)\,dsdw=-2\eta_{\nu}^{2}\bigg(\int_{\mathbb{R}_{+}}\Lambda_{t}^{u}(x)\:\lambda(dx))\bigg)\bigg(\int_{\mathbb{R}_{+}}\Lambda_{t}^{u}(y)\:\lambda(dy)\bigg).
\]
 Also, we have that 
\[
-2\bar{h}_{t}^{u}(t)\int_{t}^{T}\bar{h}_{t}^{u}(s)\:\bar{\psi}_{t}^{u}(s,t)\,ds=-2b^{u}\bar{h}_{t}^{u}(t)\int_{\mathbb{R}_{+}}\Lambda_{t}^{u}(x^{'})\:\lambda(dx'),
\]
and
\begin{align*}
\dot{\phi}_{t}^{u} & =-\eta_{\nu}^{2}\int_{t}^{T}\int_{t}^{T}G_\nu(s,t)G_\nu(w,t)\bigg(a^{u}\delta_{s=w}+\bar{\psi}_{t}^{u}(s,w)\bigg)ds\,dw\\
 & =-\eta_{\nu}^{2}\int_{\mathbb{R}_{+}}\int_{\mathbb{R}_{+}}\Gamma_{t}^{u}(x,y)\:\lambda(dx)\:\lambda(dy),
\end{align*}
Moreover, as 
\begin{align*}
2\int_{t}^{T}\int_{t}^{T}\dot{\bar{h}}_{t}^{u}(s)\bar{h}_{t}^{u}(w)(a^{u}\delta_{s=w}+\bar{\psi}_{t}^{u}(s,w))\,dsdw= & 2(b^{u}\rho_{Ir}-u\eta_{\nu}\rho_{\nu r})\eta_{r}B_{G_{r}}(t,T)\\
 & \int_{\mathbb{R}_{+}}\bigg(\int_{t}^{T}e^{-(s-t)x^{'}}\bar{h}_{t}(u)(a\delta_{s=u}+\bar{\psi}_{t}(s,u))\,dsdu\bigg)\:\lambda(dx^{'})\\
= & 2(b^{u}\rho_{Ir}-u\eta_{\nu}\rho_{\nu r})\eta_{r}B_{G_{r}}(t,T)\int_{\mathbb{R}_{+}}\Lambda_{t}(x^{'})\:\lambda(dx^{'}),
\end{align*}
and $\bar{h}_{t}^{u}(t)=g_{0}^{T}(t)+\rho_{Ir}\eta_{r}B_{G_{r}}(t,T)$,
we finally obtain that 
\begin{align*}
\dot{\Theta}_{t}^{u}= & \dot{\chi}_{t}^{u}-a^{u}\bigg(g_{0}^{T}(t)+\rho_{Ir}\eta_{r}B_{G_{r}}(t,T)\bigg)^{2}-\eta_{\nu}^{2}\bigg(2\bigg(\int_{\mathbb{R}_{+}}\Lambda_{t}^{u}(x)\:\lambda(dx))\bigg)^{2}+\int_{\mathbb{R}_{+}}\int_{\mathbb{R}_{+}}\Gamma_{t}^{u}(x,y)\:\lambda(dx)\:\lambda(dy)\bigg)\\
 & -2\int_{\mathbb{R}_{+}}\bigg(b^{u}g_{0}^{T}(t)+u\eta_{\nu}\rho_{\nu r}\eta_{r}B_{G_{r}}(t,T)\bigg)\:\Lambda_{t}^{u}(x^{'})\:\lambda(dx'),
\end{align*}
and that concludes the proof. 
\end{proof}

\subsection{Proof of Proposition \ref{prop:chf_multi_factors_ODE}}
\begin{proof}
Assuming that $G_{\nu}(t,s)=1_{s<t}\sum_{i=1}^{N}w_{i}e^{-x_{i}(t-s)}$
and $g_{0}^{T}(t)=\nu_{0}+\sum_{i=1}^{N}\int_{0}^{t}\theta(s)w_{i}\exp(-x_{i}(t-s))ds$,
with $\theta(t):=\theta_{\nu}-\eta_{\nu}\eta_{r}\rho_{r\nu}B_{G_{r}}(t,T),$
the volatility process $(\nu_{t})_{0\leq t\leq T}$ reduces to 
\[
\nu_{t}=\nu_{0}+\int_{0}^{t}\sum_{i=1}^{N}w_{i}\exp(-x_{i}(t-s))(\theta(s)+\kappa_{\nu}\:\nu_{s})ds+\eta_{\nu}\int_{0}^{t}\sum_{i=1}^{N}w_{i}\exp(-x_{i}(t-s))\:dW_{\nu}^{\mathbb{Q}_{}^{T}}(t).
\]
Let us introduce some auxiliary processes $\left((\nu_{t}^{i})_{i=1,...,N}\right)_{0\leq t\leq T}$
solution of the following SDEs 
\begin{align*}
d\nu_{t}^{i} & =(-x_{i}\nu_{t}^{i}+\theta(t)+\kappa_{\nu}\:\nu_{t})\:dt+\eta_{\nu}\:dW_{\nu}^{\mathbb{Q}^{T}}(t),\:t>0,\:i=1,...,N,\\
\nu_{0}^{i} & =0.
\end{align*}
Then, we can deduce that, almost surely, 
\[
\nu_{t}=\nu_{0}+\sum_{i=1}^{N}w_{i}\nu_{t}^{i},\:t\leq T.
\]
Also, we observe that 
\[
g_{t}(s)=1_{t\leq s}\bigg(\nu_{0}+\sum_{i=1}^{N}w_{i}e^{-x_{i}(s-t)}\nu_{t}^{i}\bigg),\;s,t\leq T,
\]
and 
\[
G_{\nu}(t,s)=1_{s<t}\sum_{i=1}^{N}w_{i}e^{-x_{i}(t-s)}=1_{s<t}\int_{0}^{+\infty}e^{-(t-s)x}\lambda(dx),
\]
with
\[
\lambda(dx)=\sum_{i=1}^{N}w_{i}\delta_{x=x_{i}}dx.
\]
We also have that, for $i=1,...,N,$
\[
Y_{t}(x_{i})=\nu_{t}^{i}-\theta_{x_{i}}(t).
\]
with $\theta_{x}(t):=\int_{0}^{t}e^{-x(t-s)}\theta(s)\:ds.$ Therefore,
using Proposition \ref{prop:riccati_ODE_last}, we have that 
\begin{align*}
E^{\mathbb{Q}^{T}}\bigg[\exp\bigg(u\log\frac{I_{T}^{T}}{I_{t}^{T}}\bigg)\bigg|\mathcal{F}_{t}\bigg]= & \exp\bigg(\Theta_{t}^{u}+2\int_{\mathbb{R}_{+}}\Lambda_{t}^{u}(x)Y_{t}(x)\:\lambda(dx)+\int_{\mathbb{R}_{+}^{2}}\Gamma_{t}^{u}(x,y)Y_{t}(x)Y_{t}(y)\:\lambda(dx)\:\lambda(dy)\bigg)\\
= & \exp\bigg(\Theta_{t}^{u}+2\sum_{i=1}^{N}w_{i}\int_{\mathbb{R}_{+}}\Lambda_{t}^{u}(x)Y_{t}(x)\:\delta_{x=x_{i}}dx\\
 & +\sum_{i,j=1}^{N}w_{i}w_{j}\int_{\mathbb{R}_{+}^{2}}\Gamma_{t}^{u}(x,y)Y_{t}(x)Y_{t}(y)\:\delta_{x=x_{i}}\delta_{y=x_{j}}dx\:dy\bigg)\\
= & \exp\bigg(\Theta_{t}^{u}+2\sum_{i=1}^{N}w_{i}\Lambda_{t}^{u}(x_{i})Y_{t}(x_{i})+\sum_{i,j=1}^{N}w_{i}w_{j}\Gamma_{t}^{u}(x_{i},x_{j})Y_{t}(x_{i})Y_{t}(x_{j})\bigg)\\
= & \exp\bigg(\Theta_{t}^{u}-2\sum_{i=1}^{N}w_{i}\Lambda_{t}^{u}(x_{i})\theta_{x_{i}}(t)+\sum_{i=1}^{N}\sum_{j=1}^{N}w_{i}w_{j}\Gamma_{t}^{u}(x_{i},x_{j})\theta_{x_{i}}(t)\theta_{x_{j}}(t)\\
 & +2\sum_{i=1}^{N}\nu_{t}^{i}w_{i}\bigg(\Lambda_{t}^{u}(x_{i})-\sum_{j=1}^{N}w_{j}\Gamma_{t}^{u}(x_{i},x_{j})\theta_{x_{j}}(t)\bigg)+\sum_{i,j=1}^{N}w_{i}w_{j}\Gamma_{t}^{u}(x_{i},x_{j})\nu_{t}^{i}\nu_{t}^{j}\bigg)\\
:= & \exp(A_{t}^{u;\,N}+2\sum_{i=1}^{N}B_{t}^{u;\,i}\:\nu_{t}^{i}+\sum_{i=1}^{N}\sum_{j=1}^{N}C_{t}^{u;\,ij}\:\nu_{t}^{i}\nu_{t}^{j})
\end{align*}
with 
\begin{align*}
A_{t}^{u;\,N} & :=\Theta_{t}^{u}-2\sum_{i=1}^{N}w_{i}\Lambda_{t}^{u}(x_{i})\theta_{x_{i}}(t)+\sum_{i=1}^{N}\sum_{j=1}^{N}w_{i}w_{j}\Gamma_{t}^{u}(x_{i},x_{j})\theta_{x_{i}}(t)\theta_{x_{j}}(t),\\
B_{t}^{u;\,i} & :=w_{i}\bigg(\Lambda_{t}^{u}(x_{i})-\sum_{j=1}^{N}w_{j}\Gamma_{t}^{u}(x_{i},x_{j})\theta_{x_{j}}(t)\bigg),\:i=1,...,N,\\
C_{t}^{u;\,ij} & :=w_{i}w_{j}\Gamma_{t}^{u}(x_{i},x_{j}),\:i,j=1,...,N.
\end{align*}
Using Proposition \ref{prop:riccati_ODE_last}, we have that 
\begin{align*}
\dot{C}_{t}^{u;\,ij}= & w_{i}w_{j}\dot{\Gamma}_{t}^{u}(x_{i},x_{j})\\
= & w_{i}w_{j}\bigg((x_{i}+x_{j})\Gamma_{t}^{u}(x_{i},x_{j})-a^{u}-2\eta_{\nu}^{2}\sum_{k=1}^{N}\sum_{l=1}^{N}w_{k}w_{l}\Gamma_{t}^{u}(x_{i},x_{k})\Gamma_{t}^{u}(x_{j},x_{l})\bigg)\\
 & -b^{u}\bigg(\sum_{k=1}^{N}w_{k}\Gamma_{t}^{u}(x_{i},x_{k})+\sum_{k=1}^{N}w_{k}\Gamma_{t}^{u}(x_{k},x_{j})\bigg)\\
= & (x_{i}+x_{j})C_{t}^{u;\,ij}-w_{i}w_{j}a^{u}-2\eta_{\nu}^{2}\sum_{k=1}^{N}\sum_{l=1}^{N}C_{t}^{u;\,ik}C_{t}^{u;\,jl}-b^{u}\sum_{k=1}^{N}\bigg(w_{j}C_{t}^{u;\,ik}+w_{i}C_{t}^{u;\,kj}\bigg).
\end{align*}
Moreover, after fastidious calculus, we end up with
\begin{align*}
\dot{B}_{t}^{u;\,i}= & x_{i}B_{t}^{u;\,i}-a^{u}w_{i}\bigg(\nu_{0}+\rho_{Ir}\eta_{r}B_{G_{r}}(t,T)\bigg)-2\eta_{\nu}^{2}\sum_{j,k=1}^{N}C_{t}^{u;\,ij}B_{t}^{u;\,k}\\
 & -b^{u}\bigg(\sum_{j=1}^{N}\bigg(w_{i}B_{t}^{u;\,j}+\nu_{0}C_{t}^{u;\,ij}\bigg)-\sum_{j=1}^{N}C_{t}^{u;\,ij}\bigg(\theta(t)+u\eta_{\nu}\rho_{\nu r}\eta_{r}B_{G_{r}}(t,T)\bigg),\:t<T,\:i=1,...,N,
\end{align*}
and 
\begin{align*}
\dot{A}_{t}^{u;\,N} & =-a^{u}\bigg(\nu_{0}^{2}+\eta_{r}^{2}B_{G_{r}}(t,T)^{2}-2\nu_{0}\rho_{Ir}\eta_{r}B_{G_{r}}(t,T)\bigg)\\
 & -2\sum_{i=1}^{N}B_{t}^{u;\,i}\bigg(\theta(t)+u\eta_{\nu}\rho_{\nu r}\eta_{r}B_{G_{r}}(t,T)+b^{u}\nu_{0}\bigg)-\eta_{\nu}^{2}\sum_{i,j=1}^{N}(2B_{t}^{u;\,i}B_{t}^{u;\,j}+C_{t}^{u;\,ij}).
\end{align*}
\end{proof}

\section{Additional figures}

\begin{figure}[H]
\centering{}\includegraphics[width=0.65\textwidth]{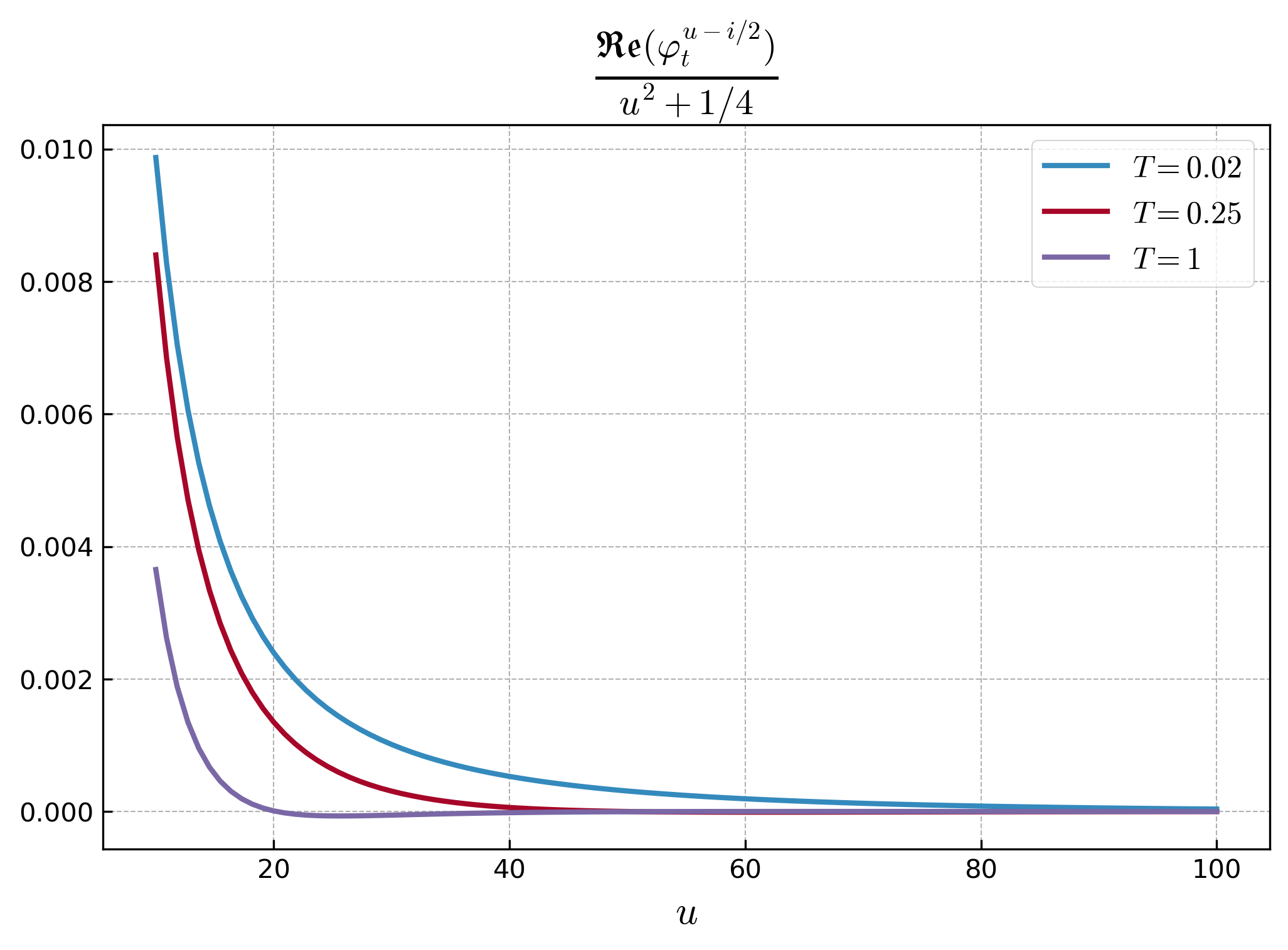}\caption{\protect\label{fig:integrand_decreasing}
Integrand of Lewis pricing formula \eqref{eq:lewis_price_call} for $k_t=0$. Hull-White and Stein-Stein model with parameters: $\kappa_r=-0.03$, $\eta_r=0.01$, $I_0^T=100$, ${\nu}_{0}=0.1$,
${\theta}_{\nu}=0.1,$ ${\eta}_{\nu}= 0.125$, $\kappa_\nu = 0$, ${\rho}_{I\nu}=-0.7,$ ${\rho}_{r\nu}=-0.25,$
${\rho}_{Ir}=-0.25$.}
\end{figure}

\begin{figure}[H]
\centering{}\includegraphics[width=0.4\textwidth]{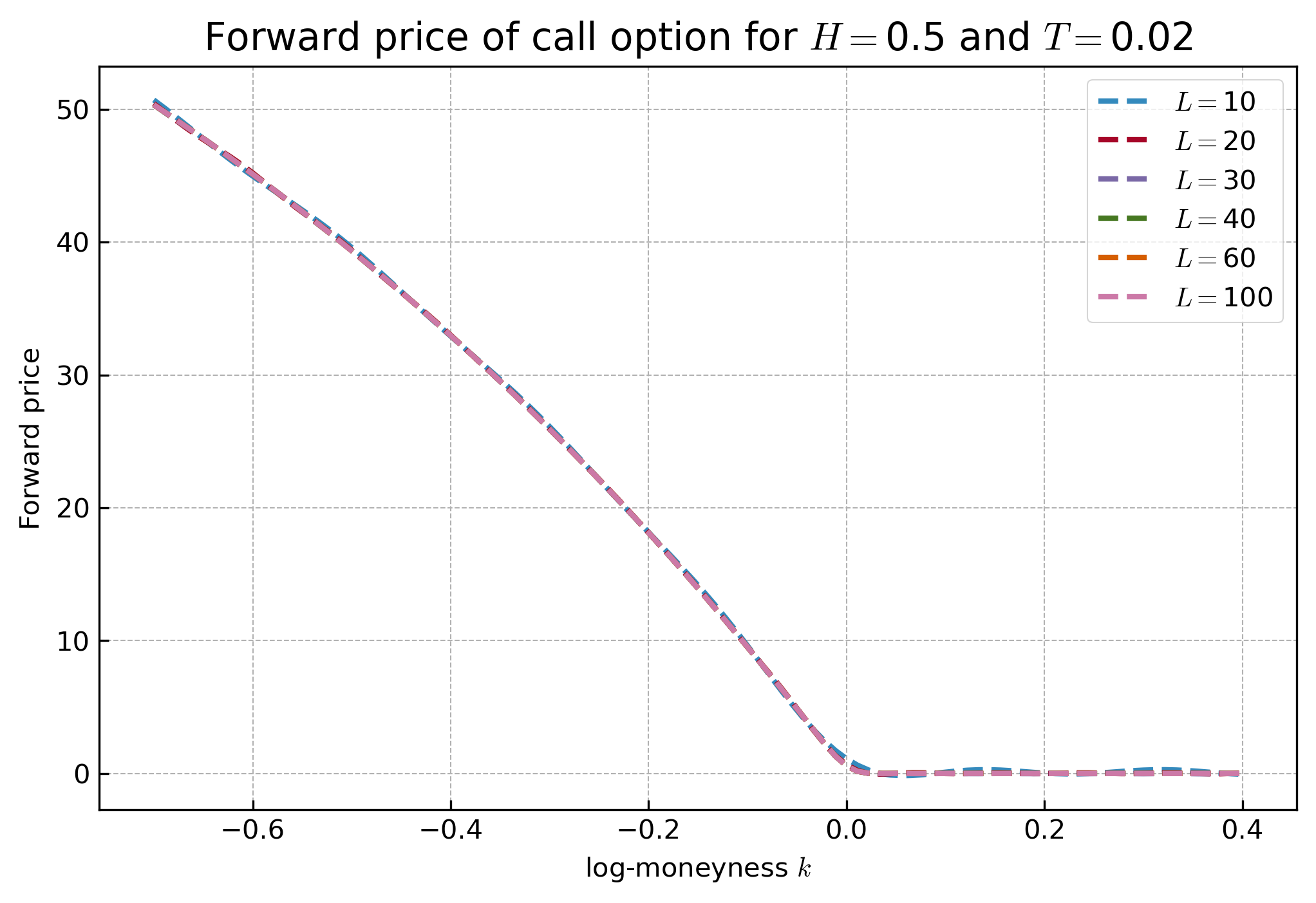} \\
\includegraphics[width=0.4\textwidth]{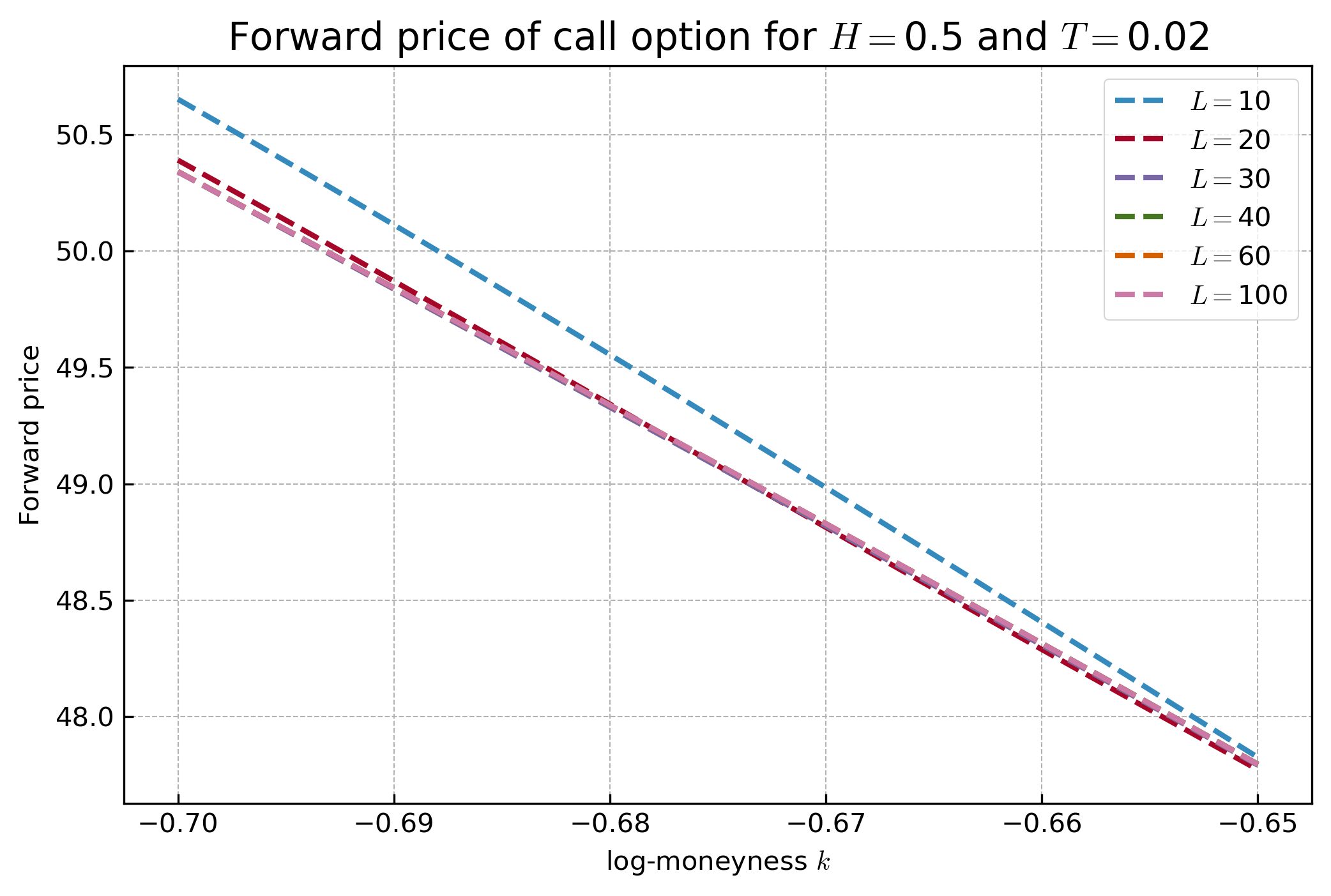} \includegraphics[width=0.4\textwidth]{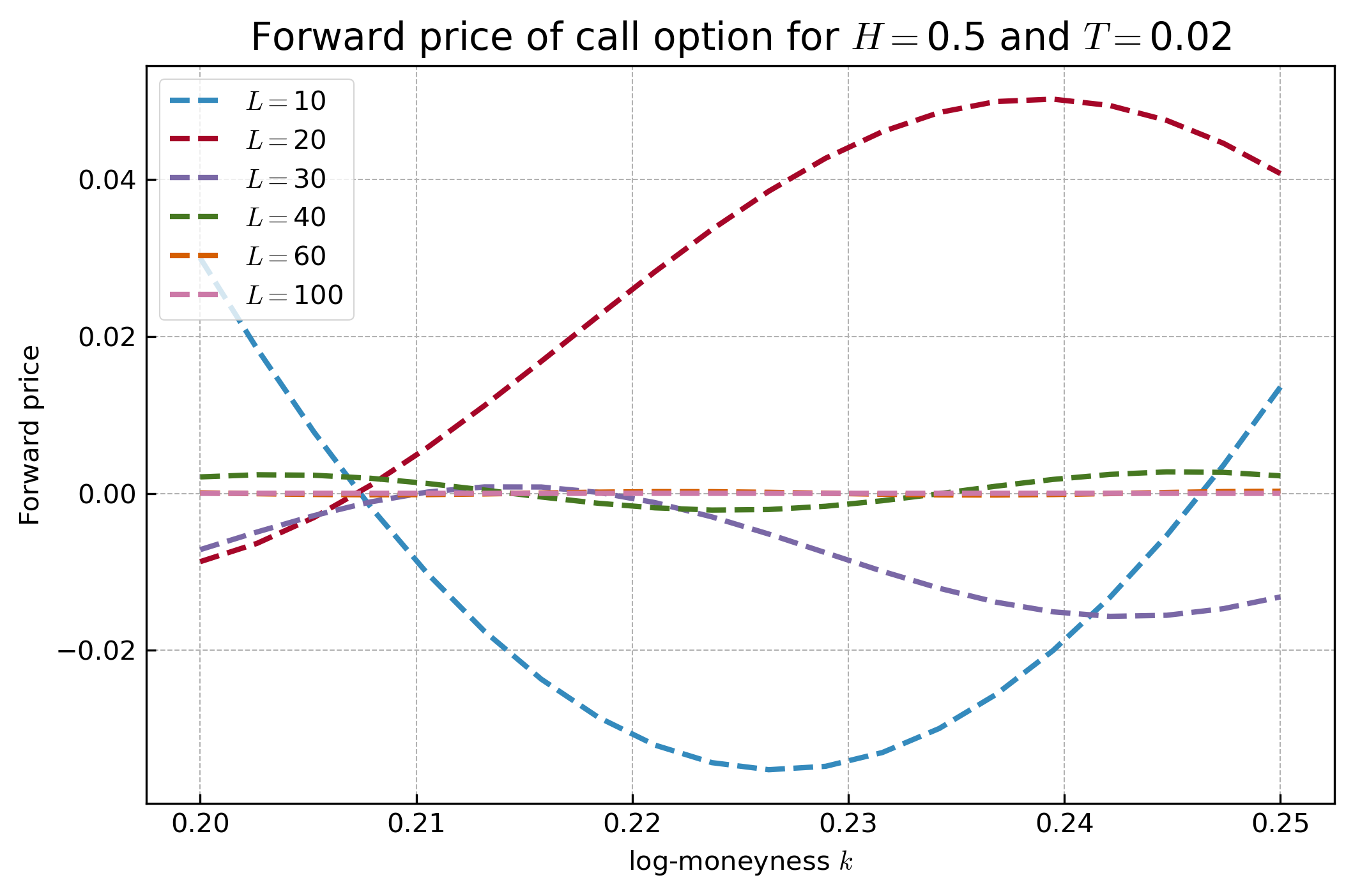}\caption{\protect\label{fig:GL_pricing_T_002}
Forward price of call option for $T=0.02$ using \cite{key-18} pricing formula with Gauss-Laguerre quadrature numerical integration by varying $L$ (number of weights). Hull-White and Stein-Stein model with parameters: $\kappa_r=-0.03$, $\eta_r=0.01$, $I_0^T=100$, ${\nu}_{0}=0.1$,
${\theta}_{\nu}=0.1,$ ${\eta}_{\nu}= 0.125$, $\kappa_\nu = 0$, ${\rho}_{I\nu}=-0.7,$ ${\rho}_{r\nu}=-0.25,$
${\rho}_{Ir}=-0.25$.}
\end{figure}

\begin{figure}[H]
\centering{}\includegraphics[width=0.4\textwidth]{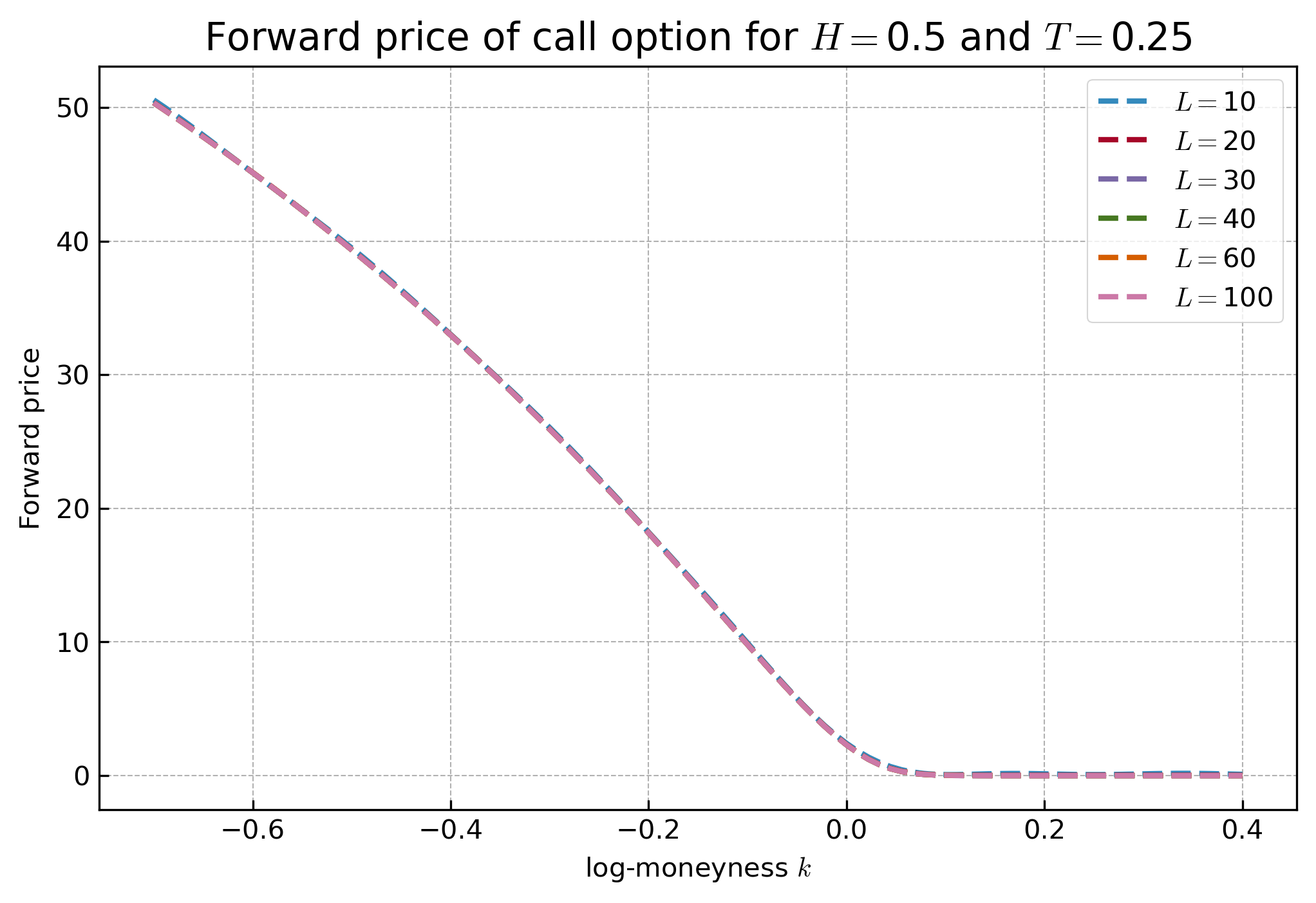} \\
\includegraphics[width=0.4\textwidth]{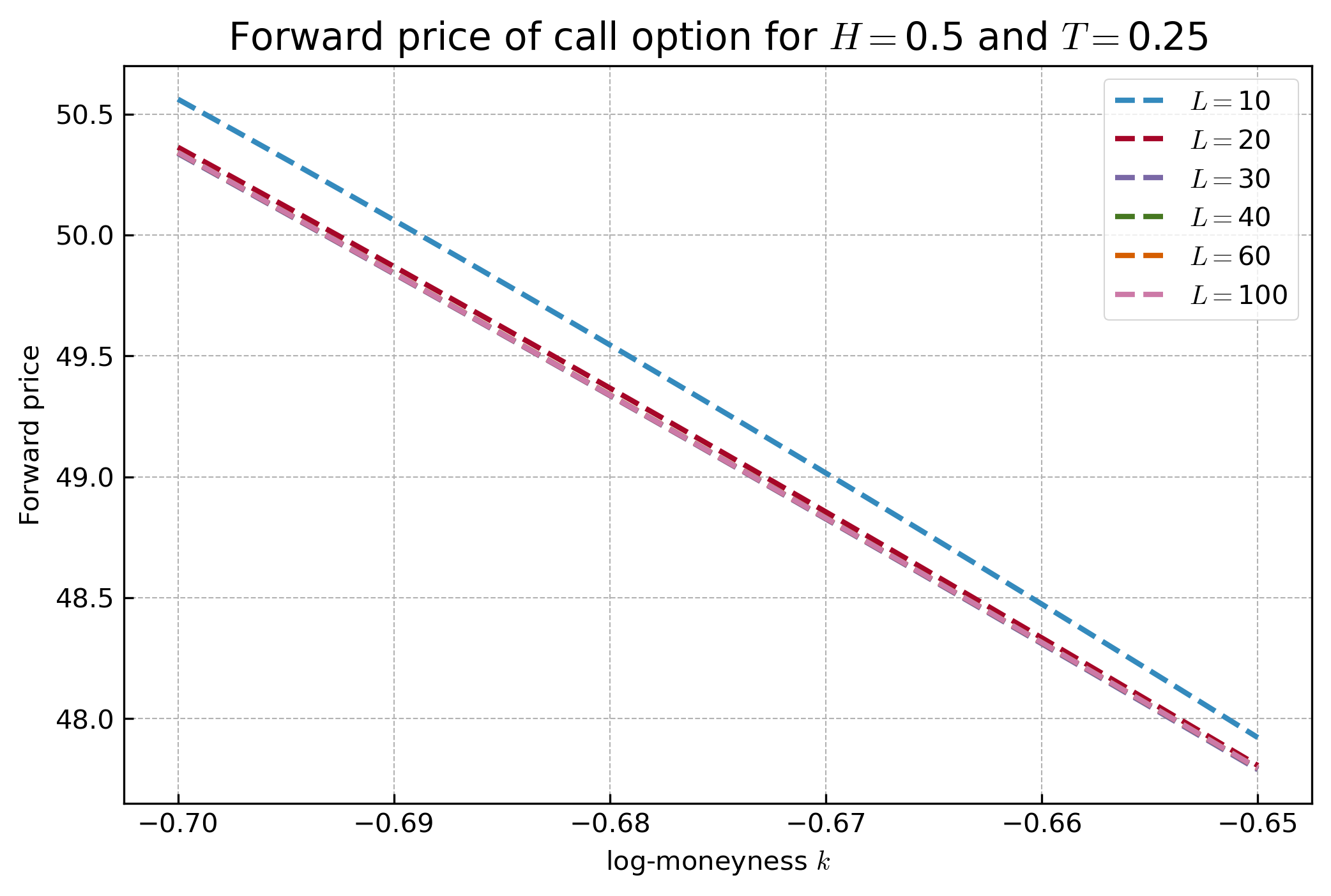} \includegraphics[width=0.4\textwidth]{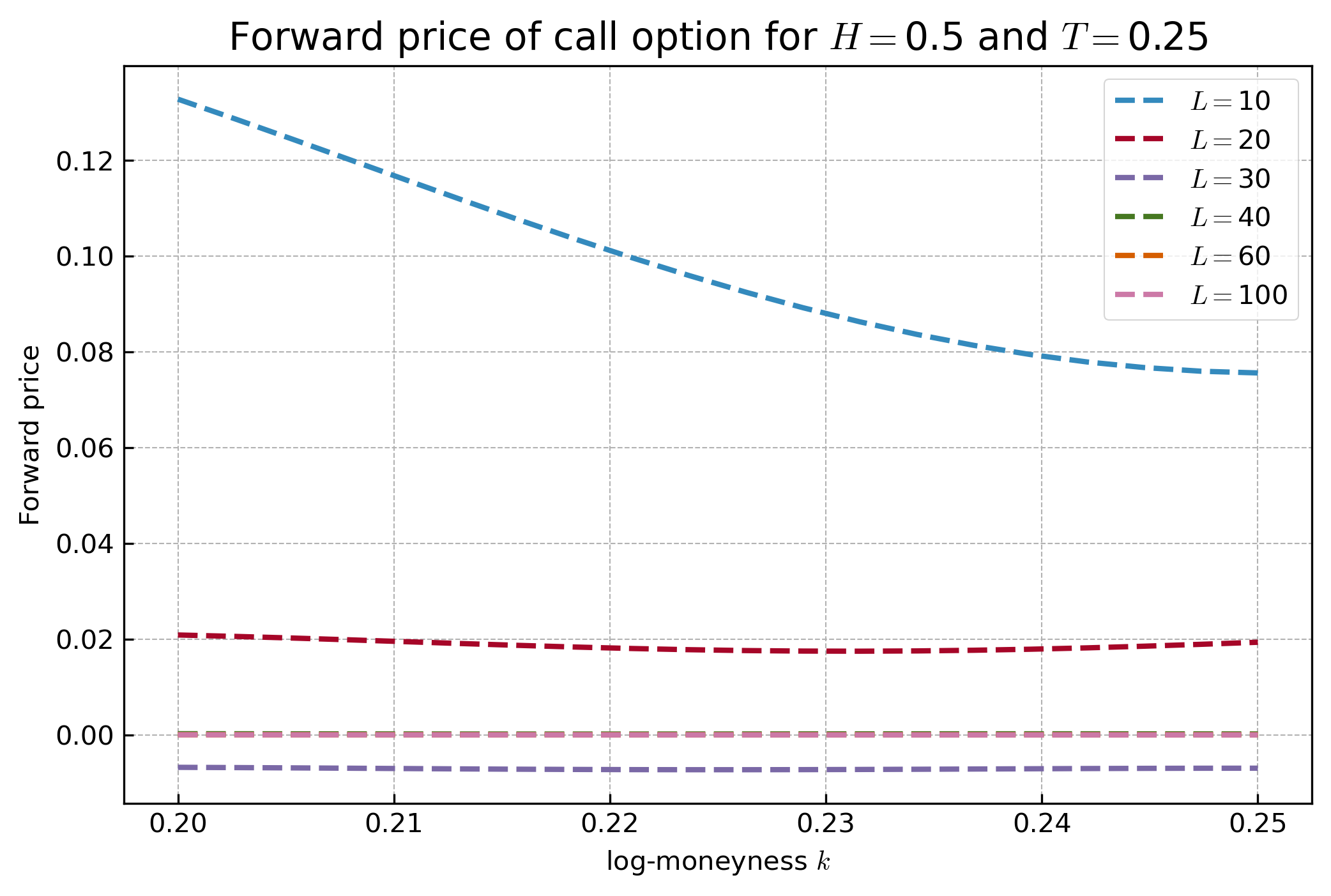}\caption{\protect\label{fig:GL_pricing_T_025}
Forward price of call option for $T=0.25$ using \cite{key-18} pricing formula with Gauss-Laguerre quadrature numerical integration by varying $L$ (number of weights). Hull-White and Stein-Stein model with parameters: $\kappa_r=-0.03$, $\eta_r=0.01$, $I_0^T=100$, ${\nu}_{0}=0.1$,
${\theta}_{\nu}=0.1,$ ${\eta}_{\nu}= 0.125$, $\kappa_\nu = 0$, ${\rho}_{I\nu}=-0.7,$ ${\rho}_{r\nu}=-0.25,$
${\rho}_{Ir}=-0.25$.}
\end{figure}

\begin{figure}[H]
\centering{}\includegraphics[width=0.4\textwidth]{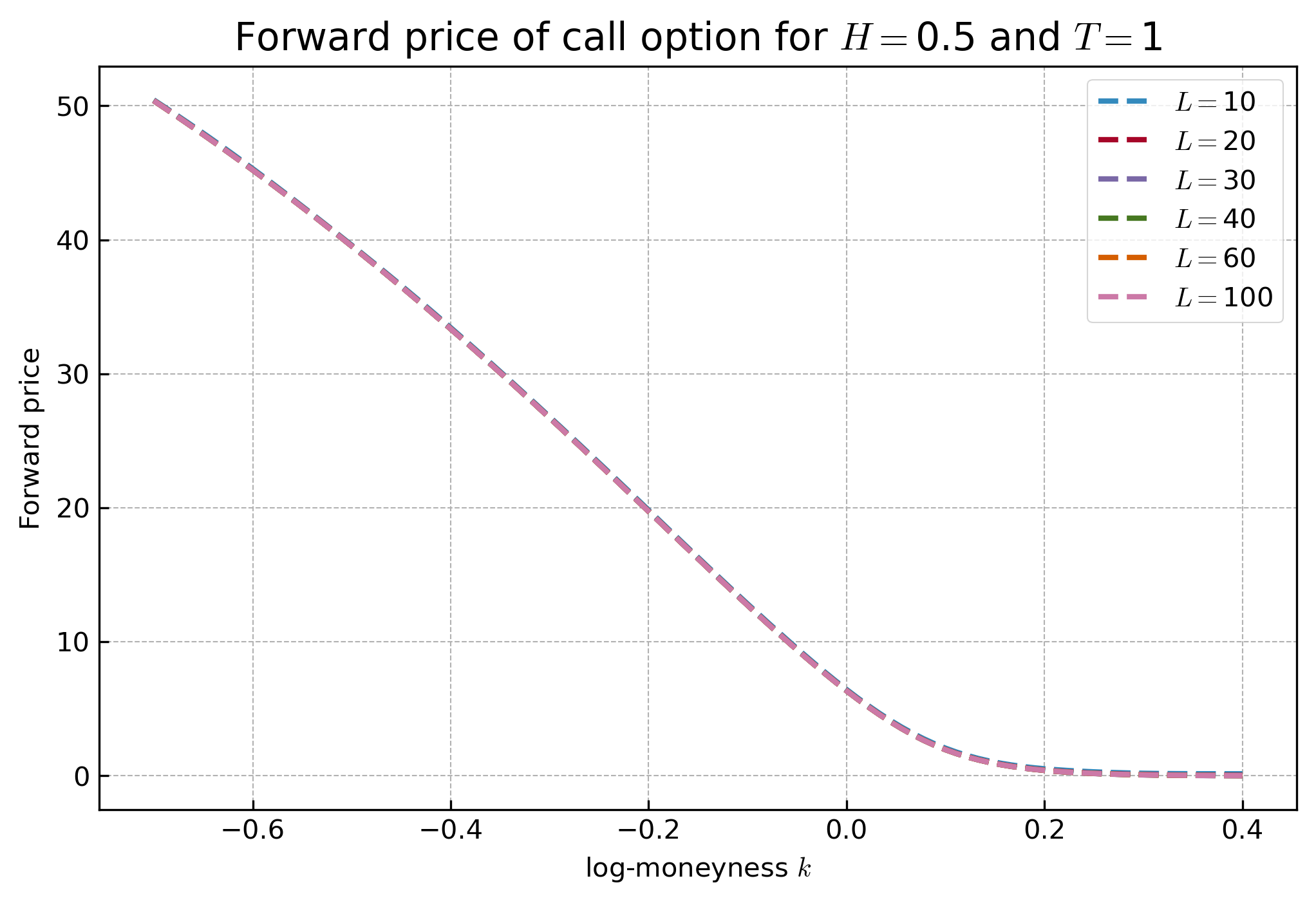} \\
\includegraphics[width=0.4\textwidth]{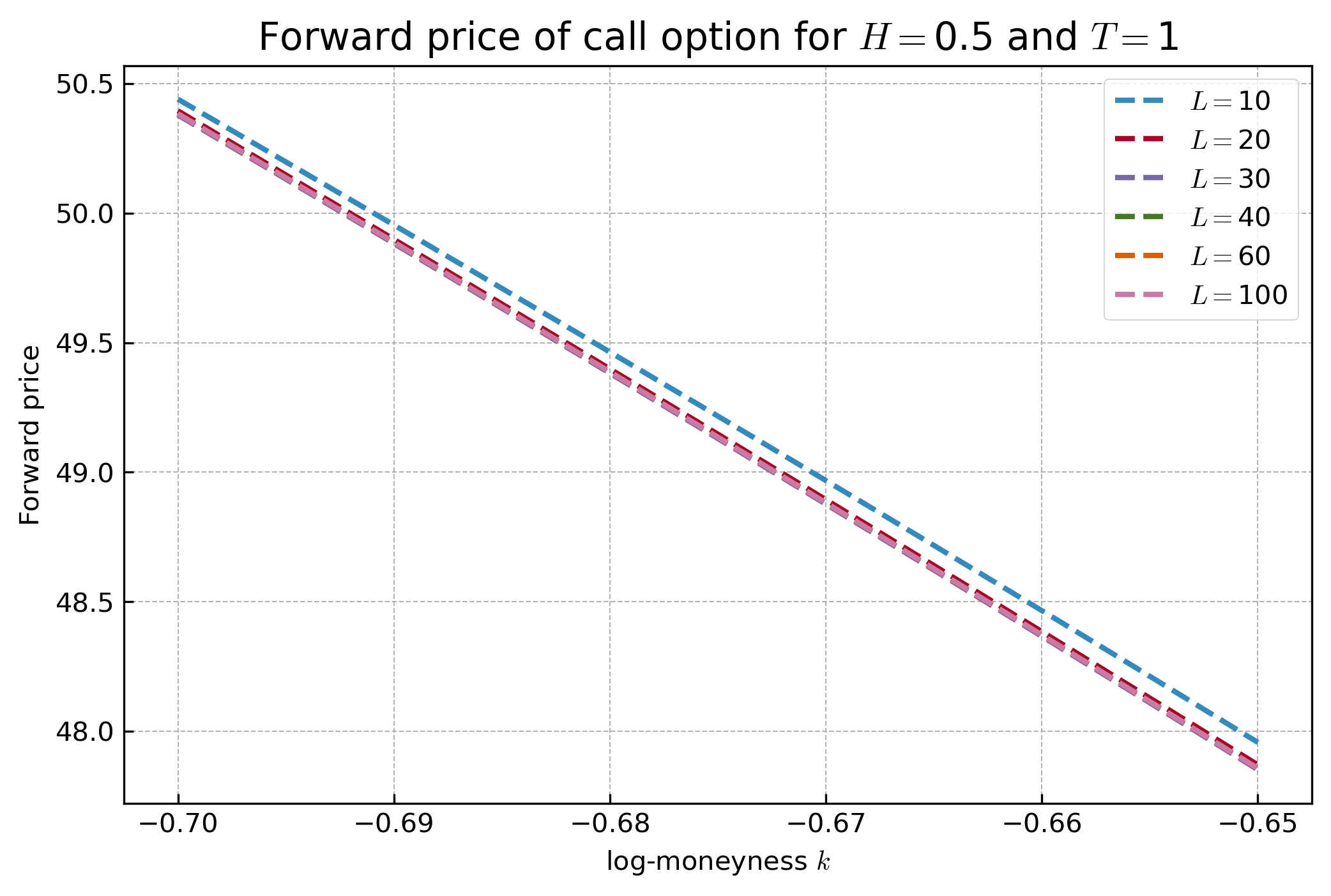} \includegraphics[width=0.4\textwidth]{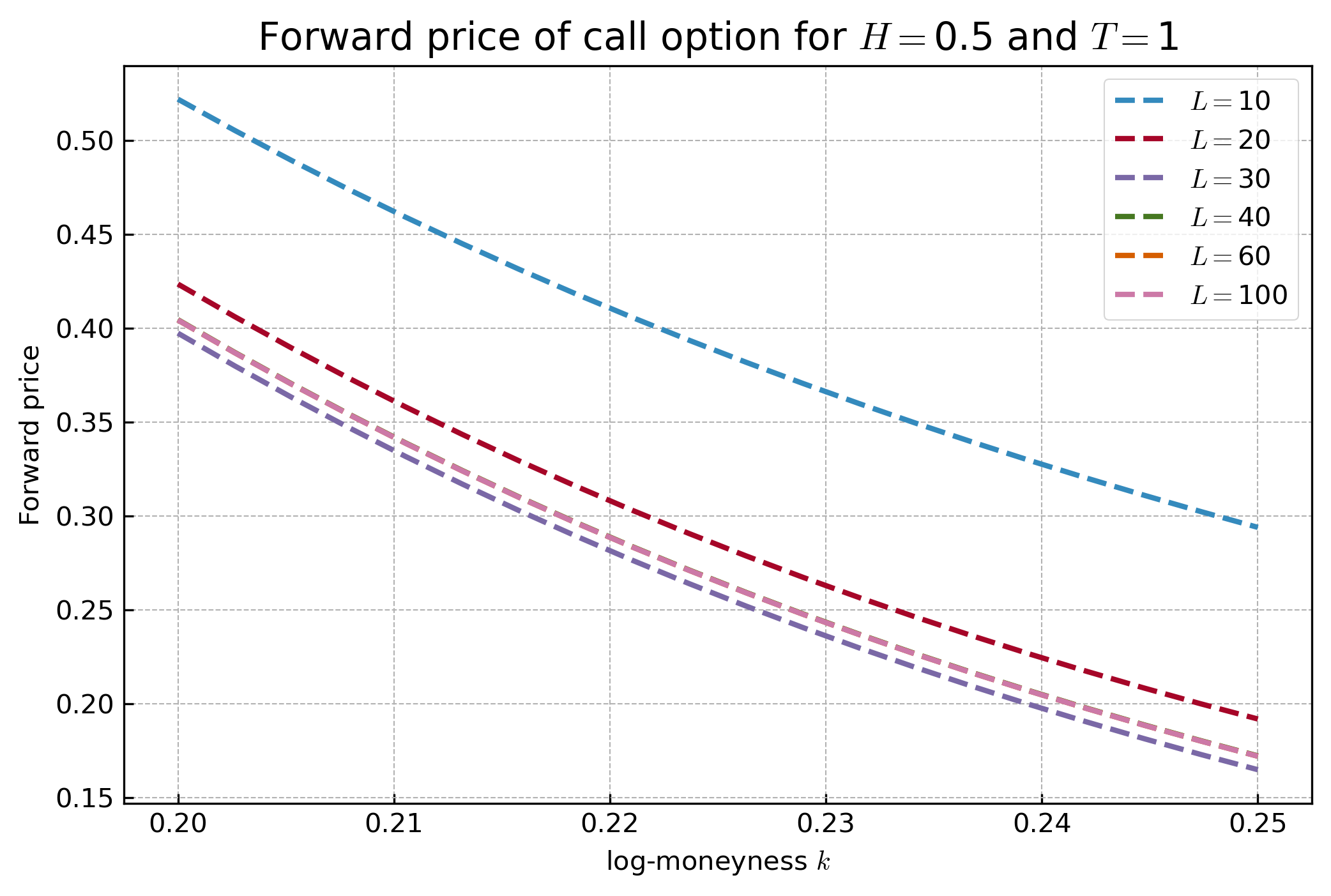}\caption{\protect\label{fig:GL_pricing_T_1}
Forward price of call option for $T=1$ using \cite{key-18} pricing formula with Gauss-Laguerre quadrature numerical integration by varying $L$ (number of weights). Hull-White and Stein-Stein model with parameters: $\kappa_r=-0.03$, $\eta_r=0.01$, $I_0^T=100$, ${\nu}_{0}=0.1$,
${\theta}_{\nu}=0.1,$ ${\eta}_{\nu}= 0.125$, $\kappa_\nu = 0$, ${\rho}_{I\nu}=-0.7,$ ${\rho}_{r\nu}=-0.25,$
${\rho}_{Ir}=-0.25$.}
\end{figure}

\begin{figure}[H]
\centering{}\includegraphics[width=0.8\textwidth]{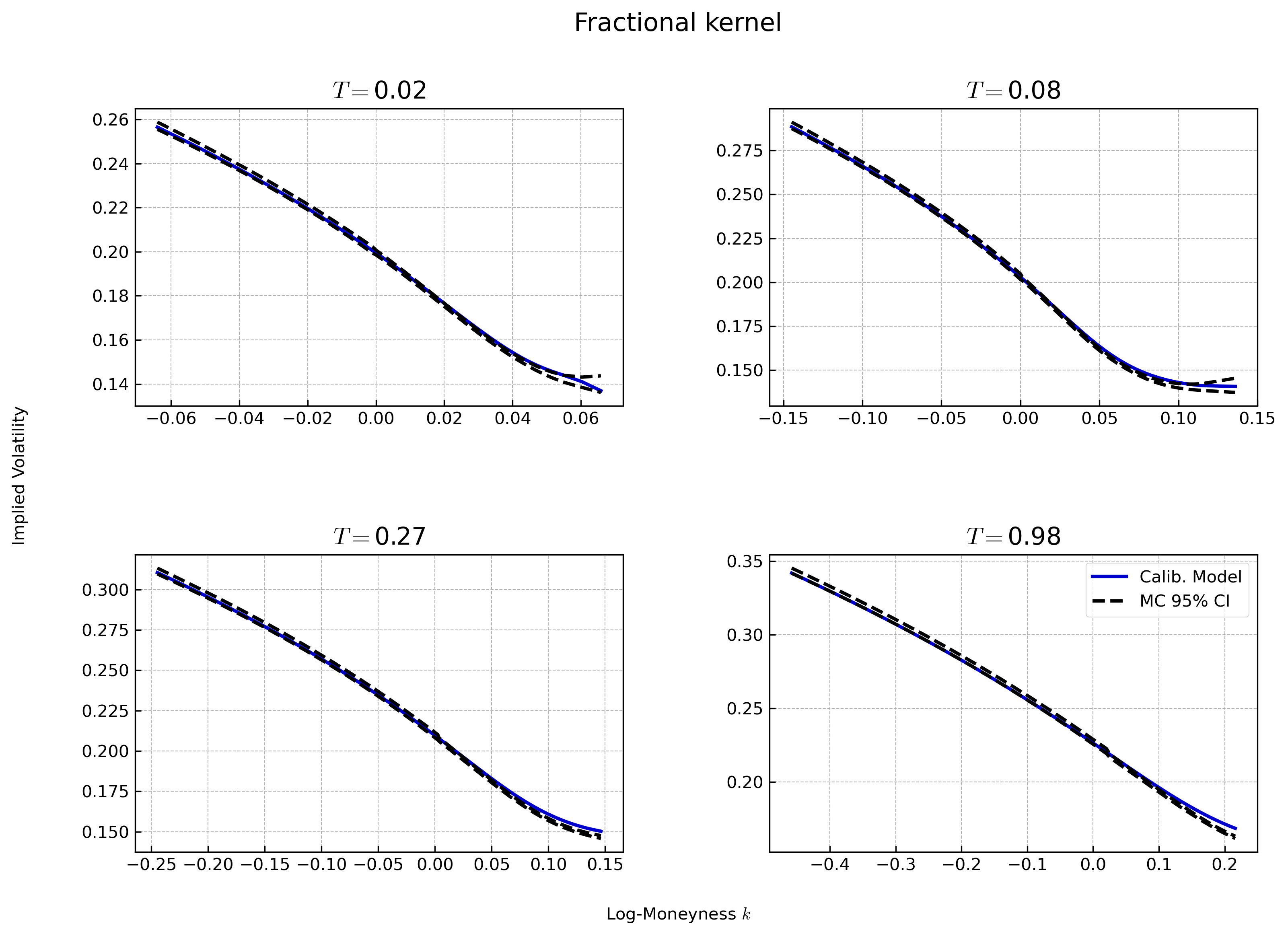}\caption{\protect\label{fig:IV_calib_vs_MC_fract}
Implied volatility for fractional kernel: operator discretization approximation with $N=40$ vs Monte Carlo confidence intervals. Calibrated parameters: $\hat{\nu}_{0}=0.1964$,
$\hat{\theta}_{\nu}=-0.0248,$ $\hat{\eta}_{\nu}=0.2123,$ $\hat{\rho}_{I\nu}=-0.7981,$
$\hat{\rho}_{Ir}=-0.5971$ and $\hat{H}_{\nu}=0.2992$. }
\end{figure}

\begin{figure}[H]
\centering{}\includegraphics[width=0.8\textwidth]{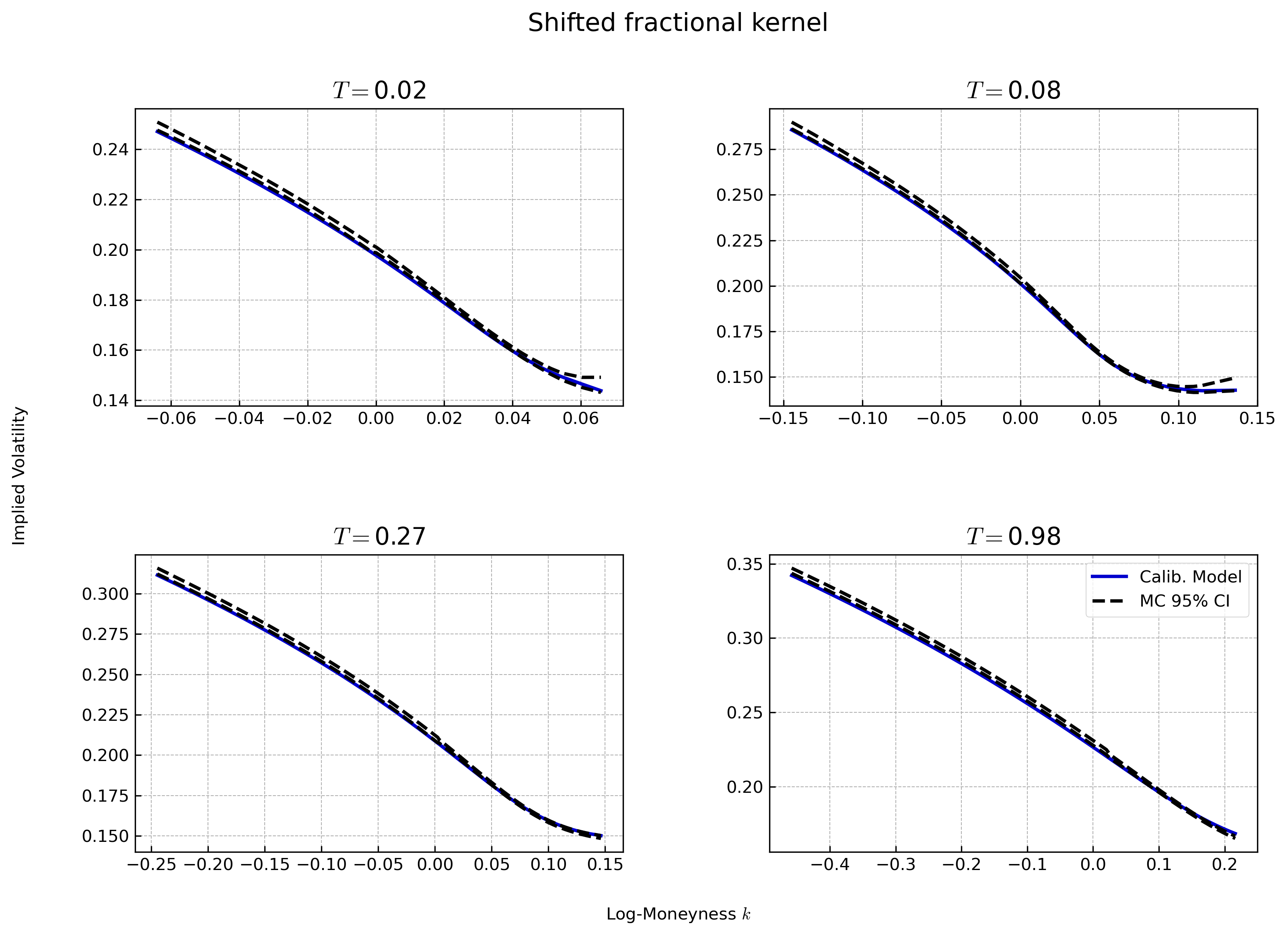}\caption{\protect\label{fig:IV_calib_vs_MC_PD}
Implied volatility for shifted fractional kernel: operator discretization approximation with $N=40$ vs Monte Carlo confidence intervals. Calibrated parameters: $\hat{\nu}_{0}=0.1978$,
$\hat{\theta}_{\nu}=-0.0259,$ $\hat{\eta}_{\nu}= 0.2164,$ $\hat{\rho}_{I\nu}=-0.7868,$
$\hat{\rho}_{Ir}=-0.6107$ and $\hat{H}_{\nu}=0.2273$.}
\end{figure}

\section*{Acknowledgments}
Eduardo Abi Jaber is grateful for the financial support from the Chaires FiME-FDD, Financial Risks, Deep Finance \& Statistics and Machine Learning and systematic methods in finance at Ecole Polytechnique.
Edouard Motte was financially supported by the Fonds de la Recherche
Scientifique (F.R.S. - FNRS) through a FRIA grant. The authors thank the anonymous referees for their relevant comments and suggestions, which helped to improve the quality of the paper. 

\section*{Disclosure of interest}
No potential conflict of interest was reported by the authors.

\bibliographystyle{plainnat}
\bibliography{bibl}

\end{document}